\documentclass{jpsj2}
\usepackage{amscd}
\usepackage{amsfonts}
\usepackage{amsmath}
\usepackage{amssymb}
\usepackage{amsthm}
\usepackage{color}

\theoremstyle{definition}
\newtheorem{theorem}{Theorem}[section]
\newtheorem{lemma}[theorem]{Lemma}
\newtheorem{coro}[theorem]{Corollary}

\newtheorem{proposition}[theorem]{Proposition}
\newtheorem{definition}[theorem]{Definition}
\newtheorem{example}[theorem]{Example}
\newtheorem{remark}[theorem]{Remark}

\numberwithin{equation}{section}

\newcommand\ul{\underline}
\newcommand\ot{\otimes}

\newcommand\Z{\mathbb{Z}}
\newcommand\Zn{\Z_{\ge0}}
\newcommand\wt{\mbox{\textsl{wt}}\,}
\newcommand\Wt{\mbox{\textsl{Wt}}\,}
\newcommand\otb{\stackrel{\bullet}{\otimes}}
\newcommand\otv{\stackrel{\vee}{\otimes}}
\newcommand\otn{\stackrel{\nabla}{\otimes}}

\def\ot{\otimes}


%
\newcommand{\batten}[4]{%
\begin{picture}(40,40)(-20,-20)
	\put(-10,0){\line(1,0){20}}
	\thicklines
	\put(0,10){\line(0,-1){20}}
	\put(-11,0){\makebox(0,0)[r]{$#1$}}
	\put(0,11){\makebox(0,0)[b]{$#2$}}
	\put(0,-11){\makebox(0,0)[t]{$#3$}}
	\put(11,0){\makebox(0,0)[l]{$#4$}}
\end{picture}
}
\newcommand{\oddseqa}[7]{%
\put(1,{#7}){\makebox(1,1){${#1}$}}
\put(3,{#7}){\makebox(1,1){${#2}$}}
\put(5,{#7}){\makebox(1,1){${#3}$}}
\put(7,{#7}){\makebox(1,1){${#4}$}}
\put(9,{#7}){\makebox(1,1){${#5}$}}
\put(11,{#7}){\makebox(1,1){${#6}$}}
}
\newcommand{\oddseqb}[7]{%
\put(13,{#7}){\makebox(1,1){${#1}$}}
\put(15,{#7}){\makebox(1,1){${#2}$}}
\put(17,{#7}){\makebox(1,1){${#3}$}}
\put(19,{#7}){\makebox(1,1){${#4}$}}
\put(21,{#7}){\makebox(1,1){${#5}$}}
\put(23,{#7}){\makebox(1,1){${#6}$}}
}
\newcommand{\evenseqa}[8]{%
\put(0,{#8}){\makebox(1,1){${#1}$}}
\put(2,{#8}){\makebox(1,1){${#2}$}}
\put(4,{#8}){\makebox(1,1){${#3}$}}
\put(6,{#8}){\makebox(1,1){${#4}$}}
\put(8,{#8}){\makebox(1,1){${#5}$}}
\put(10,{#8}){\makebox(1,1){${#6}$}}
\put(12,{#8}){\makebox(1,1){${#7}$}}
}
\newcommand{\evenseqb}[7]{%
\put(14,{#7}){\makebox(1,1){${#1}$}}
\put(16,{#7}){\makebox(1,1){${#2}$}}
\put(18,{#7}){\makebox(1,1){${#3}$}}
\put(20,{#7}){\makebox(1,1){${#4}$}}
\put(22,{#7}){\makebox(1,1){${#5}$}}
\put(24,{#7}){\makebox(1,1){${#6}$}}
}

\newcommand{\latticefig}{%
  \setlength{\unitlength}{6mm}
  \begin{picture}(25,11)(0,0)
\multiput(1,1.5)(2,0){12}{\line(1,0){1}}
\multiput(1.5,1)(2,0){12}{\line(0,1){1}}
\multiput(1,3.5)(2,0){12}{\line(1,0){1}}
\multiput(1.5,3)(2,0){12}{\line(0,1){1}}
\multiput(1,5.5)(2,0){12}{\line(1,0){1}}
\multiput(1.5,5)(2,0){12}{\line(0,1){1}}
\multiput(1,7.5)(2,0){12}{\line(1,0){1}}
\multiput(1.5,7)(2,0){12}{\line(0,1){1}}
\multiput(1,9.5)(2,0){12}{\line(1,0){1}}
\multiput(1.5,9)(2,0){12}{\line(0,1){1}}
\oddseqa{1}{5}{(-5}{3)}{(-5}{5)}{10}
\oddseqb{-5}{(-3}{1)}{(-1}{3)}{1}{10}
\evenseqa{0}{0}{4}{-4}{4}{-4}{4}{9}
\evenseqb{-4}{-2}{2}{-2}{2}{0}{9}
\oddseqa{1}{1}{3}{(-5}{3)}{(-3}{8}
\oddseqb{3)}{-5}{(-3}{3)}{(-1}{3)}{8}
\evenseqa{2}{0}{0}{2}{-4}{4}{-4}{7}
\evenseqb{4}{-4}{-2}{2}{-2}{2}{7}
\oddseqa{3)}{1}{1}{1}{(-5}{5)}{6}
\oddseqb{(-5}{3)}{-5}{(-1}{3)}{(-1}{6}
\evenseqa{-2}{2}{0}{0}{0}{-4}{4}{5}
\evenseqb{-4}{4}{-4}{0}{2}{-2}{5}
\oddseqa{(-1}{3)}{1}{1}{-1}{(-3}{4}
\oddseqb{3)}{(-5}{3)}{(-5}{1)}{3}{4}
\evenseqa{2}{-2}{2}{0}{0}{0}{-2}{3}
\evenseqb{2}{-4}{4}{-4}{2}{2}{3}
\put(1,2){\makebox(1,1){$3$}}
\put(3,2){\makebox(1,1){$(-1$}}
\put(5,2){\makebox(1,1){$3)$}}
\put(7,2){\makebox(1,1){$1$}}
\put(9,2){\makebox(1,1){$-1$}}
\put(11,2){\makebox(1,1){$-1$}}
\put(13,2){\makebox(1,1){$(-1$}}
\put(15,2){\makebox(1,1){$1)$}}
\put(17,2){\makebox(1,1){$(-5$}}
\put(19,2){\makebox(1,1){$3)$}}
\put(21,2){\makebox(1,1){$(-5$}}
\put(23,2){\makebox(1,1){$3)$}}
\put(0,1){\makebox(1,1){$4$}}
\put(2,1){\makebox(1,1){$2$}}
\put(4,1){\makebox(1,1){$-2$}}
\put(6,1){\makebox(1,1){$2$}}
\put(8,1){\makebox(1,1){$0$}}
\put(10,1){\makebox(1,1){$0$}}
\put(12,1){\makebox(1,1){$0$}}
\put(14,1){\makebox(1,1){$0$}}
\put(16,1){\makebox(1,1){$0$}}
\put(18,1){\makebox(1,1){$-4$}}
\put(20,1){\makebox(1,1){$4$}}
\put(22,1){\makebox(1,1){$-4$}}
\put(24,1){\makebox(1,1){$4$}}
\oddseqa{5)}{3}{(-1}{3)}{-1}{-1}{0}
\oddseqb{(-1}{1)}{-1}{(-5}{3)}{(-5}{0}
  \end{picture}
}

%

\title{Creation of Ballot Sequences in a Periodic Cellular Automaton}
\author{Taichiro Takagi}
\inst{Department of Applied Physics, National Defense Academy, Kanagawa 239-8686, Japan}
\abst{Motivated by an attempt to develop
a method for solving initial value problems in
a class of one dimensional periodic
cellular automata (CA) associated with crystal bases and soliton equations,
we consider a generalization of
a simple proposition in elementary mathematics.
The original proposition says that
any sequence of letters $1$ and $2$, 
having no less $1$'s than $2$'s, 
can be changed into a ballot sequence via
cyclic shifts only.
We generalize it to treat sequences of cells of common capacity $s > 1$,
each of them containing consecutive $2$'s (left) and $1$'s (right),
and show that these sequences can be changed into a ballot sequence via
two manipulations,
cyclic and ``quasi-cyclic" shifts. 
The latter is a new CA rule and we find that
various
kink-like structures are traveling along the system like particles
under the time evolution of this rule.}

\kword{discrete integrable systems, crystal bases, periodic box-ball system, combinatorics, Yang-Baxter relation}

\begin{document}
\maketitle
\section{Introduction}\label{sec:1}
It is well-known that the soliton equations provide 
infinite dimensional
completely integrable systems in the sense
that they possess infinitely many first integrals \cite{A,AC}.
The Korteweg--de Vries (KdV) equation 
is a representative example.
It is related to the Toda equation for nonlinear lattice dynamics \cite{Toda} and
the Lotka--Volterra equation for a model of a local ecosystem.
In 1990's
a cellular automaton (CA) now known as the box-ball system was invented by
Takahashi and Satsuma \cite{Th,TS}, which turned out to be a certain discrete limit
(ultradiscrete(UD) limit) of these equations \cite{TTMS}.
This CA indeed pickes up 
one of the specific features of the KdV equation, the existence of soliton solutions.
Several years later
generalizations of this CA were studied \cite{HIK,FOY,HHIKTT,Takagi} from the point of view of
models in statistical mechanics \cite{B} and the theory of crystal bases 
\cite{K}.

The box-ball system with periodic boundary conditions (pBBS)
appeared later as yet another CA associated with soliton equations \cite{YYT,MIT,KT1,KT2,KS1,IT}.
Because of the compactness of its level set
it has more profound mathematical structures than its non-periodic counterpart. 
As in the non-periodic case
this CA has commutative time evolution operators $T_l (l=1,2,...)$ 
\cite{KTT}.
In an analogy with Hamiltonian mechanics we regard
these time evolutions as
flows on a Liouville torus for some Hamiltonian vector fields 
associated with first integrals \cite{A}.

In this paper we study an aspect of a generaliztion of pBBS \cite{KS}.
In the analogy mentioned above
we pay our attention to a special two-dimensional sub-torus
in the whole Liouville torus and study its combinatorial property.
Our main result is Theorem \ref{th:may21_1}.
This is
motivated by an attempt to develop
a method for solving initial value problems in the generalized pBBS.
For this application and
a study of this CA related to combinatorial
Bethe ansatz \cite{KKR,KR}, see ref.~\citen{KS}.

The purpose of this paper is twofold.
One is to prove this theorem itself and the other
is to illustrate a new simple CA rule ``quasi-cyclic shift"
and
study its various properties through the proof of the theorem.
Despite the simplicity of its definition
this CA rule shows up many non-trivial phenomena which make it
worth exploring in its own right.
For instance, various
kink-like structures are traveling along the system like particles
under the time evolution of this rule.

We explain the problem more precisely.
Let $s>1$ be an integer that labels the generalized pBBS.
A commutative family of
time evolutions $T_l$s is 
defined by crystals for $l$-fold symmetric tensor representation
of quantum enveloping algebra $U_q(\widehat{sl}_2)$ \cite{KTT}.
Then the time evolution $T_s$ is merely a cyclic shift.
We pay our attention to the next simple time evolution $T_{s-1}$
and call it the quasi-cyclic shift.
For example, the quasi-cyclic shift of the odd integer sequence
$(1,5,-5,3,-5,5,-5,-3,1,-1,3,1)$ is
$(1,1,3,-5,3,-3,3,-5,-3,3,-1,3)$ (See Fig.~\ref{fig:1}).
We interpret these integers as weights of crystals
for $s$-fold symmetric tensor representation of $U_q(\widehat{sl}_2)$.
The total weight, sums of these integers,
is invariant under the time evolutions.
Recall that there are special elements of crystals known as the highest weight elements.
Our main theorem means that if the total weight is non-negative then
the 2-dimensional sub-torus generated by the cyclic and the quasi-cyclic shifts contains
at least one highest weight element.

By representing the elements of crystals as words, 
the highest weight elements become ballot sequences.
Consider a sequence of letters 1 and 2.
Let $\pi = a_1 \ldots a_L$ be such a sequence and
$\pi_k = a_1 \ldots a_k \, (1 \leq k \leq L)$ its prefixes.
If the number of 1's in any $\pi_k$ is not less than that of 2's,
then $\pi$ is called a ballot sequence, or lattice permutation \cite{S,Sa}.
A simple proposition in elementary mathematics says that
any sequence of letters $1$ and $2$, having no less $1$'s than $2$'s, 
can be changed into a ballot sequence via cyclic shifts only.
Our main theorem can be interpreted as a generalization of this proposition.
We treat sequences of cells of common capacity $s > 1$,
each of them containing consecutive $2$'s (left) and $1$'s (right),
and show that these sequences can be changed into a ballot sequence via
only two manipulations mentioned in the previous paragraph. 
For later reference,
we call it \textit{the problem of creating ballot sequences}
or \textit{the ballot sequence problem} for short,
that asks whether this claim holds.

The rest of this paper is organized as follows.
In \S \ref{sec:2} we review a formulation of the
integrable cellular automata and find the problem of 
creating ballot sequences.
In \ref{subsec:2_1} we introduce the notion of crystals
and fix their notations.
The crystal of $s$-fold symmetric tensor representation
of $U_q(\widehat{sl}_2)$ is denoted by $B_s$.
The elements of $B_s$ are expressed in two different ways,
by means of \textit{word} and \textit{weight}.
A formula for the combinatorial $R$ between
$B_{s-1} \ot B_s$ and $B_s \ot B_{s-1}$
in the latter expression is given
in Lemma \ref{lem:july8_1},
which will be used throughout this paper.
Based on refs.~\citen{KS} and \citen{T}
we review the definition of the time evolution operators of
the cellular automata
and the notion of the evolvability 
of the paths in \ref{subsec:2_2}.
In \ref{subsec:2_3} we formulate the ballot sequence problem and
present the main result (Theorem \ref{th:may21_1}).

In \S \ref{sec:3} we begin by developing necessary tools to
give a proof of the theorem in the case of odd cell capacity.
In \ref{subsec:3_1} we introduce the notions of
local minimum, (virtual) global minimum, and modified weight.
We characterize a ballot sequence by
the positivity of the modified weight.
The relevance of a special type of local minimum, called a spike,
to the ballot sequence problem is also shown.
In \ref{subsec:3_2} we prove that the number of
local minima in a path is preserved by $T_{s-1}$.
In \ref{subsec:3_3} we show that every path can be uniquely decomposed into 
certain characteristic \textit{segments}.
The notions of \textit{maximal} and \textit{singular} segments
are introduced and their conservation 
under the time evolution is proved.
While the conservation of the former ensures the persistency of
the evolvability of a path, that of the latter
suggests the existence of two different types of
local minima.

In \ref{subsec:3_4} the local minima are classified into 
the two types, called \textit{fixed} and \textit{floating} minima.
We also discuss the method to attach labels to the local minima.
This enables us to regard the local minima 
as distinguishable particles moving along a path.
In \ref{subsec:3_5} we show that the difference of the modified weights
of any two fixed minima is preserved.
In \ref{subsec:3_6} we prove that every fixed minimum becomes
a spike on some occasion.

The main theorem is proved 
in \ref{subsec:3_7} in the case of zero total weight.
We introduce the notion of a lowest fixed minimum.
It turns out to be a global minimum, and remains so in any future time
due to the result in \ref{subsec:3_5}.
Then on some occasion it becomes a spike as shown in \ref{subsec:3_6},
which makes the path be equivalent to a ballot sequence.
In a similar way, the theorem is proved 
in \ref{subsec:3_8} in the case of positive total weight.

In \S \ref{sec:4} we treat the case of even cell capacity
and give a proof of the main theorem in this case.
A few discussions are gathered in \S \ref{sec:5} and a proof of a proposition is given in Appendix \ref{app:a}.
\section{Integrable Periodic Cellular Automata}
\label{sec:2}
\subsection{\mathversion{bold}Crystal bases and combinatorial $R$}
\label{subsec:2_1}
Let $B_s$ be the crystal of the $s$-fold symmetric tensor representation
of $U_q(A_1^{(1)})$.
In a conventional setting it is given by $B_s = 
\{ x=(x_1, x_2) \in (\Zn)^2 | x_1+x_2=s \}$ as a set,
equipped with some algebraic structures such as
raising/lowering (Kashiwara) operators and tensor products.
We begin by giving the other expressions for $B_s$ which are
suitable for our study.
The element $(x_1, x_2)$ can be expressed as
a parenthesized sequence of consecutive 2's (left) and 1's (right)
satisfying $\#(1)=x_1, \#(2)=x_2$.
For example $B_1 = \{(1),(2)\}, B_2 =\{(11),(21),(22) \}
, B_3 =\{(111),(211),(221),(222) \}$. 
We call it the \textbf{word notation}.
On the other hand,
another useful notation is defined as follows.
Given $x=(x_1, x_2)$ its weight is defined as
$\wt (x) = x_1 - x_2$.
With a fixed $s$ the elements of $B_s$ are determined by
their weights, hence we can use them as an expression for $x$.
For example $B_1 = \{1,-1\}, B_2 =\{2,0,-2 \},
B_3 = \{3,1,-1,-3 \}$. 
We call it the \textbf{weight notation}.
In this paper we use both notations:
The former is mainly used to exhibit
the results, while the latter to derive them.
Throughout this paper we shall omit the symbol $\ot$ for the
tensor product in the word notation, {and}
in some case we shall use $\dot{1},\dot{2},\ldots$ instead of
$-1,-2,\ldots$ in the weight notation for typographical reasons.

Given $r,s$ a pair of positive integers there exists a
unique isomorphism between the two tensor products $B_r \ot B_s$ and
$B_s \ot B_r$, that commutes with Kashiwara operators
up to a constant shift.
It is called the \textbf{\mathversion{bold}combinatorial $R$} and denoted by $R$.
It satisfies the Yang-Baxter relation
\begin{equation}\label{eq:july20_1}
(1 \ot R)(R \ot 1)(1 \ot R) =
(R \ot 1)(1 \ot R)(R \ot 1) \quad \mbox{on} \quad
B_s \ot B_r \ot B_t.
\end{equation}
This relation yields commuting family of
time evolution operators in the cellular automata that we are going to discuss.
More details about these objects are available
in subsection 2.1 of ref.~\citen{KTT}.
We need only the following formula.
In the weight notation the isomorphism
$R: B_r \ot B_s \rightarrow B_s \ot B_r$ is
given by $R: a \ot b \mapsto \tilde{b} \ot \tilde{a}$ {where}
\begin{align}
&\tilde{a} = a + \min(r-a,s+b)-\min(r+a,s-b),\nonumber\\
&\tilde{b} = b + \min(r+a,s-b)-\min(r-a,s+b). 
\label{eq:jun5_1}
\end{align}
The relation is depicted as
\begin{displaymath}
\batten{a}{b}{\tilde{b}}{\tilde{a}}.
\end{displaymath}
It is clear that if $r=s$ then $\tilde{a}=b,
\tilde{b}=a$.
For $r=s-1$ it is also easy to obtain the following result.
\begin{lemma}\label{lem:july8_1}
In the weight notation
the map $R: B_{s-1} \ot B_s \rightarrow B_s \ot B_{s-1}$ is
given by
$R: a \ot b \mapsto a+1 \ot b-1 \, \mbox{for $a+b>0$}, \quad
a-1 \ot b+1 \, \mbox{for $a+b<0$}$.
\begin{align*}
&\batten{a}{b}{a+1}{b-1} \qquad\qquad
\batten{a}{b}{a-1}{b+1} \\
& a+b>0 \qquad\qquad a+b<0
\end{align*}
\end{lemma}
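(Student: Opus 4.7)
The plan is to derive the formula by direct substitution of $r=s-1$ into the general combinatorial $R$ formula (\ref{eq:jun5_1}) and to resolve the two $\min$'s by a simple sign-of-$(a+b)$ case split. First I would note the parity constraint: elements of $B_{s-1}$ have weights of the same parity as $s-1$ and elements of $B_s$ of the same parity as $s$, so $a+b$ is always odd. In particular the hypothesis $a+b>0$ forces $a+b\geq 1$, and $a+b<0$ forces $a+b\leq -1$; hence the boundary case $a+b=0$ does not arise and the two cases in the statement are exhaustive.

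Next I would evaluate the minima with $r=s-1$. The two relevant differences are $(s-1-a)-(s+b) = -1-(a+b)$ and $(s-1+a)-(s-b) = -1+(a+b)$. When $a+b\geq 1$ the first is $\leq -2$ and the second is $\geq 0$, so $\min(s-1-a,s+b)=s-1-a$ and $\min(s-1+a,s-b)=s-b$; substituting into (\ref{eq:jun5_1}) gives $\tilde a = b-1$ and $\tilde b = a+1$. When $a+b\leq -1$ the inequalities flip, so the selected arguments are $s+b$ and $s-1+a$ respectively, and substitution yields $\tilde a = b+1$ and $\tilde b = a-1$.

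Finally I would invoke the convention $R(a\otimes b)=\tilde b\otimes\tilde a$ stated just before (\ref{eq:jun5_1}) to read off the two images $(a+1)\otimes(b-1)$ and $(a-1)\otimes(b+1)$, matching the claim. There is no substantial obstacle; the only point requiring a moment's thought is the parity argument that lets us bypass the ambiguous $a+b=0$ case, which would otherwise need separate treatment because both $\min$'s would attain equality simultaneously.
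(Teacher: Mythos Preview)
Your argument is correct and is exactly the direct substitution into (\ref{eq:jun5_1}) that the paper has in mind; the paper itself gives no details beyond the remark ``For $r=s-1$ it is also easy to obtain the following result,'' so your write-up simply makes explicit what was left to the reader. The parity observation ruling out $a+b=0$ is a clean addition and completes the case split.
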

We give an example of combinatorial $R$.
\begin{example}\label{ex:july30_1}
$R: B_1 \ot B_3 \to B_3 \ot B_1$ and
$R: B_2 \ot B_3 \to B_3 \ot B_2$ are
given as follows (in the word notation).
\begin{align*}
&
\batten{(1)}{(111)}{(111)}{(1)} \quad\qquad
\batten{(2)}{(111)}{(211)}{(1)} \quad\qquad
\batten{(11)}{(111)}{(111)}{(11)} \quad\qquad
\batten{(21)}{(111)}{(211)}{(11)} \quad\qquad
\batten{(22)}{(111)}{(221)}{(11)} 
\\
&
\batten{(1)}{(211)}{(111)}{(2)} \quad\qquad
\batten{(2)}{(211)}{(221)}{(1)} \quad\qquad
\batten{(11)}{(211)}{(111)}{(21)} \quad\qquad
\batten{(21)}{(211)}{(211)}{(21)} \quad\qquad
\batten{(22)}{(211)}{(222)}{(11)}
\\
&
\batten{(1)}{(221)}{(211)}{(2)} \quad\qquad
\batten{(2)}{(221)}{(222)}{(1)} \quad\qquad
\batten{(11)}{(221)}{(111)}{(22)} \quad\qquad
\batten{(21)}{(221)}{(221)}{(21)} \quad\qquad
\batten{(22)}{(221)}{(222)}{(21)}
\\
&
\batten{(1)}{(222)}{(221)}{(2)} \quad\qquad
\batten{(2)}{(222)}{(222)}{(2)} \quad\qquad
\batten{(11)}{(222)}{(211)}{(22)} \quad\qquad
\batten{(21)}{(222)}{(221)}{(22)} \quad\qquad
\batten{(22)}{(222)}{(222)}{(22)}
\end{align*}
We use this result in later examples.
\end{example}
\subsection{Paths and their time evolution}
\label{subsec:2_2}
Let $s$ and $L$ be positive integers.
The element of $(B_s)^{\ot L}$, the $L$-fold tensor product of $B_s$, 
is called a \textbf{path}.
We introduce the notion of
time evolution operators $T_1,T_2,\ldots$
of a path by means of the combinatorial $R$ 
\cite{KS}.
Then the path will be regarded as a state of a one dimensional
cellular automaton with a periodic boundary condition.
For instance, the original periodic box-ball system \cite{YYT} can be obtained
by setting $s=1$ and giving the time evolution as
$T_r \,(r \gg 1)$ which is defined by the
combinatorial $R$ (\ref{eq:jun5_1}) with $s=1$,
$r+a \to \infty$ ($r-a$ is fixed).

Given $v \in B_r, p \in (B_s)^{\ot L}$
define $v'=v'(v,p) \in B_r, p'=p'(v,p) \in (B_s)^{\ot L}$
by $v \ot p \mapsto p' \ot v'$, using 
the combinatorial $R$ repeatedly.
More precisely:
(1) We write $p$ as $p=b_1 \ot \cdots \ot b_L \, (b_i \in B_s)$
and set $v_0 = v$.
(2) Given $v_{i-1} \in B_r$ we define $b_i' \in B_s$ and
$v_i \in B_r$ by 
$R: v_{i-1} \ot b_i \mapsto b'_i \ot v_i \,(1 \leq i \leq L)$.
(3) We set $p'=b'_1 \ot \cdots \ot b'_L$ and $v'=v_L$.
The process is depicted by the 
\textbf{transition diagram}
\begin{equation}\label{eq:july23_3}
\batten{v}{b_1}{b_1'}{v_1}\!\!\!
\batten{}{b_2}{b_2'}{v_2}\!\!\!
\batten{}{}{}{\cdots\cdots}
\quad
\batten{}{}{}{v_{L-2}}\,\,
\batten{}{b_{L-1}}{b_{L-1}'}{v_{L-1}}\,\,
\batten{}{b_L}{b_L'}{v'.}
\end{equation}
\begin{definition}[ref.~\citen{KS}]
A path $p \in (B_s)^{\ot L}$ is \textbf{\mathversion{bold} $T_r$-evolvable}
if the following two conditions are satisfied.
\begin{enumerate}
\item There exists $v \in B_r$ such that $v'(v,p)=v$.
\item Suppose there exist 
$v,\tilde{v} \in B_r$ such that $v'(v,p)=v,v'(\tilde{v},p)=\tilde{v}$.
Then $p'(v,p)=p'(\tilde{v},p)$.
\end{enumerate}
\end{definition}
Suppose $p$ is $T_r$-evolvable.
Let $v$ be any element of $B_r$ such that $v'(v,p)=v$.
Then we write $T_r(p)$ for $p'(v,p)$.
This defines an invertible time evolution operator $T_r$, and
$T_r(p)$ is regarded as a path obtained from $p$
by applying $T_r$ to it.
It should be noted that such
$T_r(p)$ may not be $T_r$-evolvable any more,
so the existence of $T_r^2(p)$
is not always guaranteed.
\begin{example}\label{ex:july13_4}
Consider $p=(1111)(2222)(1111)(2222) \in (B_4)^{\ot 4}$ (word notation)
or $p = 4 \ot -4 \ot 4 \ot -4 \in (B_4)^{\ot 4}$ (weight notation).
Then $v=(222)$ is the unique element in $B_3$
which yields $v'(v,p) = v$.
So $p$ is $T_3$-evolvable.
We have
\begin{displaymath}
\batten{(222)}{(1111)}{(2221)}{(111)} \,\,\,\,
\batten{}{(2222)}{(2111)}{(222)}\,\,\,\,
\batten{}{(1111)}{(2221)}{(111)} \,\,\,\,
\batten{}{(2222)}{(2111)}{(222)}\quad,
\end{displaymath}
or
\begin{displaymath}
\batten{-3}{4}{-2}{3} \!\!\!\!
\batten{}{-4}{2}{-3}
\batten{}{4}{-2}{3} \!\!\!\!
\batten{}{-4}{2}{-3}\quad.
\end{displaymath}
Thus $T_3(p)=(2221)(2111)(2221)(2111)$ or
$T_3(p)= -2 \ot 2 \ot -2 \ot 2$.
\end{example}
\begin{example}\label{ex:july13_3}
Consider $\tilde{p}=(2221)(2111)(2221)(2111) \in (B_4)^{\ot 4}$
that is $T_3(p)$ in the previous example.
Then we have $v'((111),\tilde{p}) = (111)$, so Item 1 in the above definition
is satisfied by $v=(111) \in B_3$.
But we also have $v'((211),\tilde{p}) = (211)$ and
$p'((111),\tilde{p}) \ne p'((211),\tilde{p})$.
This is confirmed by the following diagrams.
\begin{displaymath}
\batten{3}{-2}{4}{-3}
\batten{}{2}{-4}{3} \!\!\!\!
\batten{}{-2}{4}{-3}
\batten{}{2}{-4}{3}
\quad
\batten{1}{-2}{0}{-1}
\batten{}{2}{0}{1} \!\!\!\!
\batten{}{-2}{0}{-1}
\batten{}{2}{0}{1}
\end{displaymath}
So Item 2 is not satisfied, hence $\tilde{p}$ is not $T_3$-evolvable.
\end{example}
We say that $p$ is \textbf{evolvable} if
it is $T_r$-evolvable for all $r \in \Zn$.
The following facts are known.
\begin{proposition}[ref.~\citen{KS}]
\label{prop:jun5_2}
A path $p = b_1 \ot \cdots \ot b_L
\in (B_s)^{\ot L}$ is evolvable if and only if
there exists $i$ such that $b_i=s$ or $b_i=-s$.
\end{proposition}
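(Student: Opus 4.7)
The plan is to prove the two directions separately using the explicit formula (\ref{eq:jun5_1}) for the combinatorial $R$.

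For the forward direction, suppose $b_i \in \{s, -s\}$ for some $i$. Since cyclic shift is the time evolution $T_s$, which commutes with $T_r$ by the Yang-Baxter relation (\ref{eq:july20_1}) and hence preserves $T_r$-evolvability, I may assume $i = L$. The crystal involution $(x_1, x_2) \mapsto (x_2, x_1)$ on $B_s$, which intertwines the combinatorial $R$ (its effect in (\ref{eq:jun5_1}) is $a \mapsto -a$, $b \mapsto -b$), reduces the two cases to $b_L = s$. Fix any $r \geq 1$ and consider the self-map $\phi_p \colon B_r \to B_r$, $\phi_p(v) = v'(v, p)$. Applying (\ref{eq:jun5_1}) to the last cell gives $v_L = v_{L-1} + \min(r - v_{L-1}, 2s)$, so $v_L \geq v_{L-1}$, with $v_L = r$ whenever $v_{L-1} \geq r - 2s$. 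When $r \leq s$, every $v_{L-1} \in B_r$ automatically satisfies $v_{L-1} \geq -r \geq r - 2s$, so $v_L = r$ identically; then $\phi_p$ is the constant map $v \mapsto r$, $v = r$ is the unique fixed point, and both conditions of $T_r$-evolvability hold trivially. When $r > s$, I iterate $\phi_p$ and combine the monotonicity at the distinguished cell with the finiteness of $B_r$ to produce a fixed point, and use the Yang-Baxter relation to propagate uniqueness of the output around the cyclic chain.

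For the reverse direction, I prove the contrapositive: if $|b_i| \leq s - 2$ for every $i$, then $p$ is not $T_1$-evolvable. At $r = 1$, formula (\ref{eq:jun5_1}) shows that $\phi_b \colon B_1 \to B_1$ is the transposition $1 \leftrightarrow -1$ for every $b \neq \pm s$ and a constant map for $b = \pm s$. Under the hypothesis, $\phi_p$ is a composition of $L$ transpositions, hence equals the identity when $L$ is even and the transposition when $L$ is odd. The odd case immediately fails condition (1). In the even case, both $v = 1$ and $v = -1$ are fixed points, but direct computation from (\ref{eq:jun5_1}) yields $\tilde{b}_i(v_{i-1} = 1) - \tilde{b}_i(v_{i-1} = -1) = \min(2, s - b_i) + \min(2, s + b_i) = 4$ whenever $|b_i| \leq s - 2$; since the intermediate carriers $v_{i-1}$ have opposite signs in the two trajectories at every index, the outputs $p'(1, p)$ and $p'(-1, p)$ differ in every coordinate, violating condition (2).

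The main obstacle is the forward direction in the regime $r > s$, where the distinguished cell $b_L = s$ no longer forces the outgoing carrier to its maximum. I expect the key tool to be combining the monotonicity $v_L \geq v_{L-1}$ at the distinguished cell with the Yang-Baxter relation (\ref{eq:july20_1}) to simultaneously locate a fixed point of $\phi_p$ and prove uniqueness of its output path.
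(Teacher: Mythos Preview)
The paper does not give its own proof of this proposition; it is quoted from ref.~\citen{KS}. So there is nothing to compare against, and your proposal must stand on its own.

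Your reverse direction is correct and clean: for $r=1$ and $|b_i|\le s-2$ the local carrier map is the transposition on $B_1$, and your parity/coordinate argument kills both conditions of $T_1$-evolvability. Your forward direction for $r\le s$ is also correct and matches Remark~\ref{rem:jun30_1}: the cell $b_L=\pm s$ forces $v_L=\pm r$, so $\phi_p$ is constant.

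The genuine gap is the forward direction for $r>s$. Monotonicity at the single distinguished cell $b_L=s$ does \emph{not} control $\phi_p$: the other $L-1$ cells could decrease the carrier, so iterating $\phi_p$ need not converge from that information alone. And the Yang--Baxter relation, as you invoke it, does not obviously yield condition~(2); swapping two $B_r$-carriers through $p$ computes $T_r^2(p)$ two ways, not $T_r(p)$.

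The fix is that for $r\ge s$ the carrier update is well-behaved at \emph{every} cell, not just at $b_L=s$. From (\ref{eq:jun5_1}) one checks that for $r\ge s$ the map $a\mapsto\tilde a$ (with $b$ fixed) satisfies $\tilde a(a+2)-\tilde a(a)\in\{0,2\}$, i.e.\ it is monotone non-decreasing and $1$-Lipschitz. (The obstruction $\tilde a(a+2)-\tilde a(a)=-2$ requires $r-s-b\le a\le s-b-r-2$, possible only when $r\le s-1$.) Monotonicity of each step makes $\phi_p$ monotone on the finite chain $B_r$, so iterating from $-r$ gives a fixed point, establishing~(1). For~(2), take two fixed points $v\le\tilde v$ and set $d_i=\tilde v_i-v_i$; monotonicity gives $d_i\ge0$ and the Lipschitz bound gives $d_i\le d_{i-1}$, while periodicity forces $d_L=d_0$. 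Hence all $d_i$ are equal, and since $b_i'=b_i+v_{i-1}-v_i$ the two outputs coincide. Note that this argument never uses $b_L=\pm s$: every path is $T_r$-evolvable for $r\ge s$, so the hypothesis is only needed in the range $r<s$ that you already handled.
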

In particular every path in $(B_1)^{\ot L}$ is evolvable \cite{KTT}.
\begin{proposition}[ref.~\citen{KS}]\label{prop:july23_4}
Let $p$ be an evolvable path and
suppose both $T_t(p)$ and $T_r(p)$ are evolvable.
Then the commutativity $T_tT_r(p)=T_rT_t(p)$ holds.
\end{proposition}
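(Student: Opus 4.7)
The plan is a Yang--Baxter-style reorganization of a two-carrier transition diagram. First, I would pick fixed-point carriers: let $v_r \in B_r$ satisfy $v'(v_r,p)=v_r$ so that $T_r(p)=p'(v_r,p)$, and let $u_t \in B_t$ satisfy $v'(u_t,T_r(p))=u_t$ so that $T_t(T_r(p))=p'(u_t,T_r(p))$; the latter exists because $T_r(p)$ is evolvable by assumption. Build the two-row transition lattice whose top row passes $v_r$ across $p=b_1\otimes\cdots\otimes b_L$ and whose bottom row passes $u_t$ across the intermediate path produced by the top row. By the two fixed-point conditions, the right-edge carrier outputs are $v_r$ and $u_t$, and the bottom edge of the lattice reads off $T_t(T_r(p))$.

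Next, I would insert on the left edge an auxiliary $R$-vertex for $R\colon B_r\otimes B_t\to B_t\otimes B_r$, converting the incoming pair $v_r\otimes u_t$ into $\tilde u_t\otimes\tilde v_r$. Using (\ref{eq:july20_1}) column by column, I would transport this vertex rightward through the lattice: each column move is a single Yang--Baxter identity on the triple $B_t\otimes B_r\otimes B_s$ formed by the two incoming carriers and the column letter $b_i$, and it preserves the output attached to the south edge of that column. After $L$ such moves the auxiliary $R$-vertex emerges on the right edge (where it now connects the two row-right carriers), while the bottom output remains, verbatim, $T_t(T_r(p))$.

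Finally, I would reinterpret the reorganized lattice. In it, $\tilde u_t\in B_t$ enters row~$1$ on the left and, by the $R$-vertex now sitting on the right edge, leaves row~$1$ on the right as $\tilde u_t$ again; hence $\tilde u_t$ is a fixed-point carrier for $T_t$ on $p$, and the uniqueness clause in the definition of $T_t$-evolvability identifies the row-$1$ bottom output with $T_t(p)$. By the same reasoning $\tilde v_r\in B_r$ is a fixed-point carrier for $T_r$ on $T_t(p)$, which is evolvable by hypothesis, and the row-$2$ bottom output is $T_r(T_t(p))$. Comparing this with the unchanged lattice bottom gives $T_t(T_r(p))=T_r(T_t(p))$. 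The main obstacle will be the column-by-column Yang--Baxter push: one needs a clean inductive statement that each elementary step trades the three $R$-vertices on one side of a column for the three $R$-vertices on the other side without altering that column's south-edge output, and the roles of the two evolvability hypotheses---$T_r(p)$ supplying $u_t$ for existence, $T_t(p)$ supplying well-definedness of $T_r(T_t(p))$ through uniqueness---must be tracked carefully.
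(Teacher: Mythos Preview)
Your argument is exactly what the paper is invoking: the text offers no proof beyond the sentence ``This is a result of the Yang--Baxter relation (\ref{eq:july20_1})'' and the citation to ref.~\citen{KS}, and a two-row transfer lattice with an auxiliary $R$-crossing slid through column by column is the standard way to unpack that sentence. Your identification of how the evolvability hypotheses enter---existence of the fixed-point carriers $v_r,u_t$ from $p$ and $T_r(p)$, and the uniqueness clause applied to $p$ and $T_t(p)$ when reading off the swapped rows---is correct and is really the only subtle point.

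There is one bookkeeping slip to repair. As written, attaching $R\colon B_r\otimes B_t\to B_t\otimes B_r$ on the \emph{left} of the lattice you built (top row $B_r$, bottom row $B_t$) does not type-check: the output $\tilde u_t\in B_t$ would feed the $B_r$-row. The clean fix is either to attach the crossing on the \emph{right} of that lattice and push it leftward, or equivalently to begin from the companion lattice whose rows are already in the order $(B_t,B_r)$ with the $R$-vertex on its left (overall inputs still $v_r,u_t$), push rightward, and recognize your original lattice upon arrival. With that correction your final paragraph goes through verbatim: the overall right-edge outputs $R(v_r\otimes u_t)=\tilde u_t\otimes\tilde v_r$ are preserved by the Yang--Baxter moves, so $\tilde u_t$ and $\tilde v_r$ are fixed-point carriers for the swapped lattice, and the south edge is simultaneously $T_tT_r(p)$ and $T_rT_t(p)$.
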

This is a result of the Yang-Baxter relation (\ref{eq:july20_1}),
and ensures the integrability of the system.

Every path in $(B_s)^{\ot L}$ is $T_s$-evolvable. 
Here $T_s$ agrees with a right cyclic shift,
i.~e.~if $p = b_1 \ot b_2 \ot \cdots \ot b_L$ then
$T_s(p) = b_L \ot b_1 \ot \cdots \ot b_{L-1}$.

\begin{remark}\label{rem:jun30_1}
Let $p =b_1 \ot \cdots \ot b_L\in (B_s)^{\ot L}$ be an evolvable path.
If $r < s$ a practical method to obtain $T_r(p)$ is given as follows.
According to Proposition \ref{prop:jun5_2} 
there exists $i$ such that $b_i= \pm s$.
Let $\tilde{p} = (T_s)^{L-i}(p)= b_{i+1} \ot b_{i+2} \ot \cdots \ot b_i$.
{}From the explicit formula (\ref{eq:jun5_1}) 
we always have $\tilde{a} = \pm r$ 
under $R: a \ot b \mapsto \tilde{b} \ot \tilde{a}$ with $b=\pm s$.
Thus $v'(v,\tilde{p})=\pm r$ for any $v \in B_r$.
Define $T_r(\tilde{p})$ through the relation
$(\pm r) \ot \tilde{p} \mapsto T_r(\tilde{p}) \ot (\pm r)$.
Then $T_r(p)$ is given by $T_r(p)=(T_s)^{i} T_r(\tilde{p})$
since we have Proposition \ref{prop:july23_4}.
\end{remark}

\begin{example}\label{ex:jun30_2}
Let us consider the action of $T_1$ on the path
$p=(221)(111)(221)(221)$ which is evolvable.
By applying the cyclic shift twice, we obtain
\begin{displaymath}
\tilde{p} = T_3^2(p)= (221)(221)(221)(111).
\end{displaymath}
Then we have $v'(v,\tilde{p})=(1)$ for any $v \in B_1$.
By the combinatorial $R$ in Example \ref{ex:july30_1} we obtain
\begin{displaymath}
\batten{(1)}{(221)}{(211)}{}
\batten{(2)}{(221)}{(222)}{}
\batten{(1)}{(221)}{(211)}{}
\batten{(2)}{(111)}{(211)}{(1),}
\end{displaymath}
which leads to
$T_1(\tilde{p}) = (211)(222)(211)(211)$.
Hence 
\begin{displaymath}
T_1(p) = T_3^2 T_1(\tilde{p}) =(211)(211)(211)(222).
\end{displaymath}
\end{example}

Since the time evolution $T_{s-1}$ 
for $(B_s)^{\ot L}$ has a particular importance to our study,
we introduce a definition and a notation {specific} to it.
As we have seen in Examples \ref{ex:july13_4} and \ref{ex:july13_3}
a path of the form $p_{\pm} := (\pm s \ot \mp s )^{\ot L/2}$
(when $L$ is even) is $T_{s-1}$-evolvable but $T_{s-1}(p_{\pm})$ is not.
We will find that $T_{s-1}$ can be applied
arbitrary times to any evolvable path except
$p_{\pm}$ (Corollary \ref{cor:jun8_2}).
By this observation and
in view of Proposition \ref{prop:jun5_2}
we give the following definition.
\begin{definition}
A path $p = b_1 \ot \cdots \ot b_L
\in (B_s)^{\ot L}$ is 
\textbf{\mathversion{bold}
$T_{s-1}$-evolvable in a strong sense}
if the following two conditions are satisfied.
(1) There exists $i$ such that $b_i=s$ or $b_i=-s$.
(2) The path does not take the form $p_{\pm}$.
\end{definition}
We also define
\begin{displaymath}
\mathcal{P}_{L,s}=
\left\{ p = b_1 \ot \cdots \ot b_L \in (B_s)^{\ot L}
\,\vert \,\mbox{$p$ is $T_{s-1}$-evolvable in a strong sense.} \right\}
\end{displaymath}
for positive integers $L,s$.

\subsection{The problem of creating ballot sequences}
\label{subsec:2_3}
%
Given a path $p =b_1 \ot \cdots \ot b_L\in (B_s)^{\ot L}$ 
(weight notation) we 
define its \textbf{total weight} by $\wt (p) = \sum_{i=1}^L b_i$.
{}From now to the end of this subsection
we understand each of the paths to be
expressed in the word notation, and
we ignore all the parentheses to ``read" a path
as a single word that consists of the letters $1$ and $2$.
Recall the definition of ballot sequence in \S \ref{sec:1}.
We say that a path $p \in (B_s)^{\ot L}$ is
\textbf{equivalent to a ballot sequence}
if there exists $i$ such that $T_s^i(p)$ is a ballot sequence.

Let us begin by considering special cases.
The following fact is easy to prove.
\begin{proposition}
Every $p \in (B_1)^{\ot L}$ of nonnegative total weight
is equivalent to a ballot sequence.
\end{proposition}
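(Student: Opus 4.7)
The plan is to apply the classical Cycle Lemma (Dvoretzky--Motzkin / Raney) in this very simple setting. Using the weight notation, write $p = b_1 \ot \cdots \ot b_L$ with each $b_i \in \{+1,-1\}$, and introduce the partial sums $S_0 = 0$ and $S_k = b_1 + \cdots + b_k$ for $1 \leq k \leq L$. Under the identification of the weight $+1$ with the letter $1$ and the weight $-1$ with the letter $2$, the ballot condition on a word is exactly that $S_k \geq 0$ for all $k$, and the hypothesis $\wt(p) \geq 0$ reads $S_L \geq 0$.

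Let $m = \min_{0 \leq k \leq L} S_k$; since $S_0 = 0$ one has $m \leq 0$. Pick any $k_0$ attaining the minimum, $S_{k_0} = m$. Because $T_1$ acts on $(B_1)^{\ot L}$ as the right cyclic shift, iterating $T_1$ realizes every cyclic rotation of $p$; in particular
\begin{displaymath}
p' := T_1^{L-k_0}(p) = b_{k_0+1} \ot \cdots \ot b_L \ot b_1 \ot \cdots \ot b_{k_0}.
\end{displaymath}
I claim $p'$ is a ballot sequence. Let $S'_j$ denote its partial sums. For $1 \leq j \leq L - k_0$, one has $S'_j = S_{k_0+j} - S_{k_0} \geq m - m = 0$ by minimality of $S_{k_0}$. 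For $L - k_0 < j \leq L$, one computes $S'_j = (S_L - S_{k_0}) + S_{j-(L-k_0)} \geq (S_L - m) + m = S_L \geq 0$.

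Therefore $T_1^{L-k_0}(p)$ is a ballot sequence, so $p$ is equivalent to a ballot sequence in the sense of the paper. The argument is really just bookkeeping of prefix sums across the cyclic cut, and I do not expect any step to present a genuine obstacle. The only point that deserves care is the wraparound: the surplus $S_L - S_{k_0} \geq 0$ accumulated over the tail of the sequence must absorb any subsequent dip back toward the minimum, and it does so precisely because $S_{k_0}$ remains a global minimum of $S$ over the full range $\{0,1,\ldots,L\}$, not merely a minimum of the trailing segment.
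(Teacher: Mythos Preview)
Your argument is correct: this is precisely the Cycle Lemma (Dvoretzky--Motzkin), and every step checks out. You correctly identify that for $s=1$ the operator $T_1 = T_s$ is the right cyclic shift, so producing a ballot sequence of the form $T_1^{L-k_0}(p)$ is exactly what ``equivalent to a ballot sequence'' requires; the prefix-sum bookkeeping across the cut is handled cleanly.

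As for comparison with the paper: there is nothing to compare. The paper does not supply a proof of this proposition --- it simply asserts that the fact ``is easy to prove'' and moves on. Your Cycle Lemma argument is the canonical elementary proof one would expect here, and it fits the paper's assessment of the statement as straightforward.
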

This result was a key to the method in \cite{KTT}.
An analogue of this proposition is given as follows.
\begin{proposition}[ref.~\citen{T}]
Suppose $p \in (B_2)^{\ot L}$ is evolvable and $\wt (p) > 0$.
Then $p$ is equivalent to a ballot sequence.
\end{proposition}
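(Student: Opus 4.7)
The plan is to reduce the statement to a single-step application of the classical Dvoretzky--Motzkin cycle lemma. Writing $G(k) = \sum_{i=1}^{k}\wt(b_i)$ with $G(0)=0$ and $G(L)=\wt(p)>0$, a shift $T_2^{\,j}(p)$ corresponds to reading the cells starting from $b_{j+1}$. Unfolding each cell into two letters $(11)\mapsto 1,1$, $(21)\mapsto 2,1$, $(22)\mapsto 2,2$, one checks that $T_2^{\,j}(p)$ is a ballot sequence exactly when
\begin{align*}
&\text{(a)}\quad G(j+k)\ge G(j)\quad\text{for all }k=1,\dots,L,\\
&\text{(b)}\quad G(j+k-1)>G(j)\quad\text{whenever $b_{j+k}\in\{(21),(22)\}$,}
\end{align*}
with indices read cyclically with drift $\wt(p)$. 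Since $G$ takes only even values, the strict inequality in (b) means $G(j+k-1)\ge G(j)+2$.

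Next, I would rescale by setting $a_i := \wt(b_i)/2 \in \{-1,0,1\}$. Then $\sum_i a_i = \wt(p)/2 \ge 1$, and since each $a_i \le 1$, the Dvoretzky--Motzkin cycle lemma applies and supplies exactly $\wt(p)/2$ starting positions $j\in\{0,\dots,L-1\}$ for which \emph{every} partial sum of the rotated $a$-sequence is strictly positive, equivalently $G(j+k)>G(j)$ for all $k=1,\dots,L$.

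I claim any such $j$ already works. The strict inequality at $k=1$ forces $\wt(b_{j+1})>0$, so $b_{j+1}=(11)$ starts with letter $1$; thus the $k=1$ instance of (b) is vacuous. For $k\ge 2$, strict positivity gives $G(j+k-1)>G(j)$ directly, so (b) holds regardless of the first letter of $b_{j+k}$. Condition (a) is immediate from strict positivity. Hence $T_2^{\,j}(p)$ is a ballot sequence.

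There is no substantive obstacle here. Evolvability is in fact not used: the only non-evolvable paths in $(B_2)^{\ot L}$ consist entirely of $(21)$ cells and so have total weight zero, and are already excluded by $\wt(p)>0$. The main conceptual observation is that the \emph{strict} form of the cycle lemma automatically takes care of both the cell-boundary and the middle-of-cell ballot conditions at once, thanks to the fact that a strictly positive even-valued difference is necessarily at least $2$.
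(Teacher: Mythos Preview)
Your argument is correct. The reduction to the Dvoretzky--Motzkin cycle lemma with $a_i=\wt(b_i)/2\in\{-1,0,1\}$ is clean, and the key observation---that strict positivity of the partial sums forces $b_{j+1}=(11)$ and that for $k\ge 2$ the even-valued difference $G(j+k-1)-G(j)>0$ is automatically $\ge 2$---disposes of both the cell-boundary and the mid-cell ballot constraints at once. Your remark that evolvability is automatic when $\wt(p)>0$ (the only non-evolvable paths in $(B_2)^{\ot L}$ consist entirely of $(21)$ cells) is also correct.

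As for comparison: the paper does not prove this proposition at all---it is quoted from the external reference~[T] and used only as motivation. The paper's own machinery (local minima, maximal and singular segments, fixed versus floating minima, spike formation) is aimed at the general $s$ case and, when specialised to $s=2$, yields only the weaker conclusion of Theorem~\ref{th:may21_1}, namely that some $T_1^j(p)$ is equivalent to a ballot sequence. Your argument is both stronger (it shows $p$ itself is equivalent, with no $T_1$ needed) and far more elementary, at the price of being specific to $s=2$: for $s\ge 3$ the letters inside a cell are arranged as $2\cdots 21\cdots 1$, and the single-step cycle-lemma argument no longer handles all the intermediate prefixes, which is precisely why the paper needs the quasi-cyclic shift $T_{s-1}$ and the accompanying structural analysis.
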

Note that the case of zero total weight is excluded here.
In that case we have the following result.
\begin{proposition}\label{prop:july19_1}
Suppose $p \in (B_2)^{\ot L}$ is evolvable and $\wt (p) = 0$.
Then either $p$ or $T_1(p)$ is equivalent to a ballot sequence.
\end{proposition}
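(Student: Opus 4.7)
The plan is to encode ballot-equivalence in terms of the cell-level cumulative weight $\beta_k := \sum_{j\le k}\wt(b_j)$, which equals the letter-level count $B$ at every cell boundary $2k$. Within a cell $(21)$ the letter-level $B$ dips to $\beta_{k-1}-1$ in the middle, whereas within $(11)$ or $(22)$ the intra-cell minimum of $B$ is attained at a cell boundary. Consequently $\min B_p=\min\beta$ unless there exists a cell $b_k=(21)$ with $\beta_{k-1}=\min\beta$ --- call such a $b_k$ a \emph{bad} $(21)$ --- in which case $\min B_p=\min\beta-1$ is attained only at odd positions. Hence $p$ is equivalent to a ballot sequence iff it contains no bad $(21)$ (``Case I''). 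If $p$ is in ``Case II'', then in particular $p$ has a mixed cell, so $p\ne p_\pm$ and $T_1 p$ is defined; the proposition reduces to showing that $T_1 p$ is in Case I.

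To analyse $T_1$ I would use Lemma~\ref{lem:july8_1} and Remark~\ref{rem:jun30_1}. Call a cell $b_k$ \emph{active} if the incoming carrier $v_k\in\{\pm 1\}$ does not equal $\mathrm{sign}(b_k)$; every mixed cell and every opposite-sign pure cell is active. At an active cell the carrier flips and the weight of the cell changes by $2v_k$. Setting $\delta_k:=\beta'_k-\beta_k$, where $\beta'$ is the cumulative weight for $T_1 p$, the carrier-flip rule makes the increments $2v_k$ alternate in sign at successive active cells, so $\delta$ takes only two values $\delta_L<\delta_H=\delta_L+2$ on complementary cyclic arcs separated by active cells. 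Label the corresponding cell-boundary positions as $H$ or $L$.

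The core of the argument is a pair of observations about the position of $M:=\min\beta$. Claim A: if $b_k=(21)$ is bad in $p$, then $\beta_{k-1}=\beta_k=M$ and, since $b_k$ is active, the boundaries $k-1$ and $k$ lie in different $\delta$-regions, so $M$ is realised in both $H$ and $L$. Claim B: a $(21)$-cell of $T_1 p$ can only arise from an opposite-sign pure cell of $p$, i.e.\ either $b_j=(11)$ with $v_j=-1$ or $b_j=(22)$ with $v_j=+1$. If some such cell $b'_j$ is bad in $T_1 p$, i.e.\ $\beta'_{j-1}=\min(\beta+\delta)$, one writes $\beta'_{j-1}=\beta_{j-1}+\delta_{j-1}$ and compares it against the minima of $\beta+\delta$ on the $H$- and $L$-arcs; using $|\beta_j-\beta_{j-1}|=2$ and $\beta_j\ge M$, both subcases force $\min_L\beta\ge M+2$ and $\min_H\beta=M$, so $M$ is realised \emph{only} in $H$.

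Claims A and B are mutually exclusive: Case II forces $M$ into both regions, while a bad $(21)$ of $T_1 p$ confines $M$ to a single region. Hence if $p$ is in Case II, $T_1 p$ contains no bad $(21)$ and is therefore in Case I, i.e.\ ballot-equivalent. The hard step is Claim B: one handles two subcases corresponding to opposite ascending/descending transitions of $\delta$ at $b_j$ and sets up the relevant inequalities on $\beta+\delta$ restricted to the $H$- and $L$-arcs. The weight-step constraint $|\beta_j-\beta_{j-1}|=2$ combined with $\beta_j\ge M$ is what pins $M$ to the high region, and the contradiction with Claim A then closes the proof.
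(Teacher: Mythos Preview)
Your argument is correct (modulo a harmless indexing slip: the incoming carrier at cell $b_k$ is $v_{k-1}$ in the paper's conventions, not $v_k$), and it is genuinely different from the paper's proof. The paper specializes its general $s$ machinery to $s=2$: it invokes the classification of local minima into fixed and floating, the notion of lowest fixed minimum and its identification with a global minimum (Lemma~\ref{prop:may18_3}), the spike characterization of ballot-equivalence (Proposition~\ref{prop:may25_5}), and then proves two $s=2$-specific claims --- a parity obstruction preventing spike and non-spike global minima from coexisting, and the fact that fixed minima alternate between spike and non-spike form under $T_1$ --- to conclude that exactly one of $p,T_1(p)$ has a global minimum in spike form. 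Your route bypasses all of this: the ``bad $(21)$'' criterion is a direct letter-level reformulation of ballot-equivalence, and the two-valued shift $\delta$ is an elementary consequence of the carrier alternation at active cells. The mutual-exclusion step (Claim~A forces $M$ into both $\delta$-regions, Claim~B into only one) is a clean replacement for the paper's fixed/floating and modified-weight apparatus. Your approach is more self-contained and arguably more transparent for $s=2$, while the paper's argument has the virtue of being a corollary of structures that work uniformly in~$s$.
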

For example consider the path
$p=(11)(22)(21)(11)(22)(21)(21)$, that
is not equivalent to a ballot sequence.
For this $p$ we obtain
\begin{displaymath}
T_2T_1(p)=(11)(21)(21)(22)(11)(21)(22),
\end{displaymath}
that is a ballot sequence.
We give a proof of this proposition in Appendix \ref{app:a}
as an application of the main result of this paper.

To find an appropriate formulation of the problem,
next we consider the path
$p=(221)(111)(221)(221)$.
In Example \ref{ex:jun30_2} we derived
\begin{math}
T_1(p)=(211)(211)(211)(222)
\end{math}
and it is easy to see $T_1^2(p)=p$.
Hence there is no $j$ such that $T_1^j(p)$ is equivalent to
a ballot sequence.
However if {$T_2$ actions are allowed}
we {find}
\begin{displaymath}
\batten{(22)}{(211)}{(222)}{}\,
\batten{(11)}{(211)}{(111)}{}\,
\batten{(21)}{(211)}{(211)}{}\,
\batten{(21)}{(222)}{(221)}{(22),}
\end{displaymath}
by {using} the combinatorial $R$ in Example \ref{ex:july30_1},
{hence can obtain} a ballot sequence {as}
$(T_3)^3T_2T_1(p)=(111)(211)(221)(222)$.
Having seen this example one may think that
the following question arises naturally:
(1) Let $s$ be any positive integer and consider $p \in (B_s)^{\ot L}$.
(2) Suppose $p$ is evolvable and of nonnegative total weight.
(3) Then, is it always possible to make such $p$
be equivalent to a ballot sequence by means of
$T_1,T_2,\ldots,T_{s-1}$?

We call it the problem of creating ballot sequences,
and it is
solved affirmatively.
In fact, we have the following theorem.
\begin{theorem}\label{th:may21_1}
Let $p \in (B_s)^{\ot L}$ be an evolvable path satisfying $\wt (p) \geq 0$.
Then there exists $j$ 
such that $T_{s-1}^j (p)$ is equivalent to a ballot sequence.
\end{theorem}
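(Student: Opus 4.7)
The plan is to reformulate the condition of being equivalent to a ballot sequence as a combinatorial positivity statement about a partial-sum sequence, and then track that sequence under the $T_{s-1}$ dynamics using the explicit local formula in Lemma \ref{lem:july8_1}. Writing $p = b_1 \ot \cdots \ot b_L$ in weight notation and noting that cell $i$ contains $(s-b_i)/2$ copies of $2$ followed by $(s+b_i)/2$ copies of $1$, the running quantity ``number of $1$'s minus number of $2$'s'' read through $p$ attains its local minimum inside cell $i$ at the boundary between the $2$'s and $1$'s, with value $S_i := \sum_{j<i} b_j - (s-b_i)/2$. Then $p$ is a ballot sequence iff $S_i \ge 0$ for all $i$, and $p$ is equivalent to a ballot sequence iff some cyclic rotation of $(S_1,\ldots,S_L)$ is pointwise nonnegative. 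I would call $S_i$ the \emph{modified weight} at position $i$ and phrase everything downstream in terms of it.

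Next I would use Lemma \ref{lem:july8_1} to understand $T_{s-1}$ at the level of the $(S_i)$ sequence. Since $R:a\ot b \mapsto (a\pm 1)\ot(b\mp 1)$ with the sign dictated by $\mathrm{sgn}(a+b)$, a carrier $v\in B_{s-1}$ traversing $p$ walks as a bounded signed random-walk-like trajectory driven by the $(b_i)$. A direct inspection yields two structural invariants: the number of local minima of $(S_i)$ is preserved by $T_{s-1}$, and $p$ decomposes uniquely into characteristic \emph{segments} whose \emph{maximal} and \emph{singular} pieces persist. Using the latter, the local minima split into two types, \emph{fixed} minima (which are trackable as labelled particles across all iterates of $T_{s-1}$) and \emph{floating} minima (which are not). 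Among the fixed minima, the differences $S_i - S_j$ are time-invariants, so the \emph{lowest} fixed minimum remains the lowest forever.

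The argument then closes by combining this with the dynamical claim that every fixed minimum eventually becomes a \emph{spike} --- a special local minimum whose local shape is exactly that which, when taken as the starting point of a cyclic reading, produces a ballot sequence. When $\wt p = 0$, the lowest fixed minimum is automatically a global minimum of $(S_i)$; once it becomes a spike, the positivity of all the $S_i$ relative to it is guaranteed, and the cyclic reading starting at that spike is a ballot sequence. When $\wt p > 0$ I would introduce a virtual global minimum absorbing the positive excess and rerun the argument. The even-$s$ case is handled separately but in the same spirit, with parity adjustments in the formula for $S_i$ and in the segment analysis.

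The hard part will be the spike-recurrence claim: showing that every fixed minimum eventually attains the spike shape under repeated $T_{s-1}$. This is a genuine dynamical statement and not a mere counting argument, because one must exclude the possibility that a conserved ``particle'' orbits periodically through non-spike configurations. My plan to overcome this is to use the persistent maximal/singular segment structure from the earlier step together with Lemma \ref{lem:july8_1} to bound, within each maximal segment, the number of $T_{s-1}$-steps before the carrier sign pattern is forced to produce a spike shape at the particle's site, so that over finitely many iterates every fixed minimum must visit a spike.
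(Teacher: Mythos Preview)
Your outline matches the paper's proof essentially step for step: the partial-sum/modified-weight reformulation, conservation of local minima, the segment decomposition with maximal and singular pieces, the fixed/floating dichotomy, invariance of weight differences among fixed minima, the lowest fixed minimum being a (virtual) global minimum, and the spike-formation endgame, with a separate even-$s$ treatment.

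One substantive correction: the quantity $S_i - S_j$ you wrote down is \emph{not} a $T_{s-1}$-invariant, even restricted to fixed minima. The paper's Table~\ref{tab:2} shows that the raw weight difference $\wt_X(Y)$ (your $S_i-S_j$, up to reindexing) changes by $0,\pm 1,\pm 2$ depending on which of the three local move types I/II/III each of $X,Y$ undergoes, and fixed minima can take any of these types. What \emph{is} conserved is the corrected quantity $\Wt_X(Y) = \wt_X(Y) + \tfrac12(\min(-b_k,b_{k+1}) - \min(-b_j,b_{j+1}))$; the boundary correction exactly cancels the entries of Table~\ref{tab:2} when both minima are fixed (Proposition~\ref{prop:may23_1}). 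You will need this refinement for the ``lowest fixed minimum stays lowest'' step. Two smaller points you should also budget for: (i) when $\wt(p)=0$ you must prove a fixed minimum exists (Lemma~\ref{lem:jun21_1} does this via a case analysis on singular-segment types), and when $\wt(p)>0$ with no fixed minima the path is already equivalent to a ballot sequence (Lemma~\ref{lem:july24_5}); (ii) the spike-recurrence argument in the paper is not a bound on steps within a segment but rather a catching-up argument: maximal segments move strictly faster on average than fixed minima (Lemma~\ref{prop:jun22_3}), cannot leap over them without forming a pre-spike (Lemma~\ref{prop:july23_2}), and pre-spikes monotonically evolve to spikes (Lemma~\ref{lem:july24_3}).
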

This is our main result.
Note that 
not only the problem is solved affirmatively, but also
it is solved without using $T_r \, (r \leq s-2)$.
For instance if $p=(221)(111)(221)(221)$
then $T_3^2 T_2^3(p)=(111)(211)(221)(222)$,
that again can be checked by using
the combinatorial $R$ in Example \ref{ex:july30_1}

Since $p_{\pm}$ are equivalent to ballot sequences from the outset,
it suffices to prove the theorem for the elements of
$\mathcal{P}_{L,s}$.
Thus we restrict ourselves to considering only these cases
in the subsequent sections.
%
\section{Proof in the Case of Odd Cell Capacity}
\label{sec:3}
\subsection{Ballot sequences and local minima}
\label{subsec:3_1}
{The purpose of this section is to give a proof
of Theorem \ref{th:may21_1} in the case of odd $s$.
To begin, we introduce some notions to formulate the problem
in terms of the weight notation.}
Let $s(\geq 3)$ be an odd integer.
For $b,b' \in B_s$ we call $b \ot b'$ a \textbf{local minimum} 
(resp. a \textbf{local maximum} )
{when} $b<0$ and $b'>0$ (resp. $b>0$ and $b'<0$ ) {are satisfied}.
If $b \ot b'$ is a local minimum or a local maximum, 
{then} it is called an
\textbf{extremum}.
For $p = b_1 \ot \cdots \ot b_L \in (B_s)^{\ot L}$ we define
the functions $w_k, \, W_k:(B_s)^{\ot L}
\rightarrow \Z$ by
\begin{align}
w_k(p) &= \sum_{i=1}^k b_i -\frac12 (s-b_{k+1}),
\label{eq:jun21_2} \\
W_k (p) &= w_k(p) - \frac12 \max (b_k + b_{k+1},0),
\label{eq:jun21_3}
\end{align}
for each integer $k\,(0 \leq k \leq L-1)$.
We understand the index of $b_i$
to be modulo $L$, and write $b_L \ot b_1$ for a
formal juxtaposition of $b_L$ and $b_1$.
We call $p$ a \textbf{highest path} or a \textbf{ballot sequence}
if $w_k(p) \geq 0$ for any $k \,(0 \leq k \leq L-1)$.
In the word notation this definition
of the ballot sequence
amounts to the conventional one in \S \ref{sec:1}.
{}From the obvious recurrence
$w_k(p) = w_{k-1}(p) +\frac12 (b_k + b_{k+1})$
we {obtain} the following results.
\begin{lemma}\label{lem:may29_1}
Suppose one of the following conditions is satisfied.
\begin{itemize}
\item $w_{k-1}(p) \geq 0$ and $b_k + b_{k+1} \geq 0$
\item $w_{k+1}(p) \geq 0$ and $b_{k+1} + b_{k+2} \leq 0$ 
\end{itemize}
Then $w_k(p) \geq 0$.
\end{lemma}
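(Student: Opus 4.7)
The plan is to prove both cases as direct consequences of the recurrence
$$w_k(p) = w_{k-1}(p) + \tfrac{1}{2}(b_k + b_{k+1})$$
stated immediately before the lemma. Nothing deeper than this single identity (and its shifted version) should be needed, so the proof is essentially a two-line verification.

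For the first bullet, I would simply substitute the hypotheses $w_{k-1}(p)\ge 0$ and $b_k+b_{k+1}\ge 0$ into the recurrence to conclude $w_k(p)\ge 0+\tfrac12\cdot 0=0$.

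For the second bullet, I would shift the recurrence index by one to get
$$w_{k+1}(p) = w_k(p) + \tfrac{1}{2}(b_{k+1} + b_{k+2}),$$
solve it for $w_k(p)$, and apply the hypotheses $w_{k+1}(p)\ge 0$ and $b_{k+1}+b_{k+2}\le 0$ to conclude
$$w_k(p) = w_{k+1}(p) - \tfrac12(b_{k+1}+b_{k+2}) \ge 0 - 0 = 0.$$

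There is no real obstacle here: the lemma is bookkeeping that will presumably be invoked repeatedly in later subsections (e.g.\ when propagating nonnegativity of $w_k$ across adjacent positions in arguments about local minima, spikes, and the decomposition into segments). The only thing to be mindful of is the cyclic interpretation of the indices modulo $L$, already established in the paragraph defining $w_k$, so that the recurrence and its shift make sense at the boundary $k=0$ or $k=L-1$ without special cases.
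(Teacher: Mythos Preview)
Your proposal is correct and matches the paper's approach exactly: the paper simply states that the lemma follows from the recurrence $w_k(p)=w_{k-1}(p)+\tfrac12(b_k+b_{k+1})$, which is precisely what you spell out for both bullets.
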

\begin{lemma}\label{lem:july13_5}
The relation $W_k (p) \geq 0$ holds if and only if
$w_k(p) \geq 0$ and $w_{k-1}(p) \geq 0$ hold.
\end{lemma}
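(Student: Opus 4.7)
The plan is to reduce $W_k(p)$ to a simple expression in $w_k(p)$ and $w_{k-1}(p)$ alone, after which the claim will be essentially immediate. Using the recurrence $w_k(p) = w_{k-1}(p) + \frac12(b_k + b_{k+1})$ displayed just before the statement, I can replace $b_k + b_{k+1}$ by $2(w_k(p) - w_{k-1}(p))$ inside the $\max$ in the definition
\[
W_k(p) = w_k(p) - \tfrac12 \max(b_k + b_{k+1}, 0).
\]

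The core observation I would establish is the identity $W_k(p) = \min(w_k(p),\, w_{k-1}(p))$. To see this, split into two cases according to the sign of $b_k + b_{k+1}$. When $b_k + b_{k+1} \geq 0$, the recurrence gives $w_k(p) \geq w_{k-1}(p)$, and a direct substitution yields
\[
W_k(p) = w_k(p) - \tfrac12(b_k + b_{k+1}) = w_{k-1}(p) = \min(w_k(p), w_{k-1}(p)).
\]
When $b_k + b_{k+1} < 0$, the recurrence gives $w_k(p) < w_{k-1}(p)$ and the $\max$ is $0$, so $W_k(p) = w_k(p) = \min(w_k(p), w_{k-1}(p))$. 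Either way the identity holds.

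Once the identity is in hand, the lemma is trivial: the minimum of two real numbers is nonnegative if and only if both of them are nonnegative. I do not anticipate any real obstacle here; the only subtlety is making sure the case split on the sign of $b_k + b_{k+1}$ correctly matches the behavior of the $\max$, and that the single-case value $w_{k-1}(p)$ obtained in the first case agrees with $\min(w_k, w_{k-1})$ under the assumption $w_k \geq w_{k-1}$ forced by the recurrence. The identity itself is worth stating explicitly, since it gives a conceptual meaning to $W_k$ as the smaller of the two adjacent profile values, and this viewpoint is likely to recur later in the paper when local minima and modified weights are analyzed.
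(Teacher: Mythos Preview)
Your proof is correct and follows essentially the same route as the paper's: both exploit the recurrence $w_k = w_{k-1} + \tfrac12(b_k+b_{k+1})$ to rewrite $W_k$ in terms of $w_k$ and $w_{k-1}$. Your explicit identity $W_k(p) = \min(w_k(p),\, w_{k-1}(p))$ is a slightly cleaner packaging of what the paper derives via the pair of formulas $W_k = w_k - \tfrac12\max(B_k,0) = w_{k-1} - \tfrac12\max(-B_k,0)$, but the content is the same.
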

\begin{proof}
Denote $b_k + b_{k+1}$ by $B_k$.
Then we have 
$W_k (p) = w_k(p) - \frac12 \max (B_k,0) = 
w_{k-1}(p)  - \frac12 \max (- B_k,0)$.
So if $\Wt (k) \geq 0$ then we have
$w_k(p) \geq \frac12 \max (B_k,0) \geq 0$ and
$w_{k-1}(p)  \geq \frac12 \max (- B_k,0) \geq 0$.
Conversely if $w_k(p) \geq 0$ and $w_{k-1}(p) \geq 0$ hold
then
\begin{math}
W_k (p) \geq \max \left( 
- \frac12 \max (B_k,0), - \frac12 \max (- B_k,0)
\right) = 0.
\end{math}
\end{proof}
For $p = b_1 \ot \cdots \ot b_L \in (B_s)^{\ot L}$ we define
\begin{displaymath}
\mathcal{M}(p) = \left\{
k \in [0,L-1] \,\vert \,\mbox{$b_k \ot b_{k+1}$ is a local minimum.}
\right\}
\end{displaymath}
\begin{proposition}\label{prop:may30_1}
The path $p$ is a ballot sequence if and only if
the following relations hold:
$w_0(p) \geq 0$, $w_{L-1}(p) \geq 0$, and
$W_k (p) \geq 0$ for all $k \in \mathcal{M}(p)$.
\end{proposition}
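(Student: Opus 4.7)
The forward direction is immediate from the definitions. If $p$ is a ballot sequence then $w_k(p)\ge 0$ for every $k\in[0,L-1]$, which in particular gives $w_0(p)\ge 0$ and $w_{L-1}(p)\ge 0$; for any $k\in\mathcal{M}(p)$ with $k\ge 1$, Lemma~\ref{lem:july13_5} applied to $w_k(p)\ge 0$ and $w_{k-1}(p)\ge 0$ yields $W_k(p)\ge 0$.

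The plan for the converse is an argument by contradiction. Assume the three listed conditions hold and suppose that $w_{k_0}(p)<0$ for some index $k_0$ chosen to minimize $w_k(p)$ over $k\in[0,L-1]$. The hypotheses $w_0(p)\ge 0$ and $w_{L-1}(p)\ge 0$ immediately force $k_0\in[1,L-2]$, so both $w_{k_0-1}(p)$ and $w_{k_0+1}(p)$ are defined within the valid range. Using the recurrence $w_k(p)=w_{k-1}(p)+\tfrac12(b_k+b_{k+1})$ noted just before Lemma~\ref{lem:may29_1}, together with the minimality of $k_0$, I obtain the two inequalities $b_{k_0}+b_{k_0+1}\le 0$ and $b_{k_0+1}+b_{k_0+2}\ge 0$.

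The decisive observation is that $s$ is odd, so every element of $B_s$ has odd weight and is in particular nonzero; this lets me case-split on the sign of $b_{k_0+1}$. If $b_{k_0+1}>0$, then $b_{k_0}\le -b_{k_0+1}<0$, hence $b_{k_0}\otimes b_{k_0+1}$ is a local minimum and $k_0\in\mathcal{M}(p)$; combining the hypothesis $W_{k_0}(p)\ge 0$ with Lemma~\ref{lem:july13_5} then forces $w_{k_0}(p)\ge 0$, a contradiction. If $b_{k_0+1}<0$, then $b_{k_0+2}\ge -b_{k_0+1}>0$, hence $b_{k_0+1}\otimes b_{k_0+2}$ is a local minimum and $k_0+1\in\mathcal{M}(p)$; applying Lemma~\ref{lem:july13_5} to $W_{k_0+1}(p)\ge 0$ again gives $w_{k_0}(p)\ge 0$, a contradiction.

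I expect the only real subtlety to be the boundary bookkeeping at $k_0\in\{0,L-1\}$, which the standalone conditions $w_0(p)\ge 0$ and $w_{L-1}(p)\ge 0$ are precisely designed to cover; they also explain why those two values of $k$ are singled out in the statement rather than being absorbed into the condition on $\mathcal{M}(p)$. Beyond that, the argument relies only on the parity remark that odd $s$ rules out zero weights, the one-step recurrence for $w_k$, and a direct application of Lemma~\ref{lem:july13_5}, so no further obstacle is anticipated.
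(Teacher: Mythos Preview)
Your converse argument is correct and takes a genuinely different route from the paper's. The paper proves the \textit{if} direction by partitioning $[0,L-1]$ at consecutive local minima: between each adjacent pair it locates the unique intervening local maximum and pushes $w_i(p)\ge 0$ outward from the two endpoints via Lemma~\ref{lem:may29_1}, with the separate hypotheses $w_0(p)\ge 0$ and $w_{L-1}(p)\ge 0$ handling the two tail intervals. You instead pick a global minimizer $k_0$ of $w_k$, use the one-step recurrence to force $b_{k_0}+b_{k_0+1}\le 0\le b_{k_0+1}+b_{k_0+2}$, and let the odd-$s$ nonvanishing of $b_{k_0+1}$ place a local minimum at $k_0$ or $k_0+1$, where Lemma~\ref{lem:july13_5} closes the contradiction. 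Your version is leaner---it bypasses Lemma~\ref{lem:may29_1} and the interval induction entirely---while the paper's sweep gives a more constructive picture of how positivity propagates from one local minimum to the next.

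One small point on the forward direction: you restrict to ``$k\in\mathcal{M}(p)$ with $k\ge 1$'' and never return to the case $0\in\mathcal{M}(p)$. The paper's one-line appeal to Lemma~\ref{lem:july13_5} glosses over exactly the same case (that lemma at $k=0$ would require $w_{-1}(p)$, which lies outside the defined range $[0,L-1]$). This does not touch the converse---the direction actually invoked downstream in Proposition~\ref{prop:may25_5}---so it is harmless for the paper's purposes, but your phrasing makes the omission visible rather than hiding it.
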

\begin{proof}
The \textit{only if} direction follows from 
Lemma \ref{lem:july13_5}.
We consider the \textit{if} direction.
Let $b_j \ot b_{j+1}\,(Y)$ and $b_l \ot b_{l+1} \,(Y')\quad (j+1<l)$ be
an adjacent pair of local minima along the path $p$.
By our assumption we have 
$W_j (p), W_l (p) \geq 0$ which implies
$w_j(p) \geq 0$ and $w_{l-1}(p)\geq 0$
by Lemma \ref{lem:july13_5}.
There is a unique local maximum between $Y$ and $Y'$,
say $b_k \ot b_{k+1}$.
We claim if $i$ is between $j+1$ and $l-2$ then
$w_i(p) \geq 0$ holds.
This is proved by induction on $i$:
If $i$ falls between $j+1$ and $k-1$ (resp.~$k$ and $l-2$)
then we use the first (resp.~second) condition
in Lemma \ref{lem:may29_1} and the initial condition
$w_j(p) \geq 0$ (resp.~$w_{l-1}(p)\geq 0$).
So the claim is proved.
There remain the cases in which
either $Y$ is the rightmost local minimum
or $Y'$ is the leftmost one along $p$.
These cases can be treated by
setting $l = L$ or $j=0$ in the above argument.
\end{proof}
As in this proof, we sometimes attach labels to the local minima
and call them by these labels.
When $k \in \mathcal{M}(p)$
we write $\Wt (X)$ for $W_k(p)$ if $X$ is the label for the
local minimum $b_k \ot b_{k+1}$.
We call $\Wt (X)$ the \textbf{modified weight} of $X$.
Let $\mathcal{M}=\{ X_1,\ldots,X_M \}$ be 
the set of all local minima in $p$ denoted by these labels.
For the moment we assume the {following two conditions}.
(1) If $0 \in \mathcal{M}(p)$ then $X_1$ is the label for $b_L \ot b_1$.
(2) {The local minimum} $X_i$ is located to the right of $X_{i-1}$.
{Under this assumption} we define
\begin{equation}\label{eq:jun21_5}
\Wt_{X_i}(X_j) = \Wt (X_j) - \Wt (X_i) +
\begin{cases}
0 & i \leq j, \\
\wt (p) & i > j,
\end{cases}
\end{equation}
which is called the \textbf{modified weight difference}
between $X_i$ and $X_j$.
We call $X \in \mathcal{M}$ a \textbf{virtual global minimum}
if $\Wt_X(Y) \geq 0$ holds for any $Y \in \mathcal{M}$.
In particular, we call such $X$ a \textbf{global minimum}
if $\wt (p) = 0$.
\begin{example}\label{ex:july23_1}
Consider $p = 3 \ot -1 \ot 3 \ot 3 \ot -5 \ot 1 \in (B_5)^{\ot 6}$.
The $p$ has a positive total weight $\wt (p) =4$.
There are two local minima in $p$ at
$\mathcal{M}(p) = \{ 2,5 \}$, which are labeled by
$X_1, X_2$.
Then we obtain $Wt_{X_1}(X_2)=1$ and $Wt_{X_2}(X_1)=3$.
Hence both $X_1$ and $X_2$ are virtual global minima in $p$.
\end{example}
If a local minimum takes the form $-s \ot b$ with some $b \geq 0$,
we call it a \textbf{pre-spike}.
In particular, we call it a \textbf{spike}
if it takes the form $-s \ot s$.
We also say that the local minimum
\textit{takes the spike form} or it 
\textit{is of the spike form}.
\begin{proposition}\label{prop:may25_5}
Let $p \in (B_s)^{\ot L}$ be a path satisfying $\wt (p) \geq 0$.
If $p$ has a (virtual) global minimum taking the spike form,
then it is equivalent to a ballot sequence.
The opposite holds if $\wt (p) = 0$.
\end{proposition}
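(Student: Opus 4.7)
The plan is to reduce both directions to Proposition \ref{prop:may30_1} by cyclically shifting the path so that the spike occupies the boundary between $b_L$ and $b_1$.

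For the forward direction, let $X$ be a virtual global minimum in spike form at position $k$ of $p$, so $b_k=-s$ and $b_{k+1}=s$. I would set $p' = T_s^{L-k}(p)$ (taking $p' = p$ when $k = 0$), so that $p'_1 = s$ and $p'_L = -s$. The spike form gives $b_k + b_{k+1} = 0$, hence $w_{k-1}(p) = w_k(p) = W_k(p) = \Wt(X)$ and $\sum_{i=1}^{k} b_i = \Wt(X)$. Direct substitution into (\ref{eq:jun21_2})--(\ref{eq:jun21_3}) then yields $w_0(p') = 0$, $w_{L-1}(p') = \wt(p) \ge 0$, and $W_0(p') = 0$ for the spike now at position $0$ of $p'$. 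For any other local minimum $Y$ of $p$ at position $l$, cyclic shift places it at position $j \equiv l-k \pmod L$ of $p'$, and a straightforward calculation (using the same constant shift $\Wt(X)$) shows
\[
W_j(p') \;=\; W_l(p) - \Wt(X) + \begin{cases} 0 & l > k, \\ \wt(p) & l < k, \end{cases}
\]
which matches (\ref{eq:jun21_5}) and therefore equals $\Wt_X(Y) \ge 0$ by the virtual global minimum hypothesis. Applying Proposition \ref{prop:may30_1} then forces $p'$ to be a ballot sequence, proving $p$ is equivalent to one.

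For the converse, assume $\wt(p)=0$ and let $p' = T_s^i(p)$ be a ballot sequence. The condition $w_0(p') \ge 0$ forces $p'_1 = s$, and then $w_{L-1}(p') \ge 0$ combined with $\wt(p')=0$ forces $p'_L = -s$. Hence $p'_L \otimes p'_1 = -s \otimes s$ is a spike-form local minimum $X$ of $p'$ at position $0$, with $\Wt(X) = W_0(p') = 0$. For every other local minimum $Y$ at position $l$, the ballot property together with Lemma \ref{lem:july13_5} gives $W_l(p') \ge 0$, so $\Wt_X(Y) = W_l(p') \ge 0$ since $\wt(p)=0$. Thus $X$ is a global minimum in $p'$, and because $\wt(p)=0$ makes the differences $W_l(p) - W_k(p)$ (and therefore all $\Wt_X(Y)$) invariant under cyclic shifts, the corresponding local minimum of $p$ is itself a spike-form global minimum.

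The main technical obstacle is the bookkeeping in the forward direction: showing $W_j(p') = \Wt_X(Y)$ requires matching the two cases $l > k$ and $l < k$ to the two branches in the definition of $\Wt_X(Y)$, and the case $k = 0$ must be handled separately since then $X$ is already labeled $X_1$ by convention and no cyclic shift is needed.
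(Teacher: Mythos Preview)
Your proof is correct and follows essentially the same route as the paper's: reduce via cyclic shift so the spike sits at position~$0$, then verify the hypotheses of Proposition~\ref{prop:may30_1} by identifying $W_j(p')$ with $\Wt_X(Y)$ and invoking the virtual-global-minimum inequality. The paper phrases the reduction as ``it is enough to assume the spike is at $b_L\otimes b_1$'' and then exploits the labeling convention that this minimum is $X_1$, so that $\Wt_{X_1}(X)=\Wt(X)-\Wt(X_1)=\Wt(X)$ directly; your explicit formula $W_j(p')=W_l(p)-\Wt(X)+\{0\text{ or }\wt(p)\}$ is exactly the same identity written out without that convention.
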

\begin{proof}
It is enough to show that if 
$p=b_1 \ot \cdots \ot b_L$ has a (virtual) global minimum of the form
$-s \ot s$ at $b_L \ot b_1$ then it is a ballot sequence,
and the opposite holds if $\wt (p) = 0$.
The former part is proved as follows.
Since $w_0(p)= -\frac12 (s - b_1)$ and 
$w_{L-1}(p) = \wt (p) - \frac12 (s + b_L)$, we have
$w_0(p)=0$ and  $w_{L-1}(p) \geq 0$.
Denote the (virtual) global minimum by $X_1$.
Given $X$, any local minimum in $p$,
we have
\begin{equation}\label{eq:july20_2}
\Wt_{X_1}(X) = \Wt(X) - \Wt(X_1).
\end{equation}
But the second term vanishes as
$\Wt(X_1) = W_0 (p) = w_0(p) - \frac12 
\max (b_L + b_1,0) = 0$.
Since $X_1$ is a virtual global minimum, we obtain
$\Wt(X) = \Wt_{X_1}(X) \geq 0$.
Then the claim follows by
Proposition \ref{prop:may30_1}.

The latter part is proved as follows.
Suppose $p$ is a ballot sequence and $\wt (p) = 0$.
Then by the conditions $w_0(p) \geq 0$ and $w_{L-1}(p) \geq 0$
we are forced to have $b_1 = s$ and $b_L = -s$.
It remains to show that the local minimum
$b_L \ot b_1 \, (X_1)$ is a global minimum.
By Proposition \ref{prop:may30_1} the relation
$\Wt(X) \geq 0$ holds for any local minimum $X$.
On the other hand we have $\Wt (X_1) = 0$ 
for the local minimum $X_1$.
Hence by (\ref{eq:july20_2}) we have $\Wt_{X_1}(X) \geq 0$
for any $X$.
\end{proof}
%
\subsection{Conservation of local minima}
\label{subsec:3_2}
Let $p = b_1 \ot \cdots \ot b_L \in (B_s)^{\ot L}$ be a
$T_{s-1}$-evolvable path 
and define $b_k' \, (1 \leq k \leq L)$ by
$T_{s-1}(p) = b'_1 \ot \cdots \ot b'_L $.
Then the process is depicted by the following {transition diagram}
\begin{equation}\label{eq:jun14_2}
\batten{v_0}{b_1}{b_1'}{v_1}\!\!\!
\batten{}{b_2}{b_2'}{v_2}\!\!\!
\batten{}{}{}{\cdots\cdots}
\quad
\batten{}{}{}{v_{L-2}}\,\,
\batten{}{b_{L-1}}{b_{L-1}'}{v_{L-1}}\,\,
\batten{}{b_L}{b_L'}{v_L(=v_0).}
\end{equation}
{Again}, we understand the indices of $b_i$ and $b_i'$
to be modulo $L$, and write $b_L \ot b_1$ and $b'_L \ot b'_1$ for
formal juxtapositions.
\begin{proposition}\label{lem:may16_1}
If $b_k \ot b_{k+1}$ is a local minimum, then either 
$b'_k \ot b'_{k+1}$ or $b'_{k+1} \ot b'_{k+2}$ is a local minimum.
\end{proposition}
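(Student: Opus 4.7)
The plan is a direct case analysis using Lemma \ref{lem:july8_1} at the consecutive vertices of (\ref{eq:jun14_2}) around position $k$. I would write $\varepsilon_i \in \{+1,-1\}$ for the sign of $v_{i-1} + b_i$; Lemma \ref{lem:july8_1} then gives $b'_i = v_{i-1} + \varepsilon_i$ and $v_i = b_i - \varepsilon_i$. Since $s$ is odd, each $b_i$ is odd and each $v_i$ is even, so $v_{i-1}+b_i$ is a nonzero odd integer and $\varepsilon_i$ is always well-defined. Throughout, the hypothesis gives $b_k \leq -1$ and $b_{k+1} \geq 1$.

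The first split is on $\varepsilon_k$. If $\varepsilon_k = +1$, then $v_{k-1} > -b_k \geq 1$ forces $v_{k-1} \geq 2$ (even), whence $b'_k \geq 3 > 0$. Thus $b'_k \ot b'_{k+1}$ cannot be a local minimum, and I would have to locate the minimum at $b'_{k+1} \ot b'_{k+2}$. Here $v_k = b_k-1 \leq -2$; checking the two subcases $\varepsilon_{k+1} = \pm 1$ one gets $b'_{k+1} < 0$ and $v_{k+1} \geq 2$, and the latter forces $b'_{k+2} \in \{v_{k+1}-1,\, v_{k+1}+1\}$ to be at least $1$, regardless of $\varepsilon_{k+2}$. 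Hence $b'_{k+1} \ot b'_{k+2}$ is a local minimum.

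If $\varepsilon_k = -1$, then $v_k = b_k+1 \leq 0$, and a further split is on the value of $b_k$. When $b_k = -1$, we have $v_k = 0$, so $\varepsilon_{k+1} = +1$ and $b'_{k+1} = 1$; moreover the sign constraint gives $v_{k-1} \leq 0$, so $b'_k = v_{k-1}-1 \leq -1$, and $b'_k \ot b'_{k+1}$ is immediately a local minimum. When $b_k \leq -3$, we have $v_k \leq -2$, and a parallel check in the two subcases $\varepsilon_{k+1} = \pm 1$ again yields $b'_{k+1} < 0$ together with $v_{k+1} \geq 2$, so $b'_{k+2} > 0$. If $b'_k < 0$ then $b'_k \ot b'_{k+1}$ is a local minimum; otherwise parity forces $b'_k > 0$ and $b'_{k+1} \ot b'_{k+2}$ is a local minimum.

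The main obstacle is mostly bookkeeping: confirming $v_{k+1} \geq 2$ whenever the local minimum is forced to shift one step rightward. This is what the parity constraints (ruling out $v+b=0$) and the explicit sub-case analysis above deliver; no ingredient beyond Lemma \ref{lem:july8_1} is needed.
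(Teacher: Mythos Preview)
Your approach is the same as the paper's --- a direct case analysis via Lemma~\ref{lem:july8_1} --- though you organize it by splitting first on $\varepsilon_k$, whereas the paper splits first on the sign of $v_k + b_{k+1}$ (your $\varepsilon_{k+1}$). The ingredients and the conclusions in each branch agree.

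There is, however, a slip in your last subcase ($\varepsilon_k = -1$, $b_k \leq -3$). You have correctly established $b'_{k+1} < 0$ and $b'_{k+2} > 0$ there, so $b'_{k+1} \ot b'_{k+2}$ is already a local minimum and the proof is finished. But you then write ``If $b'_k < 0$ then $b'_k \ot b'_{k+1}$ is a local minimum''; this is false, because you have just shown $b'_{k+1} < 0$, and a local minimum needs its second entry to be positive. The additional split on the sign of $b'_k$ is unnecessary and should simply be deleted: whenever $v_k \leq -2$ the local minimum lands at $b'_{k+1} \ot b'_{k+2}$, exactly as in the paper's argument.
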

\begin{proof}
Suppose $b_k \ot b_{k+1}$ is a local minimum, i.~e.~$b_k <0,\, b_{k+1}>0$.
\begin{equation}\label{eq:july9_1}
\batten{v_{k-1}}{b_k}{b'_k}{v_k} \!\!\! \batten{}{b_{k+1}}{b'_{k+1}}{v_{k+1}}\,
\batten{}{b_{k+2}}{b'_{k+2}}{v_{k+2}}
\end{equation}
First we assume $v_k + b_{k+1} < 0$.
Then by Lemma \ref{lem:july8_1}
we have $b'_{k+1} = v_k - 1 \leq b_k < 0$ and
$b'_{k+2} \geq v_{k+1} - 1 = b_{k+1} > 0$.
Hence $b'_{k+1} \ot b'_{k+2}$ is a local minimum.
Next we assume $v_k + b_{k+1} > 0$.
We claim if $v_k \leq -2$ (resp.~$v_k = 0$, which implies $b_k=-1$) then 
$b'_{k+1} \ot b'_{k+2}$ (resp.~$b'_{k} \ot b'_{k+1}$) is a local minimum.
In the former case we have $b'_{k+1} = v_k + 1 \leq -1$ and
$b'_{k+2} \geq v_{k+1} - 1 = b_{k+1}-2 \geq b_{k+1} + v_k > 0$, while
in the latter $b'_{k+1} = v_k + 1 = 1 > 0$ and
$b'_{k} = v_{k-1} - 1 = v_{k-1} + b_k < 0$.
So the claim is proved, yielding the desired result.
\end{proof}
\begin{coro}\label{cor:july10_3}
The number of local minima is preserved by $T_{s-1}$.
\end{coro}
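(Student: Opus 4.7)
The plan is to upgrade Proposition \ref{lem:may16_1} to a bijection $\phi: \mathcal{M}(p) \to \mathcal{M}(T_{s-1}(p))$, thereby proving that the number of local minima is preserved. The proof of the proposition assigns to each local minimum at position $k$ in $p$ a local minimum in $T_{s-1}(p)$ at either position $k$ (in the case $v_k = 0$) or position $k+1$ (in the cases with $v_k \leq -2$); I let $\phi$ denote this assignment and aim to show it is both injective and surjective.

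For injectivity, I would use that if $k \in \mathcal{M}(p)$ then $b_{k+1} > 0$, so the next local minimum $k' > k$ must satisfy $k' \geq k+2$. Since $\phi(k) \in \{k,k+1\}$, the images of two distinct local minima $k < k'$ lie in the disjoint sets $\{k,k+1\}$ and $\{k',k'+1\}$, so $\phi$ is injective.

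For surjectivity I would prove the converse of Proposition \ref{lem:may16_1}: if $b'_j \otimes b'_{j+1}$ is a local minimum of $T_{s-1}(p)$, then $p$ has a local minimum at position $j-1$ or $j$. Starting from $b'_j < 0$ and $b'_{j+1} > 0$, I would apply Lemma \ref{lem:july8_1} at positions $j$ and $j+1$ and branch on the signs of $v_{j-1}+b_j$ and $v_j+b_{j+1}$. The two key patterns are: (i) if $v_{j-1}+b_j > 0$, then $b'_j = v_{j-1}+1 < 0$ forces $v_{j-1} \leq -2$ and in turn $b_j \geq 3$; a further application of Lemma \ref{lem:july8_1} at position $j-1$ (with $v_{j-1} \leq -2$) forces $b_{j-1} < 0$, which combined with $b_j > 0$ gives a local minimum at $j-1$; (ii) if $v_{j-1}+b_j < 0$ and the resulting $v_j = b_j+1$ equals $0$, then $b_j = -1 < 0$ and $v_j + b_{j+1} = b_{j+1}$ must be $>0$ from $b'_{j+1} > 0$, giving a local minimum at $j$. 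The remaining mixed sub-cases reduce to one of these two patterns by the same sign-tracking. One checks that the local minimum of $p$ produced in each sub-case is indeed the $\phi$-preimage of the given one, which is automatic from matching the case conditions back to those in the proof of Proposition \ref{lem:may16_1}.

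The main obstacle is this surjectivity step: it is a backward case analysis that parallels (but does not literally dualize) the forward argument of Proposition \ref{lem:may16_1}, and one must carefully track the parity constraints (since $s$ is odd, every $b_i \in B_s$ has odd weight and every $v_i \in B_{s-1}$ has even weight, which in particular excludes the degenerate sign $v_{j-1}+b_j = 0$) in each sub-case. Once the converse is established, $\phi$ is a bijection and the corollary follows at once.
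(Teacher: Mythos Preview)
Your approach is correct, but it differs from the paper's proof. The paper avoids the surjectivity step entirely via a finiteness argument: once Proposition~\ref{lem:may16_1} (together with the spacing observation you use for injectivity) shows that the number of local minima cannot decrease under $T_{s-1}$, the fact that $T_{s-1}$ is invertible on the finite set $(B_s)^{\otimes L}$ forces equality---iterating $T_{s-1}$ along a finite orbit returns to the starting path, so a strict increase anywhere is impossible. This is a two-line argument, and the paper reuses the same device verbatim for Corollaries~\ref{cor:july10_2} and~\ref{cor:jun9_2}.

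Your route instead builds the explicit bijection $\phi$ by a backward case analysis mirroring the forward one in Proposition~\ref{lem:may16_1}. This is more work but also more informative: it produces a concrete correspondence without invoking global dynamical properties (finiteness, invertibility, periodicity of orbits). In fact the paper needs exactly this correspondence later---the ``first labeling scheme'' introduced in \S\ref{subsec:3_4} is nothing other than your map $\phi$---so your argument effectively front-loads and justifies a construction the paper takes for granted after the slick proof of the corollary.
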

\begin{proof}
Proposition \ref{lem:may16_1} implies that 
the number of local minima
does not decrease.
Since the time evolution is invertible and $(B_s)^{\ot L}$ is
a finite set, it does not increase also.
\end{proof}

\begin{example}\label{ex:july23_4}
We consider a path $p \in \mathcal{P}_{12,5}$, and show its
time evolution by $T_4$ in Fig.~\ref{fig:1} (weight notation).
In each time there are four 
local minima that are marked by parentheses.
In the word notation the first and the last rows
are written as follows
{\footnotesize
\begin{align*}
p&=
(22111)(11111)(22222)(21111)(22222)(11111)
(22222)(22221)(22111)(22211)(21111)(22111), \\
T_4^5(p) &=
(11111)(21111)(22211)(21111)(22211)(22211)
(22211)(22111)(22211)(22222)(21111)(22222).
\end{align*}
}
We note that the latter is a ballot sequence.
\begin{figure}
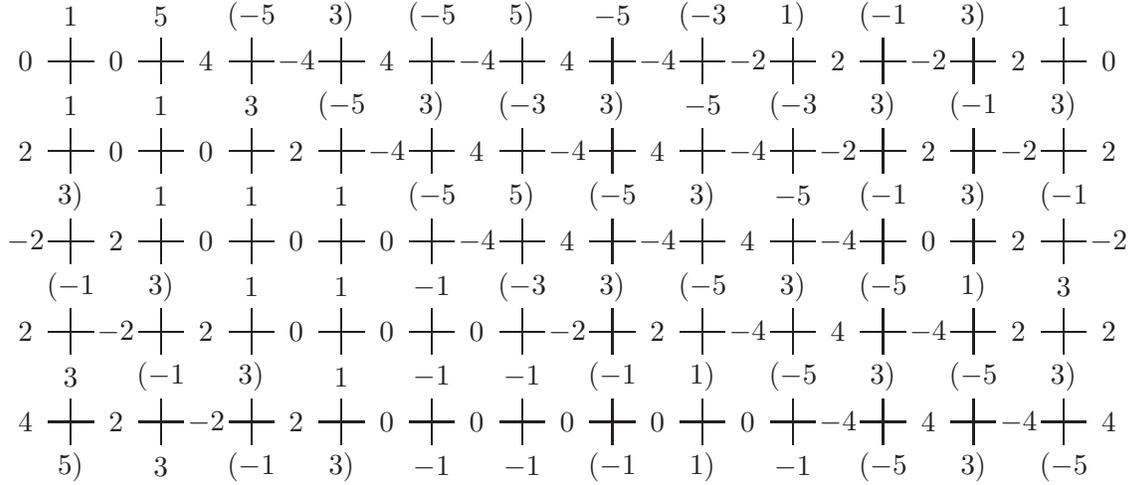

\begin{center}
\latticefig
\end{center}
\caption{A composite transition diagram for
the time evolution of a path 
in $\mathcal{P}_{12,5}$ by $T_4$.}
\label{fig:1}
\end{figure}
\end{example}

\subsection{Conservation of segments}
\label{subsec:3_3}
Given $a \in \Z_{>0}$ we define
\begin{displaymath}
\mathcal{A}(a) = 
\{ (\pm a \ot \mp a )^{\ot n} \,\vert \, n=1,2,\ldots \} \cup
\{ \mp a \ot (\pm a \ot \mp a )^{\ot n} \,\vert \, n=0,1,\ldots \},
\end{displaymath}
where the elements are regarded as the elements
of $(B_s)^{\ot m} \, (m=1,2,\ldots)$ 
written in the weight notation, and let
\begin{math}
\mathcal{A} = \bigcup_{0 \leq a \leq s} \mathcal{A}(a).
\end{math}
%
Given $p \in \mathcal{P}_{L,s}$ 
we can write it uniquely into the form
$p= D_1 \ot \cdots \ot D_{\tilde{L}} \, (\tilde{L} \leq L)$
in which 
the following conditions are satisfied.
\begin{enumerate}
\item Each of $D_i \, (1 \leq i \leq \tilde{L})$ is an
element of $\mathcal{A}$.
\item None of $D_i \ot D_{i+1} \, (1 \leq i \leq \tilde{L}-1)$
is an element of $\mathcal{A}$.
\end{enumerate}
We call {such $D_i$ \textbf{segments}}, except that 
if $D_{\tilde{L}} \ot D_1$ is an element of $\mathcal{A}$
then we regard it as a single segment $D_1$.
If a segment is an element of $\mathcal{A}(a)$ we call it
a level $a$ segment.
A level $s$ segment is called \textbf{maximal}, and
we say $D_i$ \textbf{singular} if its level is less than
$D_{i \pm 1}$'s 
and each of the junctions between $D_i$ 
and $D_{i \pm 1}$ forms an extremum.
\begin{example}\label{ex:jun9_1}
In the first row of Fig.~\ref{fig:1} there are eight segments,
$D_1=1, D_2=5 \ot -5, D_3=3, D_4 = -5 \ot 5 \ot -5,
D_5=-3, D_6=1 \ot -1, D_7=3$, and $D_8=1$.
Among them, $D_2$ and $D_4$ are maximal, while
$D_3$ and $D_6$ are singular segments.
\end{example}

In general the segment content of a path is not preserved by $T_{s-1}$.
However the number of maximal segments and that of
singular segments are preserved.
More precisely we have the following results.
Let the rightmost (resp. leftmost) element of a segment be called 
its \textbf{head} (resp. \textbf{tail}).
Given $p \in \mathcal{P}_{L,s}$ as above let $b_k' \, (1 \leq k \leq L)$ be
the $k$-th element of $T_{s-1}(p)$.
\begin{proposition}\label{lem:may16_2}
If $b_k$ is the head of a maximal segment,
then $b'_{k+1}$ is the tail of a maximal segment.
\end{proposition}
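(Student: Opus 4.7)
The plan is to work locally on the two consecutive columns of the transition diagram (\ref{eq:jun14_2}) at positions $k$ and $k+1$, using the explicit formula in Lemma \ref{lem:july8_1}. By the $b \mapsto -b$ symmetry of the weight notation for paths, it suffices to handle the case $b_k = s$; the case $b_k = -s$ is entirely analogous.

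A preliminary, non-local observation is that $b_{k+1} \neq -s$. Indeed, write $p = D_1 \ot \cdots \ot D_{\tilde L}$ using the unique segment decomposition, and let $D_i$ be the maximal segment whose head is $b_k = s$. Then $b_{k+1}$ is the tail of the subsequent segment $D_{i+1}$. Were $b_{k+1} = -s$, then $D_{i+1}$ would also lie in $\mathcal{A}(s)$ (since its first entry is $\pm s$, forcing all its entries to be $\pm s$), and the concatenation $D_i \ot D_{i+1}$ would be an alternating $\pm s$-pattern and hence itself belong to $\mathcal{A}(s)$, contradicting Condition 2 of the segment decomposition. Consequently $b_{k+1} \geq -s + 2$.

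Next, two direct applications of Lemma \ref{lem:july8_1} at positions $k$ and $k+1$ determine $b'_k, v_k, b'_{k+1}$. Because $v_{k-1} \in B_{s-1}$ gives $v_{k-1} \geq -(s-1)$, one has $v_{k-1} + b_k = v_{k-1} + s \geq 1 > 0$, so $b'_k = v_{k-1} + 1$ and $v_k = s - 1$. Then $v_k + b_{k+1} = (s-1) + b_{k+1} \geq 1 > 0$, so $b'_{k+1} = v_k + 1 = s$. In particular $b'_{k+1} = s$, and therefore the segment of $T_{s-1}(p)$ containing $b'_{k+1}$ must lie in $\mathcal{A}(s)$; i.e., it is maximal.

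It remains to verify that $b'_{k+1}$ is the tail (leftmost element) of this maximal segment rather than an interior element. From $b'_k = v_{k-1} + 1 \geq -s + 2$ we get $b'_k \neq -s$. For $b'_k$ to lie in the same maximal (level $s$) segment as $b'_{k+1} = s$, alternation would force $b'_k = -s$, which fails. Hence $b'_{k+1}$ begins a new segment, which is maximal, completing the proof. The only non-local step is the merging argument ruling out $b_{k+1} = -s$ in the second paragraph; everything else is a routine case check on Lemma \ref{lem:july8_1}.
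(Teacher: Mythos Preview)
Your proof is correct and follows essentially the same approach as the paper's: both argue that $b_{k+1}\neq -s$ (you spell this out via the segment decomposition's Condition~2, while the paper simply invokes that $b_k$ is the \emph{head}), then apply Lemma~\ref{lem:july8_1} twice to obtain $v_k=s-1$, $b'_{k+1}=s$, and $b'_k=v_{k-1}+1\neq -s$, from which the conclusion follows. The symmetry reduction to $b_k=s$ is also valid since the combinatorial $R$ in Lemma~\ref{lem:july8_1} is invariant under simultaneous negation of all weights.
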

\begin{proof}
Since $b_k$ is the head of a maximal segment, {we have}
$b_k \ot b_{k+1} = s \ot x$ with $x \ne -s$,
or $b_k \ot b_{k+1} = (-s) \ot y$ with $y \ne s$.
First we observe that
for any $a \in B_{s-1}$,
$a + s > 0$ and $a + (-s) < 0$ hold.
We also note that if $x \ne -s$ then $(s-1)+x > 0$, and
if $y \ne s$ then $(-s+1)+y < 0$.
Thus we have the following diagrams by Lemma \ref{lem:july8_1}.
\begin{displaymath}
\batten{a}{s}{a+1}{s-1} \,\,\,
\batten{}{x}{s}{x-1}
\qquad\qquad 
\batten{a}{-s}{a-1}{-s+1} \quad \,\,\,
\batten{}{y}{-s}{y+1}
\end{displaymath}
Since $a+1 \ne -s$ and $a-1 \ne s$ for any $a \in B_{s-1}$,
we obtain the desired result.
\end{proof}

\begin{coro}\label{cor:july10_2}
The number of maximal segments
is preserved by $T_{s-1}$. 
\end{coro}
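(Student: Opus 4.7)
The plan is to adapt essentially the same two-step argument used for Corollary \ref{cor:july10_3}: first use Proposition \ref{lem:may16_2} to get a ``non-decreasing'' direction, then invoke invertibility of $T_{s-1}$ on the finite set $\mathcal{P}_{L,s}$ to upgrade it to equality.

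More concretely, I would first argue that the map sending a maximal segment of $p$ to a maximal segment of $T_{s-1}(p)$ is well-defined and injective. Let $D$ be a maximal segment of $p$ with head at position $k$, so that $b_k \in \{s,-s\}$. By Proposition \ref{lem:may16_2}, $b'_{k+1}$ is the tail of some maximal segment $D'$ of $T_{s-1}(p)$, and I would associate $D'$ to $D$. Distinct maximal segments of $p$ have distinct head positions $k_1 \ne k_2$, so they produce distinct tail positions $k_1+1 \ne k_2+1$ in $T_{s-1}(p)$; since the tail of a segment determines the segment uniquely, the associated $D'$ are distinct. Consequently the number of maximal segments in $T_{s-1}(p)$ is at least that in $p$.

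To conclude equality, I would mimic the finiteness argument of Corollary \ref{cor:july10_3}. By the commutativity statement in Proposition \ref{prop:july23_4} (together with the fact that any $p \in \mathcal{P}_{L,s}$ remains $T_{s-1}$-evolvable in the strong sense, which is noted before the definition of $\mathcal{P}_{L,s}$ and will be formally recorded as Corollary \ref{cor:jun8_2}), $T_{s-1}$ acts as an invertible map on the finite set $\mathcal{P}_{L,s} \subset (B_s)^{\ot L}$. Therefore the orbit $\{ T_{s-1}^n(p) \}_{n \ge 0}$ is finite and periodic, and along it the integer-valued quantity ``number of maximal segments'' is monotone non-decreasing by the previous paragraph; hence it must be constant, which gives the desired conservation.

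The only step that requires any care is the injectivity claim, which is really just the observation that the correspondence $k \mapsto k+1$ on positions is a shift and that each tail lies in exactly one segment. There is no genuine obstacle, and I do not expect any subtlety beyond reusing the template of Corollary \ref{cor:july10_3}.
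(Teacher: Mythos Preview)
Your proposal is correct and follows essentially the same approach as the paper: the paper's proof simply says ``Similar to the proof of Corollary \ref{cor:july10_3}'', i.e., use Proposition \ref{lem:may16_2} to see the count cannot decrease, then invoke invertibility of $T_{s-1}$ together with finiteness of the phase space to conclude it cannot increase either. Your more explicit injectivity argument (distinct heads $\Rightarrow$ distinct tails $\Rightarrow$ distinct segments) is exactly the mechanism underlying that non-decreasing step.
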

\begin{proof}
Similar to the proof of Corollary \ref{cor:july10_3}.
\end{proof}
\begin{coro}\label{cor:jun8_2}
If $p  \in (B_s)^{\ot L}$ is
$T_{s-1}$-evolvable in a strong sense, so is $T_{s-1}(p)$.
\end{coro}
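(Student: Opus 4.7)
My plan is to verify in turn the two defining conditions of $\mathcal{P}_{L,s}$ for $T_{s-1}(p)$: (1) existence of an entry $\pm s$, and (2) being distinct from $p_\pm$. The first falls out from the conservation of maximal segments just established; the second I will handle by contradiction, by directly inverting the transition diagram (\ref{eq:jun14_2}) at $p_\pm$ using Lemma \ref{lem:july8_1}.

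For condition (1), the entries of magnitude $s$ in a path are exactly the constituents of its maximal segments: every element of $\mathcal{A}(s)$ is an alternating sequence of $\pm s$'s, and conversely an entry of magnitude $s$ cannot lie in any segment of lower level $a<s$, whose constituents all have magnitude $a$. Hence $p\in\mathcal{P}_{L,s}$ possesses at least one maximal segment, and by Corollary \ref{cor:july10_2} so does $T_{s-1}(p)$, supplying the required entry $\pm s$.

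For condition (2), suppose for contradiction $T_{s-1}(p)=p_+$, so $b_k'=(-1)^{k+1}s$ for all $k$ in the notation of (\ref{eq:jun14_2}). Reading the $k=1$ column via Lemma \ref{lem:july8_1}: since $|v_0|\le s-1$, attaining $b_1'=s$ rules out the $a+b<0$ branch and forces $v_0=s-1$, $v_1=b_1-1$. The $k=2$ column with $b_2'=-s$ then forces $v_1=-s+1$, which pins $b_1=-s+2$ and gives $v_2=b_2+1$. A straightforward induction yields $v_{2k}=s-1$, $v_{2k+1}=-s+1$, $b_{2k-1}=-s+2$, $b_{2k}=s-2$ throughout, with the closure $v_L=v_0$ automatic for even $L$. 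Hence every entry of $p$ has magnitude $s-2<s$, contradicting condition (1) for $p$; the case $p_-$ is entirely symmetric. The whole proof is short, and the only substantive step is this columnwise inversion, which is routine because Lemma \ref{lem:july8_1} is a dichotomy and at each step only one branch keeps the auxiliary variable inside $B_{s-1}$.
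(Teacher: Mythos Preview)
Your proof is correct. For condition (1) you invoke conservation of maximal segments, which is precisely the content the paper has placed immediately before this corollary; this part matches the paper's intended argument.

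For condition (2) your route differs from the paper's. The paper leaves the corollary unproved because both conditions are meant to fall out of Proposition \ref{lem:may16_2} together: since $p\in\mathcal{P}_{L,s}$, some $b_k$ is the head of a maximal segment (this is exactly where $p\ne p_\pm$ enters), and the proof of Proposition \ref{lem:may16_2} shows not only $b'_{k+1}=\pm s$ but also $b'_k=a\pm1$ with $a\in B_{s-1}$, so $|b'_k|\ne s$. Thus $T_{s-1}(p)$ simultaneously contains an entry of magnitude $s$ and one of smaller magnitude, settling both conditions at once. Your approach instead works backwards: assuming $T_{s-1}(p)=p_\pm$, you invert the transition columnwise via Lemma \ref{lem:july8_1} and discover that the unique preimage is the alternating $\pm(s-2)$ path, which lies outside $\mathcal{P}_{L,s}$. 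This is a perfectly valid self-contained computation, with the mild bonus that it identifies the exceptional preimage explicitly; the paper's route is shorter only because the relevant inequality $|b'_k|\ne s$ was already established in the preceding proposition. One stylistic remark: citing Proposition \ref{lem:may16_2} directly (rather than Corollary \ref{cor:july10_2}) for condition (1) would avoid the appearance of circularity, since counting maximal segments in $T_{s-1}(p)$ tacitly presupposes that its segment decomposition is well-defined.
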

\begin{example}\label{ex:july18_3}
Consider the time evolution in Example \ref{ex:july23_4}.
Here we use $\dot{1},\dot{3},\dot{5}$ instead of
$-1,-3,-5$.
\begin{align*}
p &= 1 \ot \underline{5 \ot \dot{5}} \ot 3 \ot 
\underline{\dot{5} \ot 5 \ot \dot{5}} 
\ot \dot{3} \ot 1 \ot \dot{1} \ot 3 \ot 1  \\
T_4(p) &= 1 \ot 1 \ot 3 \ot \underline{\dot{5}} \ot 3 \ot \dot{3} \ot
3 \ot \underline{\dot{5}} \ot \dot{3} \ot 3 \ot \dot{1} \ot 3  \\
T_4^2(p) &= 3 \ot 1 \ot 1 \ot 1 \ot 
\underline{\dot{5} \ot 5 \ot \dot{5}} 
\ot 3 \ot \underline{\dot{5}} \ot \dot{1} \ot 3 \ot \dot{1}  \\
T_4^3(p) &= \dot{1} \ot 3 \ot 1 \ot 1 \ot \dot{1} \ot \dot{3} \ot
3 \ot \underline{\dot{5}} \ot 3 \ot 
\underline{\dot{5}} \ot 1 \ot 3  \\
T_4^4(p) &= 3 \ot \dot{1} \ot 3 \ot 1 \ot \dot{1} \ot \dot{1} \ot
\dot{1} \ot 1 \ot \underline{\dot{5}} \ot 3 \ot 
\underline{\dot{5}} \ot 3  \\
T_4^5(p) &= \underline{5} \ot 3 \ot \dot{1} \ot 3 \ot \dot{1} \ot \dot{1} \ot
\dot{1} \ot 1 \ot \dot{1} \ot 
\underline{\dot{5}} \ot 3 \ot 
\underline{\dot{5} } 
\end{align*}
The maximal segments are marked by underlines.
In $T_4^5(p) $ the combination of rightmost $\dot{5}$ and
leftmost $5$ is understood to be
a single maximal segment $\dot{5} \ot 5$.
\end{example}
We show that the number of singular segments is also preserved.
For this purpose we first give {definitions} and a lemma.
Say $D_i$ {be}
\textbf{left singular} 
{(resp.~\textbf{right singular})}
if its level is less than 
$D_{i-1}$'s 
{(resp.~$D_{i+1}$'s )}
and the junction between $D_{i-1}$ 
{(resp.~$D_{i+1}$)}
and $D_i$ forms an extremum.
Let $p = b_1 \ot \cdots \ot b_L \in (B_s)^{\ot L}$ be 
a $T_{s-1}$-evolvable path and $v_i \in B_{s-1} \, (0 \leq i \leq L)$ 
the elements appeared in the transition diagram (\ref{eq:jun14_2}).
\begin{lemma}\label{lem:jun14_1}
The element $b_k$ belongs to a left singular segment 
if and only if
the element $v_k$ 
is given by $v_k = b_k+1$ for
$b_k > 0$, or $v_k = b_k-1$ for $b_k < 0$.
\end{lemma}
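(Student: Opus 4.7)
The plan is to translate both sides of the claimed equivalence into conditions on the quantities produced by the combinatorial $R$ formula of Lemma~\ref{lem:july8_1}, and then compare them. The starting observation is that, for $b_k$ lying inside a segment $D$ of level $c$ (so $|b_k|=c$), Lemma~\ref{lem:july8_1} forces $v_k$ to have the same sign as $b_k$ with magnitude either $c-1$ or $c+1$; I will call these the \emph{normal} and \emph{excited} modes. The statement of the lemma is precisely that $b_k$ lies in a segment that is in excited mode.

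The first key step is to show that the mode is an invariant of the segment $D$: once $v$ enters $D$ in, say, normal mode, it remains in normal mode until the head of $D$. This is immediate from Lemma~\ref{lem:july8_1}, since consecutive entries of $D$ differ in sign, so $v_k+b_{k+1}$ equals $-\mathrm{sgn}(b_k)$ in normal mode and $+\mathrm{sgn}(b_k)$ in excited mode; in both cases this reproduces the same mode at step $k+1$. The second key step is to relate the mode of $D_i$ to its predecessor $D_{i-1}$. Applying the mode-invariance to $D_{i-1}$ (of level $c'$, with head of sign $\delta$), the value $v_{j-1}$ that enters the tail $b_j=\epsilon a$ of $D_i$ has sign $\delta$ and magnitude $c'\pm 1$. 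A short sign analysis of $v_{j-1}+b_j$ shows that $D_i$ enters excited mode if and only if $\delta=-\epsilon$ (so the junction is an extremum) and $|v_{j-1}|>a$, i.e., $c'>a$. These two conditions are exactly the definition of $D_i$ being left singular, which proves the equivalence.

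The main obstacle I expect is the borderline case $v_{j-1}+b_j=0$, where Lemma~\ref{lem:july8_1} does not directly apply and $v_j$ must be computed from (\ref{eq:jun5_1}); this case requires $c'-a=\pm 1$. Fortunately, in the odd cell capacity case treated here, all segment levels are odd, so any two distinct levels differ by at least $2$, and this parity fact eliminates the borderline entirely. A closely related subtlety is the potential case $c'=a$ with an extremum junction, which would threaten the ``only if'' direction by producing excited mode without left singularity; however, that configuration would make $D_{i-1}\ot D_i$ alternate at level $a$ and hence lie in $\mathcal{A}(a)$, contradicting condition (2) of the unique segment decomposition. The odd-$s$ parity argument together with this segment-uniqueness observation are what I expect to require the most careful bookkeeping.
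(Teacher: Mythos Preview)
Your proof is correct and follows essentially the same strategy as the paper's: both establish that the ``mode'' is constant along a segment (the paper's diagram~(\ref{eq:jun11_2})) and then determine the mode at the tail $b_j$ by computing the sign of $v_{j-1}+b_j$ from the head $b_{j-1}$ of the preceding segment. Your version is in fact more explicit than the paper's, which hides the parity argument and the exclusion of $c'=a$ inside the single inequality $b_{j-1}\ge -a+2$ (resp.\ $b_{j-1}\le a-2$) together with the crude bound $v_{j-1}\in\{b_{j-1}-1,b_{j-1}+1\}$.
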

\begin{proof}
Let $D$ be the segment to which $b_k$ belongs, and $a$ the level of $D$.
To prove the \textit{only if} direction
suppose $D$ is left singular.
Then it is enough
to show that the pattern
$\cdots \ot a \ot -a \ot a \ot -a \ot \cdots$
in $D$ becomes
$\cdots \ot -a-2 \ot a+2 \ot -a-2 \ot a+2 \ot \cdots$ by $T_{s-1}$, 
yielding the transition diagram
\begin{equation}\label{eq:jun11_2}
\batten{\cdots}{a}{-a-2}{a+1}\,\,\,
\batten{}{-a}{a+2}{-a-1}\,\,\quad
\batten{}{a}{-a-2}{a+1}\,\,\,
\batten{}{-a}{a+2}{\cdots .}
\end{equation}
This assertion is proved as follows.
Suppose $b_j = \mp a$ is the tail of $D$.
\begin{equation}\label{eq:jun11_3}
\batten{}{b_{j-1}}{}{v_{j-1}} \,
\batten{}{\mp a}{b'_j}{v_j} \!\!\! \batten{}{\pm a}{b'_{j+1}}{\cdots .}\,
\end{equation}
It suffices to show $v_j = \mp (a+1)$.
If $b_j = - a$ then $b_{j-1} \geq a+2$ since $D$ is left singular.
We have $v_{j-1} \geq b_{j-1}-1 \geq a+1$, 
i.~e.~$v_{j-1} + b_j  >0$
which implies $v_j =  -a-1$ by Lemma \ref{lem:july8_1}.
In the same way
if $b_j = a$ then $b_{j-1} \leq -a-2$, yielding $v_j = a+1$.

To prove the \textit{if} direction suppose $D$ is not left singular.
For instance suppose $b_{j-1} \geq -a+2$ and $b_j = a > 0$
in (\ref{eq:jun11_3}).
Then we have $v_{j-1} \geq b_{j-1}-1 \geq -a+1$, 
i.~e.~$v_{j-1} + b_j \geq 1$
which implies $v_j = a-1$.
Similarly if $b_j = -a$ and $b_{j-1} \leq a-2$ then we have
$v_j = -a+1$.
\end{proof}
\begin{remark}
Lemma \ref{lem:jun14_1} becomes irrelevant for $b_k=0$ in the
case of even $s$.
In this case we have Item 4 of Proposition \ref{prop:july18_2}
instead.
\end{remark}
\begin{proposition}\label{lem:jun8_1}
If $b_k$ is the head of a singular segment,
then $b'_{k+1}$ is the tail of a singular segment of the same level.
\end{proposition}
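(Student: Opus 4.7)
I would proceed in three phases, assuming without loss of generality that $b_k = a > 0$ (the case $b_k = -a$ is symmetric under a sign flip).

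First I would compute $b'_{k+1}$ and confirm the tail property. Since $D_i$ is in particular left singular, Lemma \ref{lem:jun14_1} gives $v_k = a+1$; right singularity of $D_i$ forces $b_{k+1}$ to be the tail of $D_{i+1}$ with $b_{k+1} = -c$ and $c \geq a+2$ (odd levels differ by at least $2$). Applying Lemma \ref{lem:july8_1} gives $b'_{k+1} = v_k - 1 = a$ and $v_{k+1} = -(c-1)$. Reusing Lemma \ref{lem:jun14_1}, the relation $v_k = a+1$ forces the case $v_{k-1}+b_k < 0$ in Lemma \ref{lem:july8_1}, so $v_{k-1}$ is even and at most $-(a+1)$, whence $b'_k = v_{k-1}-1 \leq -(a+2)$. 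Hence $|b'_k| > a$, ensuring $b'_k \otimes b'_{k+1} \notin \mathcal{A}(a)$, so that $b'_{k+1}$ is the tail of a new level-$a$ segment $\widetilde{D}$ whose left junction is a local minimum to a neighbor of level $\geq a+2$.

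Next I would establish right singularity by iterating Lemma \ref{lem:july8_1} from $v_{k+1}$ through $D_{i+1}$ and a little beyond. When $c > a+2$, the next step already produces an element of magnitude $\geq a+2$ with sign opposite to $b'_{k+1}$, so $\widetilde{D} = \{a\}$ and singularity is immediate. When $c = a+2$, the alternation inside $D_{i+1}$ propagates and $\widetilde{D}$ absorbs all of $D_{i+1}$ as the pattern $a, -a, a, \ldots$; a finite case analysis at the $(D_{i+1}, D_{i+2})$ junction — splitting on whether that junction is an extremum in $p$ and on the level $c'$ of $D_{i+2}$ — then shows that within at most one further position one reaches an element of magnitude $\geq a+2$ with opposite sign, yielding the right half of singularity.

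The hardest step will be the $c = a+2$ sub-case of the right-singularity analysis. Here $\widetilde{D}$ swallows $D_{i+1}$ in its entirety, and to bound its right extent one must peek into $D_{i+2}$. The decisive structural input is the uniqueness of the segment decomposition, which forbids $D_{i+1}$ and $D_{i+2}$ from both having level $a+2$ with an extremum junction between them (they would have been merged into a single element of $\mathcal{A}(a+2)$), thereby ruling out any runaway extension of $\widetilde{D}$ and forcing it to terminate within a controlled window.
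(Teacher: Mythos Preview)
Your proof is correct and follows essentially the same route as the paper's. The one difference is cosmetic: you invoke Lemma~\ref{lem:jun14_1} directly to obtain $v_k=a+1$ and then deduce $b'_k\le -(a+2)$ from the forced case $v_{k-1}+b_k<0$, whereas the paper splits on whether the singular segment has length one before reaching the same conclusion; the subsequent right-singularity analysis (the split on $c>a+2$ versus $c=a+2$, and in the latter case the termination check at the $D_{i+1}$--$D_{i+2}$ junction using that two adjacent level-$(a+2)$ segments with an extremum between them would merge) mirrors the paper's argument exactly.
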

\begin{proof}
Suppose $b_k = a (\ne s)$ is the head of a level $a$ singular segment $D$.
(The case of $b_k = -a$ can be proved in a similar way.)
{Then $b_{k+1} \leq -a-2$ since $D$ is right singular.}

\begin{equation}\label{eq:jun11_1}
\batten{}{b_{k-1}}{}{v_{k-1}} \,
\batten{}{a}{b'_k}{v_k} \!\!\! \batten{}{b_{k+1}}{b'_{k+1}}{v_{k+1}}\,
\batten{}{}{b'_{k+2}}{}
\end{equation}
To begin, we show that the segment to which $b'_{k+1}$ 
belongs is left singular {and level $a$}.
More precisely we claim that $b_k' \leq -a-2$ and $b'_{k+1}=a$.
{(1) Suppose} the length of $D$ is one.
Then $b_{k-1} \leq -a-2$ since $D$ is {left} singular.
{It follows that} $v_{k-1} \leq b_{k-1}+1 \leq -a-1$, 
i.~e.~$v_{k-1} + b_k \leq -1$
which implies $v_k = a+1$ and $b_k' = v_{k-1}-1 \leq -a-2$
by Lemma \ref{lem:july8_1}.
Hence the first claim is confirmed.
{In addition,} we have $v_{k} + b_{k+1} \leq -1$
which implies $b_{k+1}' = v_{k}-1 =a$, hence the second one.
%
{(2) Suppose}
the length of $D$ is more than one.
Then $b_{k-1}=-a$.
By Lemma \ref{lem:jun14_1} we have $b_k'=-a-2$ and $v_k=a+1$,
which implies $b'_{k+1}=a$ as above.
Hence the claim is proved.

Returning to diagram (\ref{eq:jun11_1}),
denote by $D'$ the level $a$ segment in $T_{s-1}(p)$
{to which} $b_{k+1}'(=a)$ {belongs} as its tail.
We are to show that $D'$ is singular.
Since its left singularity {has been} proved,
it remains to show 
{its right singularity.}
{(1) Suppose} $b_{k+1} \leq -a-4$.
Then $b'_{k+2} \leq v_{k+1}+1 \leq b_{k+1}+2 \leq -a-2$.
Hence $b_{k+1}'$ {is} the head of $D'$ and 
{$D'$ is right singular}.
{(2) Suppose} $b_{k+1} = -a-2$.
Since $v_k=a+1$ we have the following diagram.
\begin{equation}\label{eq:jun12_1}
\batten{\cdots}{-a-2}{a}{-a-1}\,\,\quad
\batten{}{a+2}{-a}{a+1}\,\,\,
\batten{}{-a-2}{a}{-a-1}\,\,\quad
\batten{}{a+2}{-a}{\cdots}
\end{equation}
The upper row shows a level $a+2$ segment.
Suppose its head is given by $b_l = a+2$.
(The case of $b_l = -a-2$ can be proved in a similar way.)
It is depicted as follows.
\begin{equation}\label{eq:jun12_2}
\batten{}{a+2}{-a}{a+1} \,\,\, 
\batten{}{b_{l+1}}{b'_{l+1}}{v_{l+1}}\,
\batten{}{b_{l+2}}{b'_{l+2}}{}
\end{equation}
There are two possible cases.
(i) If $b_{l+1} \leq -a-4$ then $(a+1)+b_{l+1} <0$ which implies
$b'_{l+1}=a$.
Also it follows that
$b'_{l+2} \leq v_{l+1}+1 \leq b_{l+1}+2 \leq -a-2$.
Hence $b'_{l+1}(=a)$ {is} the head of $D'$.
(ii) If $b_{l+1} \geq -a$ then $(a+1)+b_{l+1} >0$ which implies
$b'_{l+1}=a+2$.
Hence $b'_{l}(=-a)$ {is} the head of $D'$.
{In both cases $D'$ is right singular.}
The proof is completed.
\end{proof}

\begin{coro}\label{cor:jun9_2}
The number of singular segments of each level
is preserved by $T_{s-1}$. 
\end{coro}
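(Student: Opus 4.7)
The plan is to imitate the proofs of Corollaries \ref{cor:july10_3} and \ref{cor:july10_2}, now with Proposition \ref{lem:jun8_1} playing the role of the propagation step. For each integer $a$ with $1 \leq a \leq s$, let $N_a(p)$ denote the number of singular segments of level $a$ in $p$; the goal is to show $N_a(p) = N_a(T_{s-1}(p))$.

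First I would establish the one-sided inequality $N_a(p) \leq N_a(T_{s-1}(p))$. Every level-$a$ singular segment $D$ in $p$ has a unique head $b_k$, and Proposition \ref{lem:jun8_1} guarantees that $b'_{k+1}$ is then the tail of some singular segment $D'$ in $T_{s-1}(p)$ of the \emph{same} level $a$. Distinct $D$ correspond to distinct heads, hence to distinct positions $k+1$, and each position is the tail of at most one segment, so the assignment $D \mapsto D'$ is an injection from level-$a$ singular segments of $p$ into level-$a$ singular segments of $T_{s-1}(p)$.

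Next I would close the argument by invertibility. By Corollary \ref{cor:jun8_2} the operator $T_{s-1}$ acts on the finite set $\mathcal{P}_{L,s}$, and it is invertible there (the transition diagram (\ref{eq:jun14_2}) can be reversed, or simply: $T_{s-1}$ permutes a finite set). Each $p$ therefore lies in a finite $T_{s-1}$-orbit, so there exists $n \geq 1$ with $T_{s-1}^n(p) = p$. Iterating the one-sided inequality along this orbit gives
\[
N_a(p) \leq N_a(T_{s-1}(p)) \leq \cdots \leq N_a(T_{s-1}^n(p)) = N_a(p),
\]
forcing equality at every stage, and in particular at the first step.

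I do not expect a substantive obstacle here: all the real content --- producing a singular segment on the output side and verifying that its level matches --- has already been carried out in Proposition \ref{lem:jun8_1}. What remains is only the standard ``injection into a bijective image of a finite set'' bookkeeping, essentially identical to the short arguments used in the two preceding corollaries.
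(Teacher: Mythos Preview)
Your proof is correct and follows essentially the same approach as the paper: the paper's own proof simply reads ``Similar to the proof of Corollary \ref{cor:july10_3},'' i.e., use Proposition \ref{lem:jun8_1} to get non-decrease and then invoke invertibility on a finite set to conclude equality. Your write-up is just a more explicit version of that argument, with the injection spelled out and the finite-orbit reasoning made precise.
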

\begin{proof}
Similar to the proof of Corollary \ref{cor:july10_3}.
\end{proof}
\begin{example}\label{ex:july18_4}
Consider the time evolution in Example \ref{ex:july23_4}.
In each time there are two singular segments,
one is level $3$ and the other is level $1$.
\begin{align*}
p &= 1 \ot 5 \ot \dot{5} \ot \ul{3} \ot \dot{5} \ot 5 \ot
\dot{5} \ot \dot{3} \ot \ul{1 \ot \dot{1}} \ot 3 \ot 1  \\
T_4(p) &= 1 \ot 1 \ot 3 \ot \dot{5} \ot 
\ul{3 \ot \dot{3} \ot 3} 
\ot \dot{5} \ot \dot{3} \ot 3 \ot \ul{\dot{1}} \ot 3  \\
T_4^2(p) &= 3 \ot 1 \ot 1 \ot 1 \ot \dot{5} \ot 5 \ot
\dot{5} \ot \ul{3} \ot \dot{5} \ot \dot{1} \ot 3 \ot \ul{\dot{1}}  \\
T_4^3(p) &= \ul{\dot{1}} \ot 3 \ot 1 \ot 1 \ot \dot{1} \ot \dot{3} \ot
3 \ot \dot{5} \ot \ul{3} \ot \dot{5} \ot 1 \ot 3  \\
T_4^4(p) &= 3 \ot \ul{\dot{1}} \ot 3 \ot 1 \ot \dot{1} \ot \dot{1} \ot
\dot{1} \ot 1 \ot \dot{5} \ot \ul{3} \ot \dot{5} \ot 3  \\
T_4^5(p) &= 5 \ot 3 \ot \ul{\dot{1}} \ot 3 \ot \dot{1} \ot \dot{1} \ot
\dot{1} \ot 1 \ot \dot{1} \ot \dot{5} \ot \ul{3} \ot \dot{5} 
\end{align*}
Here the singular segments are marked by underlines.
\end{example}

\subsection{Floating minima and fixed minima}
\label{subsec:3_4}
We show that all of the local minima are classified into
two types according to behaviors of their modified weight values
under the time evolution.
Say a local minimum be \textbf{touched} by a singular segment
{when} it is located inside the singular segment, or at the junction between 
the singular segment and its left or right neighboring segment.
By definition any singular segment touches at least one local minimum
and no local minima are touched by two singular segments.
Then every local minimum is
either a \textbf{floating minimum} or a
\textbf{fixed minimum}.
Say it floating {when} it is the rightmost local
minimum touched by a singular segment.
Otherwise say it fixed.
{}From Corollaries  \ref{cor:july10_3} and
\ref{cor:jun9_2} we have the following.
\begin{proposition}\label{prop:july10_4}
For any $p \in \mathcal{P}_{L,s}$
the number of floating minima 
and that of fixed minima in $p$ are preserved by $T_{s-1}$. 
\end{proposition}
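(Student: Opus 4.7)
The plan is to deduce the proposition by combining the two conservation laws already established in this section: Corollary \ref{cor:july10_3} for the total number of local minima and Corollary \ref{cor:jun9_2} for the number of singular segments at each level. The bridge is a bijection identifying the floating minima with the singular segments; once that is set up, the statement for fixed minima falls out by subtraction.

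First I would show that the assignment sending each singular segment $D$ to the rightmost local minimum that it touches is a well-defined bijection from the set of singular segments of $p$ onto the set of floating minima of $p$. Well-definedness uses the observation made just before the proposition that every singular segment touches at least one local minimum; surjectivity is the very definition of a floating minimum. For injectivity I would argue by disjointness: the local minima touched by a given singular segment all lie in the short contiguous range bounded by the two junctions of that segment, so two distinct singular segments could share a touched local minimum only if that local minimum straddles the common junction between two adjacent singular segments.

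The one point that genuinely needs an argument is that two singular segments are never adjacent. If $D_i$ and $D_{i+1}$ were both singular, the definition of singular would force the level of $D_i$ to be less than the level of $D_{i+1}$ and, simultaneously, the level of $D_{i+1}$ to be less than the level of $D_i$, which is impossible. Hence the sets of local minima touched by distinct singular segments are disjoint, and the claimed bijection is established.

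Granting the bijection, the number of floating minima in $p$ equals the total number of singular segments in $p$, which is preserved by $T_{s-1}$ upon summing Corollary \ref{cor:jun9_2} over levels. Combined with the preservation of the total number of local minima from Corollary \ref{cor:july10_3}, the number of fixed minima is preserved as the difference. I do not expect an obstacle beyond the adjacency check above, since all the substantive structural content has been dealt with in the preceding subsections; this proposition is essentially a counting consequence.
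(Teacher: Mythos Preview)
Your proposal is correct and matches the paper's approach: the paper's proof is the one-liner ``From Corollaries \ref{cor:july10_3} and \ref{cor:jun9_2} we have the following,'' relying on the assertion made just before the proposition that no local minimum is touched by two singular segments. You have simply made explicit the bijection between singular segments and floating minima and supplied the non-adjacency argument that the paper treats as evident.
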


\begin{example}\label{ex:july18_1}
Consider the time evolution in Example \ref{ex:july23_4}.
\begin{align*}
p &= 1 \ot 5 \ot \dot{5} \otb {3} \ot \dot{5} \otv 5 \ot
\dot{5} \ot \dot{3} \otn {1 \ot \dot{1}} \otb 3 \ot 1 \, (\ot) \\
T_4(p) &= 1 \ot 1 \ot 3 \ot \dot{5} \otv 
{3 \ot \dot{3} \otb 3} 
\ot \dot{5} \ot \dot{3} \otn 3 \ot {\dot{1}} \otb 3 \, (\ot) \\
T_4^2(p) &= 3 \ot 1 \ot 1 \ot 1 \ot \dot{5} \otv 5 \ot
\dot{5} \otb {3} \ot \dot{5} \ot \dot{1} \otn 3 \ot {\dot{1}} \, (\otb) \\
T_4^3(p) &= \dot{1} \otb 3 \ot 1 \ot 1 \ot \dot{1} \ot \dot{3} \otv
3 \ot \dot{5} \otb {3} \ot \dot{5} \otn 1 \ot 3 \, (\ot) \\
T_4^4(p) &= 3 \ot {\dot{1}} \otb 3 \ot 1 \ot \dot{1} \ot \dot{1} \ot
\dot{1} \otv 1 \ot \dot{5} \otb {3} \ot \dot{5} \otn 3 \, (\ot) \\
T_4^5(p) &= 5 \ot 3 \ot {\dot{1}} \otb 3 \ot \dot{1} \ot \dot{1} \ot
\dot{1} \otv 1 \ot \dot{1} \ot \dot{5} \otb {3} \ot \dot{5} \, (\otn)
\end{align*}
Here $\bullet$ denotes a floating minimum,
$\vee$ or $\nabla$ a fixed minimum.
In particular $\nabla$ denotes a \textit{lowest} fixed minimum
that will be defined in subsec.~\ref{subsec:3_6}.
\end{example}
\begin{remark}
These names (i.~e.~\textit{floating} 
or \textit{fixed}) are based on
a property of the modified weight difference.
We show that the modified 
weight difference (\ref{eq:jun21_5}) between 
two fixed minima 
does not change by $T_{s-1}$ (Proposition \ref{prop:may23_1}).
In contrast the modified weight difference 
between a fixed minimum and a floating one,
as well as between two floating minima in general, must change.
As was seen in the above example
their positions along the path are not fixed
for both types of local minima.
\end{remark}

Let $p \in \mathcal{P}_{L,s}$ be a path having
$M$ local minima in total.
Suppose the local minima are attached by labels,
$X_1,\ldots,X_M$ say.
Due to Corollary \ref{cor:july10_3} we can consider
one to one {correspondence} between the set of local minima
in $T_{s-1}(p)$ and the set of labels $\{ X_1,\ldots,X_M \}$.
This enables us to regard the local minima 
as distinguishable particles moving along a path.
We introduce two different types of such {correspondence}
which we call the first and the second labeling schemes.
\begin{description}
\item[First scheme]
For any $i$ we
attach label $X_i$ to the local minimum 
located at the same or the right 
neighboring site of the position of $X_i$ in $p$.
\item[Second scheme]
To begin with we apply the first scheme.
Then repeat the following procedure until all singular 
segments in $T_{s-1}(p)$ are treated.
(1) Pick one of the singular segments.
(2) Apply a left cyclic shift to the labels  
of the local minima touched by it.
\end{description}
\begin{example}
We consider the $p$ and $T_4(p)$ in Example \ref{ex:july18_1}.
In the latter their is a level $3$ singular segment 
touching two local minima.
So we exchange their names in the second labeling scheme.
\begin{align*}
p &= 1 \ot 5 \ot \dot{5} \stackrel{X_1}{\ot} 3 \ot \dot{5} \stackrel{X_2}{\ot} 5 \ot
\dot{5} \ot \dot{3} \stackrel{X_3}{\ot} 1 \ot 
\dot{1} \stackrel{X_4}{\ot} 3 \ot 1 \\
T_4(p) &= 1 \ot 1 \ot 3 \ot \dot{5} \stackrel{X_1}{\ot} 3 \ot 
\dot{3} \stackrel{X_2}{\ot}
3 \ot \dot{5} \ot \dot{3} \stackrel{X_3}{\ot} 3 \ot 
\dot{1} \stackrel{X_4}{\ot} 3 \quad \mbox{(first scheme)}\\
T_4(p) &= 1 \ot 1 \ot 3 \ot \dot{5} \stackrel{X_2}{\ot} 3 \ot 
\dot{3} \stackrel{X_1}{\ot}
3 \ot \dot{5} \ot \dot{3} \stackrel{X_3}{\ot} 3 \ot 
\dot{1} \stackrel{X_4}{\ot} 3 \quad \mbox{(second scheme)}
\end{align*}
\end{example}
Let $X_i$ be a floating minimum in $p$.
Then in $T_{s-1}(p)$ the local minimum 
called $X_i$ in the first labeling scheme
is not always floating.
According to Proposition \ref{lem:jun8_1}
it is rather the \textit{leftmost} local minimum touched by a
singular segment.
The cyclic shift in the second scheme changes
it into the \textit{rightmost} local minimum.
Thus the types (floating or fixed) of a local minimum is
preserved in the second scheme.

For the moment we adopt the first labeling scheme, and classify
the moves of the local minima into three types.
Given $a \in \{ 3,5,\ldots , s\}$ we fix $b \in \{ 1,3,\ldots ,a-2\}$.
By the time evolution $T_{s-1}$ we have
\begin{align}
& 
\batten{\mbox{\textbf{I.}}\qquad}{-x}{}{\,-b-1} \quad \,\,\, \batten{}{a}{-b}{\,a-1} \quad \! \batten{}{}{y}{}
\label{eq:apr11_1}
\\
&
\batten{\mbox{\textbf{II.}} \qquad}{-y}{}{\,-a+1} \quad \,\,\, \batten{}{b}{-a}{\,b+1} \quad \! \batten{}{}{x}{}
\label{eq:apr11_2}
\end{align}
where $x=b$ or $b+2$ and $y=a$ or $a-2$.
We write $(|b|,b')$ for the local minimum $b \ot b'$.
Let the move $(x,a) \mapsto (b,y)$ in (\ref{eq:apr11_1}) 
be called \textbf{type I} and
$(y,b) \mapsto (a,x)$ in (\ref{eq:apr11_2}) be \textbf{type II}.
There is another type of move such that
\begin{equation}\label{eq:apr11_3}
\batten{\mbox{\textbf{III.}} \qquad -c'+1}{-1}{-c'}{\,\, 0} 
\!\!\! \batten{}{c}{1}{c-1}
\end{equation}
where $c,c' \in \{ 1,3,\ldots,s\}$. 
Let $(1,c) \mapsto (c',1)$ in (\ref{eq:apr11_3}) be called \textbf{type III}.

\begin{remark}\label{rem:july24_1}
Let $b_k \ot b_{k+1}$ be a local minimum and recall (\ref{eq:july9_1}).
In the above classification the value of $v_k$ is also taken into
account.
We call the triple $( b_k,b_{k+1},v_k )$ a \textit{configuration}
of the local minimum.
If $v_k = 0$, then $b_k = -1$ and 
$b_{k+1} \in \{ 1,3,\ldots, s \}$.
So there are $\frac{s+1}{2}$ such configurations
that are classified into type III move.
If $v_k \leq -2$, then $b_k = v_k \pm 1$,
$v_k \in \{-2,-4,\ldots , -s+1 \}$, and
$b_{k+1} \in \{ 1,3,\ldots, s \}$.
So there are $\frac{s^2 - 1}{2}$ such configurations.
Half of them with the condition $v_k + b_{k+1}>0$
are classified into type I move, and the others
are into type II move.
Here we observed that there are $\frac{s(s+1)}{2}$ configurations
in total, while
in the case of even $s$ we will find there are 
$\frac{s(s+2)}{2}$ configurations (Remark \ref{rem:july24_2}).
\end{remark}
The following result,
which will be used in the next subsection 
(Proposition \ref{prop:may23_1}),
shows a relation between the types of
local minima and the types of their moves.
\begin{lemma}\label{lem:may16_3}
Let $b_k \ot b_{k+1}$ be a local minimum.
\begin{enumerate}
\item Suppose $v_k + b_{k+1} <0$.
Then $b_k \ot b_{k+1}$ is a floating (resp.~fixed) minimum
if and only if
$b'_{k+2} = v_{k+1} - 1$ (resp.~$b'_{k+2} = v_{k+1} + 1$).
\item Suppose $v_k + b_{k+1} >0$.
Then $b_k \ot b_{k+1}$ is a floating (resp.~fixed) minimum
if and only if
$b_{k} = v_{k} + 1$ (resp.~$b_{k} = v_{k} - 1$).
\end{enumerate}
\end{lemma}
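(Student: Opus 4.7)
The plan is to verify the two items by direct case analysis on the triple $(b_k, b_{k+1}, v_k)$, using Lemma \ref{lem:july8_1} to compute the transition data and Lemma \ref{lem:jun14_1} to translate between values of $v_k$ and singularity of adjacent segments. The key organizing observation is that the three move types I, II, III enumerated in equations (\ref{eq:apr11_1})--(\ref{eq:apr11_3}) exhaust all possible configurations at a local minimum: Type II is exactly the hypothesis of Item 1, while Types I and III together cover Item 2. Throughout I write $D_L$ for the segment of $p$ containing $b_k$ and $D_R$ for the one containing $b_{k+1}$ (the two coincide iff $|b_k| = b_{k+1}$).

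For Item 2 (where $v_k + b_{k+1} > 0$), since $b_k < 0$ Lemma \ref{lem:jun14_1} says $D_L$ is left singular iff $v_k = b_k - 1$, equivalently $b_k = v_k + 1$. In this floating-candidate subcase, the hypothesis $v_k + b_{k+1} > 0$ combined with $v_k = b_k - 1$ forces $b_{k+1} \geq |b_k| + 2$, so $D_L$ also has strictly smaller level than its right neighbor and its right junction $b_k \ot b_{k+1}$ is a local minimum (hence an extremum); thus $D_L$ is singular. Since all internal local minima of $D_L$ lie strictly to the left of $b_k$, our local minimum is the rightmost touched by $D_L$, i.e.\ floating. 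In the opposite subcase $b_k = v_k - 1$, the segment $D_L$ is not left singular, and a short check shows $D_R$ cannot be singular either (its level is forced $\geq |b_k|$, which rules out singularity relative to $D_L$), so our local minimum is not touched by any singular segment and is fixed. Type III ($b_k=-1$, $v_k=0$) falls automatically in the fixed subcase since then $b_k = v_k - 1$.

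For Item 1 (where $v_k + b_{k+1} < 0$), Lemma \ref{lem:july8_1} gives $b'_{k+1} = v_k - 1$ and $v_{k+1} = b_{k+1} + 1$, so $b'_{k+2} = v_{k+1} - 1$ is equivalent to $v_{k+1} + b_{k+2} < 0$, i.e.\ $b_{k+2} \leq -b_{k+1} - 2$. Applying Lemma \ref{lem:jun14_1} at position $k+1$ (using $b_{k+1} > 0$ and $v_{k+1} = b_{k+1}+1$), the segment $D_R$ is automatically left singular. I plan to show that the extra inequality $b_{k+2} \leq -b_{k+1}-2$ forces $D_R$ to be right singular as well (and in fact makes $b_{k+1}$ the rightmost element of $D_R$), so $D_R$ is singular and our local minimum is the rightmost one touched by $D_R$, hence floating. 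In the opposite case $b_{k+2} \geq -b_{k+1}$, either $D_R$ extends further to the right so our local minimum is a strictly interior local minimum of $D_R$, or the right junction of $D_R$ fails to be an extremum so $D_R$ is not singular at all; either way our local minimum is fixed.

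The main obstacle is the careful tracking of a few boundary configurations, in particular the case $|b_k| = b_{k+1}$, where $b_k$ and $b_{k+1}$ share one common segment rather than belonging to distinct adjacent segments, and the case where the singular segment has length greater than one so that its interior also contains local minima alternating with local maxima. Both are handled by applying Lemma \ref{lem:jun14_1} not only at $b_k$ through $v_k$ but also at $b_{k+1}$ through $v_{k+1}$: together these pin down the precise levels of $D_L$ and $D_R$ and the position of our local minimum relative to them, so that the ``rightmost touched'' condition becomes a finite check matching the stated combinatorial conditions on $(v_k, v_{k+1}, b'_{k+2}, b_k)$.
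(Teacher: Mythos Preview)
Your plan is essentially the paper's own proof: it too reduces the lemma to the trichotomy of move types I, II, III and uses Lemma~\ref{lem:jun14_1} at positions $k$ and $k+1$ to read off (left) singularity of the segments adjacent to the local minimum, then checks the ``rightmost touched'' condition. One small refinement: in your Item~1 fixed subcase the dichotomy ``$D_R$ extends'' versus ``right junction not an extremum'' omits the case $-b_{k+1} < b_{k+2} < 0$, where $b_{k+1}$ is the head of $D_R$ and the right junction \emph{is} an extremum (a local maximum) but $D_R$ still fails to be right singular because the right neighbour has smaller level; the paper covers this explicitly (``or the level of the segment to which $z$ belongs is $< b$''), and you should too.
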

\begin{proof}
In view of Remark \ref{rem:july24_1} we find
it is enough to show that a
local minimum is floating if and only if it takes 
type I or II move
with $x=b$ in (\ref{eq:apr11_1}),  (\ref{eq:apr11_2}).
First we consider the case of type I move.
Let $b_k \ot b_{k+1} = -x \ot a$.
(1) Suppose $x=b+2$.
By Lemma \ref{lem:jun14_1} we see that
neither $b_k$ nor $b_{k+1}$ belongs to a (left) singular segment.
Thus $b_k \ot b_{k+1}$ is touched by 
no singular {segments}, hence fixed.
(2) Suppose $x=b$.
By Lemma \ref{lem:jun14_1} and since $x < a$
we find that
$b_k = -b$ is the head of a singular segment.
Thus $b_k \ot b_{k+1}$ is the rightmost local minimum
touched by a singular segment, 
hence floating.

Next we consider the case of type II move.
\begin{equation}\label{eq:jun13_1}
\batten{}{-y}{}{\,-a+1} \quad \,\,\, \batten{}{b}{-a}{\,b+1} 
\quad \! \batten{}{z}{x}{}
\end{equation}
Let $b_k \ot b_{k+1} = -y \ot b$, and
denote by $D$ the segment to which $b_{k+1}(=b)$ belongs.
Then since $y \geq b$ the element {$b_k (= -y)$} cannot be the head of a
singular segment, while by Lemma \ref{lem:jun14_1} the
segment $D$ is left singular.
(1) Suppose $x=b+2$.
It implies $(b+1)+z >0$ in (\ref{eq:jun13_1}), 
i.~e.~$z \geq -b$.
If $z > -b$ 
then either $b \ot z$ is not an extremum or
the level of the segment to which $z$ belongs
is $< b$.
In any case $D$ is not singular.
On the other hand, if $z = -b$ then $D$ could be singular.
But even if so, $b_k \ot b_{k+1}$ cannot be the rightmost local minimum
touched by $D$.
Thus in both cases $b_k \ot b_{k+1}$ is a fixed minimum.
(2) Suppose $x=b$.
By Lemma \ref{lem:july8_1}
it implies $(b+1)+z <0$ in (\ref{eq:jun13_1}), 
i.~e.~$z \leq -b-2$.
Thus $D$ is singular,
and $b_k \ot b_{k+1}$ is the rightmost local minimum
touched by $D$.
Hence it is a floating minimum.

Finally we consider the case of type III move.
Let $b_k \ot b_{k+1} = -1 \ot c$.
By Lemma \ref{lem:jun14_1} we see that
neither $b_k$ nor $b_{k+1}$ belongs to a (left) singular segment.
Thus $b_k \ot b_{k+1}$ is touched by 
no singular segments, hence fixed.
\end{proof}
\subsection{Conservation of modified weight differences}
\label{subsec:3_5}
In Proposition \ref{prop:may30_1} we characterized the ballot sequences
by means of the modified weight.
The purpose of this subsection is to show that the
modified weight difference between two fixed minima is
preserved.
That will be completed in Propositions \ref{prop:may23_1} 
and \ref{rem:jul1_1}.
To establish them we first derive a result 
on the ``not modified" weight difference
in Lemma \ref{prop:apr12_1}.

Recall the diagram (\ref{eq:jun14_2}).
{Again} we adopt the first labeling scheme.
Let $b_j \ot b_{j+1} \, (X)$ and $b_k \ot b_{k+1} \, (Y)$ 
be two local minima which move to 
$b'_l \ot b'_{l+1} \, (X)$ and 
$b'_m \ot b'_{m+1} \, (Y)$,
where $l = j, j+1 $ and $m = k, k+1$ are assumed.
Under this setting we define the functions
$\wt_X (Y), \wt'_{X} (Y)$,
$\Wt_X (Y), \Wt'_{X} (Y)$ by
\begin{align}
\wt_X (Y) &= b_{j+1} + b_{j+2} + \cdots + b_{k},
\\
\wt'_{X} (Y) &= b'_{l+1} + b'_{l+2} + \cdots + b'_{m},\\
\Wt_X(Y) &= \wt_X (Y) + 
\varphi (b_k, b_{k+1},b_j, b_{j+1}),\label{eq:jun21_4}\\
\Wt'_{X}(Y) &= \wt'_{X} (Y) + 
\varphi (b_m', b_{m+1}',b_l', b_{l+1}'),
\end{align}
where
$\varphi (a,b,c,d) = \frac12 ( \min(-a,b) - \min(-c,d) )$.
Also we set $\Delta \wt_X (Y) = \wt'_{X} (Y) - \wt_X (Y) $
and $\Delta \Wt_X (Y) = \Wt'_{X} (Y) - \Wt_X (Y) $.

\begin{remark}
Since we are assuming the periodic boundary condition,
the index $i$ of $b_i$ is interpreted by modulo $L$.
Then (\ref{eq:jun21_4})
is equivalent to the modified weight difference (\ref{eq:jun21_5}).
\end{remark}

\begin{lemma}
\label{prop:apr12_1}
Under the above setting $\Delta \wt_X (Y)$ is given by Table \ref{tab:2}.

\begin{table}[h]
\begin{center}
	\begin{tabular}[h]{|c|c|c|c|}
		\hline
		& I & II & III\\
		\hline
		I & $0$ & $-2$ &  $-1$ \\
		\hline
		II & $+2$ & $0$ & $+1$ \\
		\hline
		III & $+1$ & $-1$ & $0$ \\
		\hline
	\end{tabular}
\end{center}
	\caption{$\Delta \wt_X (Y)$: The first column shows the types of move 
	for $X$ and the top row shows those for $Y$.}
	\label{tab:2}
\end{table}
\end{lemma}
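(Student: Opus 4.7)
The plan is to exploit a single conservation law at each vertex of the transition diagram (\ref{eq:jun14_2}): from (\ref{eq:jun5_1}) one has $\tilde a+\tilde b=a+b$, which gives $v_{i-1}+b_i=v_i+b'_i$, equivalently $b'_i-b_i=v_{i-1}-v_i$. Telescoping over any interval yields
\begin{equation*}
\sum_{i=l+1}^{m} b'_i - \sum_{i=l+1}^{m} b_i \;=\; v_l-v_m .
\end{equation*}
Since $\Delta\wt_X(Y)=\sum_{i=l+1}^{m}b'_i-\sum_{i=j+1}^{k}b_i$, this immediately reduces the lemma to identifying $l,m$ and computing the correction $\sum_{i=l+1}^{m}b_i-\sum_{i=j+1}^{k}b_i$ in each case.

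From the three move diagrams (\ref{eq:apr11_1})--(\ref{eq:apr11_3}), a local minimum shifts one step to the right under types I and II but is stationary under type III, so under the first labeling scheme $l=j+1$ (resp.\ $l=j$) when $X$ is of type I or II (resp.\ type III), and similarly $m=k+1$ or $m=k$ for $Y$. Substituting and rearranging in each of the four possible $(l,m)$ configurations, the boundary terms $b_{j+1}$ and $b_{k+1}$ combine cleanly with $v_{j+1}$ and $v_{k+1}$, and the expression splits additively as $\Delta\wt_X(Y)=\alpha(X)+\beta(Y)$ with no cross-terms, where
\begin{align*}
\alpha(X)&=\begin{cases} v_{j+1}-b_{j+1} & \text{if $X$ is of type I or II,}\\ v_j & \text{if $X$ is of type III,}\end{cases}\\
\beta(Y)&=\begin{cases} b_{k+1}-v_{k+1} & \text{if $Y$ is of type I or II,}\\ -v_k & \text{if $Y$ is of type III.}\end{cases}
\end{align*}

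The proof then finishes by reading the carrier values directly off the diagrams: in type I one has $b_{j+1}=a$ and $v_{j+1}=a-1$, giving $\alpha_I=-1$ and symmetrically $\beta_I=+1$; in type II, $b_{j+1}=b$ and $v_{j+1}=b+1$, so $\alpha_{II}=+1$ and $\beta_{II}=-1$; in type III the middle carrier is $v_j=v_k=0$, so $\alpha_{III}=\beta_{III}=0$. Adding $\alpha(X)+\beta(Y)$ in each of the nine combinations reproduces Table \ref{tab:2} entry by entry. The only delicate point is the endpoint bookkeeping across the four $(l,m)$ configurations, but the decomposition $\Delta=\alpha(X)+\beta(Y)$ makes this uniform, and because each $\alpha$- and $\beta$-value is independent of the other minimum, the nine cases reduce to just three calculations on each side.
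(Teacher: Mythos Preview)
Your proof is correct and uses the same underlying idea as the paper's: the vertex conservation $v_{i-1}+b_i=v_i+b'_i$ telescoped over the interval between the two minima, together with the explicit carrier values $v_j,v_{j+1},v_k,v_{k+1}$ read off from (\ref{eq:apr11_1})--(\ref{eq:apr11_3}). The only difference is organizational: the paper splits into four cases according to whether $X$ and $Y$ are of type III or not and writes the identity in the form $\Delta\wt_X(Y)=(v_j-b'_{j+1})-(v_{k+1}-b_{k+1})$ (equivalent to yours via the vertex relation at $j{+}1$), whereas your additive decomposition $\Delta\wt_X(Y)=\alpha(X)+\beta(Y)$ handles all nine entries uniformly and makes the separable structure explicit from the outset.
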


\begin{proof}
If both $X$ and $Y$ take type I or II move we have
\begin{equation}\label{eq:may25_1}
\batten{}{b_j}{}{\,v_j} \!\!\! \batten{}{b_{j+1}}{b'_{j+1}}{\,v_{j+1}}  \,\,
\batten{}{}{b'_{j+2}}{\cdots} \!\!
\batten{}{b_{k}}{}{\,v_k} \!\!\! \batten{}{b_{k+1}}{b'_{k+1}}{\,v_{k+1}} \,\,
\batten{}{}{b'_{k+2}}{} 
\end{equation}
where $v_j = b'_{j+1} \mp 1$ and $v_{k+1} = b_{k+1} \mp 1$.
Throughout this proof the upper (resp.~lower) sign is 
for type I (resp.~type II) move.
Since $v_j + \wt_X(Y)  + b_{k+1} = b'_{j+1} + \wt'_X(Y) + v_{k+1}$
we have $\Delta \wt_X(Y) = v_j -b'_{j+1} - (v_{k+1}-b_{k+1})$
which yields Table \ref{tab:2} except the last column and the last row.
If $X$ takes type III move and $Y$ takes type I or II move we have
\begin{equation}\label{eq:may25_2}
\batten{}{b_j}{b'_j}{\,0} \!\!\! \batten{}{b_{j+1}}{b'_{j+1}}{\,v_{j+1}}  \,\, 
\batten{}{}{}{\cdots} \!\!
\batten{}{b_{k}}{}{\,v_k} \!\!\! \batten{}{b_{k+1}}{b'_{k+1}}{\,v_{k+1}} \,\,
\batten{}{}{b'_{k+2}}{} 
\end{equation}
where $v_{k+1} = b_{k+1} \mp 1$.
Since $\wt_X(Y) + b_{k+1} = \wt'_X(Y) + v_{k+1}$
we have $\Delta \wt_X(Y) = b_{k+1}-v_{k+1} = \pm 1$.
If $X$ takes type I or II move and $Y$ takes type III move we have
\begin{equation}\label{eq:july13_6}
\batten{}{b_j}{}{\,v_j} \!\!\! \batten{}{b_{j+1}}{b'_{j+1}}{\,v_{j+1}}  \,\, 
\batten{}{}{b'_{j+2}}{\cdots} \!\!
\batten{}{}{}{\,v_{k-1}} \,\, \batten{}{b_{k}}{b'_k}{\,0} \!\!\! 
\batten{}{b_{k+1}}{b'_{k+1}}{v_{k+1}} 
\end{equation}
where $v_j = b'_{j+1} \mp 1$.
Since $v_j + \wt_X(Y) = b'_{j+1} + \wt'_X(Y)$
we have $\Delta \wt_X(Y) = v_j -b'_{j+1} = \mp 1$.
Finally if both $X$ and $Y$ take type III move we have the following diagram
which yields $\Delta \wt_X(Y) = 0$.
\begin{displaymath}
\batten{}{b_j}{b'_{j}}{\,0} \!\!\! \batten{}{b_{j+1}}{b'_{j+1}}{\,v_{j+1}}  \,\, 
\batten{}{}{}{\cdots} \!\!
\batten{}{}{}{\,v_{k-1}} \,\, \batten{}{b_{k}}{b'_k}{\,0} \!\!\! 
\batten{}{b_{k+1}}{b'_{k+1}}{} 
\end{displaymath}
\end{proof}
{}From Lemmas \ref{lem:may16_3} and \ref{prop:apr12_1}
we obtain the following result.
\begin{proposition}\label{prop:may23_1}
Let $X,Y$ be a pair of local minima in $p$.
If both $X$ and $Y$ are fixed minima, 
then $\Delta \Wt_X(Y) = 0$
(in the first labeling scheme).
\end{proposition}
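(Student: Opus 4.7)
My plan is to decompose $\Delta\Wt_X(Y) = \Delta\wt_X(Y)+\Delta\varphi$, obtain $\Delta\wt_X(Y)$ from Table \ref{tab:2} via Lemma \ref{prop:apr12_1}, compute $\Delta\varphi$ by using Lemma \ref{lem:may16_3} to pin down the shape of each fixed move, and verify that the two contributions cancel in every pairing.

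Write $D_X := \min(-b_j,b_{j+1})$ for the ``depth'' of $X$ and $D'_X := \min(-b'_l,b'_{l+1})$ for the depth of its image under $T_{s-1}$; define $D_Y, D'_Y$ analogously and set $\Delta D_Z := D'_Z - D_Z$ for $Z \in \{X,Y\}$. A direct rearrangement of $\varphi(a,b,c,d) = \tfrac12(\min(-a,b)-\min(-c,d))$ gives
\begin{displaymath}
\Delta\varphi \;=\; \tfrac12\bigl[\Delta D_Y - \Delta D_X\bigr],
\end{displaymath}
so the problem reduces to computing $\Delta D_Z$ for each of the three move types of a fixed minimum $Z$. Appealing to Lemma \ref{lem:may16_3} (whose final paragraph also handles type III), the fixed configurations in (\ref{eq:apr11_1}), (\ref{eq:apr11_2}), (\ref{eq:apr11_3}) are characterized by $x=b+2$, $x=b+2$, and every configuration, respectively. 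Reading the depths off these diagrams and using $b\le a-2$ (so $\min(b+2,a)=b+2$ and $\min(b,y)=b$ for $y\in\{a,a-2\}$), one finds $\Delta D = -2$, $+2$, $0$ for fixed moves of types I, II, III.

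Combining with Table \ref{tab:2} then confirms $\Delta\wt_X(Y)+\Delta\varphi=0$ in each of the nine pairings. For instance, if $X$ is type I and $Y$ is type II, the table gives $\Delta\wt_X(Y)=-2$ while $\Delta\varphi=\tfrac12(2-(-2))=2$; the other eight pairings are handled identically. The only real subtlety is the middle step: one must unambiguously match each fixed-minimum configuration flagged by Lemma \ref{lem:may16_3} to the correct row of the move diagrams (\ref{eq:apr11_1})--(\ref{eq:apr11_3}) so that $D_Z$ and $D'_Z$ can be extracted; once that bookkeeping is done, what remains is a $3\times 3$ arithmetic check mirroring Table \ref{tab:2}.
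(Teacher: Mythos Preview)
Your proof is correct and follows essentially the same approach as the paper. The paper likewise decomposes $\Delta\Wt_X(Y)$ into $\Delta\wt_X(Y)$ plus the difference of $\varphi$-terms, then uses Lemma~\ref{lem:may16_3} to pin down that a fixed minimum has $x=b+2$ in types I and II, computes the resulting $\varphi$-contribution as $-1,+1,0$ for types I, II, III (your $\tfrac12\Delta D_Z$), and observes that this exactly cancels Table~\ref{tab:2}; your ``depth'' notation is a cosmetic repackaging of the same computation.
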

\begin{proof}
It is easy to see that
\begin{displaymath}
\Delta \Wt_X(Y) = \Delta \wt_X(Y) + \varphi (b_m',b_{m+1}',b_k,b_{k+1})
-\varphi (b_l',b_{l+1}',b_j,b_{j+1}).
\end{displaymath}
Here the second term is associated with the move of $Y$ and the
third one with $X$.
Suppose $Y$ is a fixed minimum.
Recall that
in (\ref{eq:apr11_1}) and (\ref{eq:apr11_2}) 
it was assumed that $a \geq b+2$ and $y \geq b$.
Then by Lemma \ref{lem:may16_3}
$\varphi (b_m',b_{m+1}',b_k,b_{k+1})$ takes the value $-1$ if the move
is of type I (with $x=b+2$ in (\ref{eq:apr11_1})),
$+1$ if type II (with $x=b+2$ in (\ref{eq:apr11_2})), and $0$ if type III.
Clearly the same result holds for
$\varphi (b_l',b_{l+1}',b_j,b_{j+1})$ if $X$ is a fixed minimum.
Combining these results we obtain a table for the values of
$\varphi (b_m',b_{m+1}',b_k,b_{k+1})-\varphi (b_l',b_{l+1}',b_j,b_{j+1})$
which would agree with Table \ref{tab:2} (Lemma \ref{prop:apr12_1})
if all signs were changed.
\end{proof}
\begin{proposition}\label{rem:jul1_1}
The assertion in
Proposition \ref{prop:may23_1}
holds also in the second labeling scheme.
\end{proposition}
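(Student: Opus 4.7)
The plan is to show that for any pair of fixed minima $X,Y$ the quantity $\Wt_X(Y)$ is identical in the two labeling schemes; then Proposition \ref{prop:may23_1} applies verbatim to give $\Delta \Wt_X(Y) = 0$ also in the second scheme. The two schemes differ only at local minima touched by some singular segment of $T_{s-1}(p)$, where the second scheme permutes labels by a left cyclic shift, so the entire argument reduces to controlling what happens at those positions.

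The technical heart of the plan is a ``spectrum'' lemma: all local minima touched by a single singular segment $D'$ of level $a$ in $T_{s-1}(p)$ carry the same modified weight $w^*$. I would establish it directly from (\ref{eq:jun21_2}) and (\ref{eq:jun21_3}). Inside $D'$ consecutive entries sum to zero, so $w_k$ is constant equal to $w^*$ on the interior, and since $b_k+b_{k+1}=0$ at an interior local minimum the $\max$-correction in (\ref{eq:jun21_3}) vanishes, giving $W_k = w^*$. At a left boundary junction that is a local minimum the tail of $D'$ is $+a$ and the left neighbor is $-a'$ with $a'>a$; here $b_{j-1}+b_j<0$, so the correction again vanishes while $w_{j-1}=w_j=w^*$. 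At a right boundary junction that is a local minimum, $b_{j+L-1}+b_{j+L} = -a+a''>0$ and the correction $\tfrac12(a''-a)$ precisely cancels the $w$-increment across the junction, yielding $W=w^*$ once more. I expect this junction bookkeeping, together with checking that the four sign configurations (tail positive/negative, head positive/negative) all fall into one of these three cases, to be the main obstacle.

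With the spectrum lemma in hand, the remainder is routine. Any label $X$ gets the same value of $\Wt(X)$ under both schemes: if $X$ sits outside all singular segments of $T_{s-1}(p)$ the two schemes place it at the same site, and if $X$ is touched the cyclic shift moves it only among positions inside a single $D'$, all of which carry the common value $w^*$. For the wrap-around index in (\ref{eq:jun21_5}), recall from the discussion following Proposition \ref{lem:jun8_1} that in the first scheme the leftmost touched position of $D'$ is occupied by the image of the floating minimum of $p$'s corresponding singular segment. The left cyclic shift sends precisely this (floating) label to the rightmost touched position of $D'$, while every other label slides one step to the left within $D'$; these remaining labels are all fixed, and their mutual left-to-right order within $D'$ is preserved. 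Since cyclic shifts inside different singular segments act on disjoint intervals and untouched labels are fixed in place, the cyclic order of the set of fixed labels coincides in the two schemes. Combining the two observations, $\Wt_X(Y)$ is invariant under the change of scheme whenever $X,Y$ are both fixed, and Proposition \ref{prop:may23_1} closes the argument.
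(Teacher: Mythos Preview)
Your approach is essentially the same as the paper's. Your ``spectrum lemma'' is exactly Lemma~\ref{lem:jun13_1}, which the paper states and proves immediately after Proposition~\ref{rem:jul1_1} as its sole justification; the paper's proof is a one-liner: for any local minimum $b_j\otimes b_{j+1}$ touched by a level-$a$ singular segment one has $\min(-b_j,b_{j+1})=a$, so the $\varphi$-terms in $\Wt_X(Y)$ agree and the weight sum between two touched minima vanishes. This is slicker than your casewise computation via $w_k$ and $W_k$, though both are correct.

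Where you go beyond the paper is in carefully tracking the cyclic order of the fixed labels so that the wrap-around term in (\ref{eq:jun21_5}) is unchanged between schemes. The paper leaves this implicit. Your observation that only the floating label jumps from the leftmost to the rightmost touched position of $D'$ (quoting the discussion after the second-scheme definition in \S\ref{subsec:3_4}), while all fixed labels merely slide one step left and hence retain their linear order, is exactly what is needed to close this gap. One small notational slip: at the start you should write $\Wt'_X(Y)$ rather than $\Wt_X(Y)$, since the schemes differ only in how labels are placed in $T_{s-1}(p)$, not in $p$.
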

This fact is confirmed by the following result.
\begin{lemma}\label{lem:jun13_1}
The relation
$\wt_X(Y) = \Wt_X(Y)=0$ holds for every pair of local minima $X,Y$
touched by a common singular segment.
\end{lemma}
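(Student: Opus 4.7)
The plan is to exploit the rigidly alternating structure of any singular segment $D$ of level $a$: every entry of $D$ equals $\pm a$, and consecutive entries have opposite signs. First I would classify the possible shapes of a local minimum $b_i\ot b_{i+1}$ touched by $D$. If it sits strictly inside $D$, then both entries lie in $D$, forcing $b_i\ot b_{i+1}=-a\ot a$. If instead it sits at the left or right junction, the $D$-side entry still equals $\pm a$, while the other entry is the head (or tail) of the adjacent segment, whose level $a'>a$ by the definition of singularity. This produces exactly three possible shapes: $-a\ot a$, $-a'\ot a$, or $-a\ot a''$, with $a',a''>a$.

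Now let $X=b_j\ot b_{j+1}$ and $Y=b_k\ot b_{k+1}$ be two such minima with $j<k$. The key observation is that $b_{j+1}=a$ and $b_k=-a$ both lie in $D$, and the subsequence $b_{j+1},b_{j+2},\ldots,b_k$ is an alternating stretch of $D$ that begins with $+a$ and ends with $-a$. Consequently $k-j$ is even, the $+a$'s and $-a$'s appear in equal numbers, and their sum telescopes to zero, giving $\wt_X(Y)=0$.

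It then remains to evaluate the correction $\varphi(b_k,b_{k+1},b_j,b_{j+1})=\tfrac12(\min(-b_k,b_{k+1})-\min(-b_j,b_{j+1}))$. Inspecting the three shapes above, $-b_i\in\{a,a'\}$ and $b_{i+1}\in\{a,a''\}$ with $a',a''>a$, so $\min(-b_i,b_{i+1})=a$ in every case. Both half-differences therefore evaluate to the same value $a$, they cancel, and $\Wt_X(Y)=0$. The only point requiring care is the junction bookkeeping---verifying that the $D$-side entry of a junction minimum is genuinely $\pm a$ and that the outer neighbor strictly exceeds $a$---which is exactly what the definition of a singular segment guarantees.
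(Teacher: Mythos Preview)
Your proof is correct and follows essentially the same route as the paper's: both use the alternating $\pm a$ structure of the segment to get $\wt_X(Y)=0$, and both observe that $\min(-b_i,b_{i+1})=a$ for any local minimum touched by a level-$a$ singular segment, so the $\varphi$-correction vanishes. Your version spells out the three possible junction shapes and the telescoping sum explicitly, whereas the paper compresses the same reasoning into two lines.
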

\begin{proof}
If $X$ and $Y$ are touched by a common segment, then
we have $\wt_X(Y)=0$.
Suppose $b_j \ot b_{j+1} \, (X)$ and $b_k \ot b_{k+1} \, (Y)$ are
touched by a common \textit{singular} segment of level $a$.
Then $\min (-b_j,b_{j+1}) = \min (-b_k,b_{k+1}) = a$,
hence {$\Wt_X(Y) = \wt_X(Y)=0$}.
\end{proof}
\subsection{Spike formation at fixed minima}
\label{subsec:3_6}
The purpose of this subsection is to show that
any fixed minimum takes the spike form 
after applying $T_{s-1}$ appropriate times.
We begin by studying the behavior of the weight difference
between a local minimum and the head of a maximal segment.
In particular we consider the case where 
the latter travels exactly one site by $T_{s-1}$.
Given a path $p = b_1 \ot \cdots \ot b_L \in \mathcal{P}_{L,s}$ 
we suppose $b_j \ot b_{j+1} \, (X)$ is a local minimum.
Let $D$ be a common label for the
maximal segments in $p$ and $T_{s-1}(p)$
such that $D$'s tail in $T_{s-1}(p)$
is located at the right neighboring site of $D$'s head in $p$.
(See Proposition \ref{lem:may16_2}.)
Let the heads of the maximal segments be $b_k$ and $b'_m$ 
in $p$ and $T_{s-1}(p)$ respectively,
and suppose $X$ moves to
$b'_l \ot b'_{l+1}$ in $T_{s-1}(p)$
in the first labeling scheme.
Under this setting we define the functions $\wt_{X} (D), \wt'_{X} (D)$ by
\begin{align}
&
\wt_{X} (D) = b_{j+1} + b_{j+2} + \cdots + b_{k},
\\
&
\wt'_{X} (D) = b'_{l+1} + b'_{l+2} + \cdots + b'_{m}.
\end{align}
Also we set $\Delta \wt_{X} (D) = \wt'_{X} (D) - \wt_{X} (D)$.

Suppose $D$ travels exactly one site by $T_{s-1}$.
If its head is $\pm s$ then the possible types of moves are
given by
\begin{align}
&
\batten{}{\pm s}{}{\pm (s-1)} \quad \quad \,\,\, \batten{}{\mp x}{\pm s}{\mp (x+1)} \quad \quad \,\,\,  \batten{}{}{\mp y}{}
\label{eq:may25_3}
\\
&
\batten{}{\pm s}{}{\pm (s-1)} \quad \quad \,\,\, \batten{}{\mp (s-2)}{\pm s}{\mp (s-1)} \quad \quad \,\,\, \batten{}{\pm s}{\mp (s-2)}{}
\label{eq:may25_4}
\end{align}
where $0 \leq x \leq s-4$ and $0 \leq y \leq s-2$.
By this observation we obtain the following result.
\begin{lemma}
\label{prop:may23_2}
Suppose maximal segment $D$ travels exactly one site by $T_{s-1}$.
Then in both labeling schemes
$\Delta \wt_{X} (D)$ is given by Table \ref{tab:3}
for any local minimum $X$.

\begin{table}[h]
\begin{center}
	\begin{tabular}[h]{|c|c|c|}
		\hline
		& $s$ & $-s$ \\
		\hline
		I & $0$ & $-2$  \\
		\hline
		II & $+2$ & $0$  \\
		\hline
		III & $+1$ & $-1$  \\
		\hline
	\end{tabular}
\end{center}
	\caption{$\Delta \wt_{X} (D) $: The first column shows the types of moves 
	for $X$ and the top row shows $D$'s head.}
	\label{tab:3}
\end{table}
\end{lemma}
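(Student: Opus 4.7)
The plan is to adapt the telescoping-carrier argument of Lemma \ref{prop:apr12_1} to the present setting, replacing the second local minimum $Y$ with the moving head of $D$. The first step is to observe that, since $D$ travels exactly one site, Proposition \ref{lem:may16_2} pins down the new head index as $m = k+1$, so the right endpoint of the sum defining $\wt'_{X}(D)$ is known. Weight conservation $v_{i-1} + b_i = b'_i + v_i$ telescoped from $i = j+1$ to $i = k+1$ yields $\sum_{i=j+1}^{k+1}(b'_i - b_i) = v_j - v_{k+1}$, and rewriting $\wt'_{X}(D) - \wt_{X}(D)$ in terms of this telescoped sum gives the master identity
\[
\Delta \wt_{X}(D) = v_j - v_{k+1} + b_{k+1} - [l = j+1] \cdot b'_{j+1},
\]
where the bracket is $1$ if $X$ ends up at positions $(j+1, j+2)$ and $0$ if $X$ stays at positions $(j, j+1)$.

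Next I would read off the relevant carrier values from the diagrams (\ref{eq:apr11_1})--(\ref{eq:apr11_3}): for $X$, a type I move gives $l = j+1$ and $v_j = b'_{j+1} - 1$, a type II move gives $l = j+1$ and $v_j = b'_{j+1} + 1$, and a type III move gives $l = j$ and $v_j = 0$. For $D$'s head, note that for any $v_{k-1} \in B_{s-1}$ we have $v_{k-1} + s > 0$ and $v_{k-1} - s < 0$, so Lemma \ref{lem:july8_1} forces $v_k = \pm(s-1)$ when $b_k = \pm s$. The hypothesis that $D$ travels exactly one site (so that $b'_{k+1} = \pm s$) then fixes $v_{k+1}$, giving $b_{k+1} - v_{k+1} = +1$ when $b_k = s$ and $-1$ when $b_k = -s$. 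Substituting these values into the master identity produces each of the six table entries by a one-line computation.

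For the second labeling scheme, the type of a minimum's move is determined by its local carrier and element values and not by its label, so the cyclic relabeling of minima inside a singular segment does not affect $\Delta \wt_{X}(D)$, and Table \ref{tab:3} is unchanged. I do not foresee a significant obstacle beyond careful bookkeeping in the $l = j$ versus $l = j+1$ distinction between type III and types I, II; once that indicator function is handled correctly, the rest is direct verification.
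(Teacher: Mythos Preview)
Your computation in the first labeling scheme is correct and is essentially the paper's argument made explicit: the paper proves it ``by comparison with that of Lemma \ref{prop:apr12_1}'', observing that the head of $D$ with value $\pm s$ produces $v_{k+1}=b_{k+1}\mp 1$, exactly the relation that a type I (resp.\ type II) move of $Y$ produces in that lemma. Your telescoped master identity is just this comparison written out. One small imprecision: Proposition \ref{lem:may16_2} only tells you that the new \emph{tail} of $D$ sits at position $k+1$; it is the hypothesis ``$D$ travels exactly one site'' (equivalently, the diagrams (\ref{eq:may25_3})--(\ref{eq:may25_4}) showing $|b'_{k+2}|<s$) that places the new \emph{head} there, giving $m=k+1$.

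Your argument for the second labeling scheme has a genuine gap. You write that ``the type of a minimum's move is determined by its local carrier and element values and not by its label, so the cyclic relabeling \ldots\ does not affect $\Delta\wt_X(D)$.'' The premise is true but does not yield the conclusion: the quantity $\wt'_X(D)=b'_{l+1}+\cdots+b'_m$ depends on \emph{where} the label $X$ lands in $T_{s-1}(p)$, and in the second scheme that position can differ from the first scheme's by a cyclic shift inside a singular segment. What you must check is that this shift does not change $\wt'_X(D)$, i.e.\ that the partial sum of $b'_i$ between the two candidate positions vanishes. That is precisely Lemma \ref{lem:jun13_1} (any two local minima touched by a common singular segment have $\wt$-difference zero), which is what the paper invokes at this step. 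Once you cite that lemma, your proof is complete.
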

\begin{proof}
To begin with we adopt the first labeling scheme.
Then the proof proceeds by comparison with 
that of Lemma \ref{prop:apr12_1},
in which we replace $Y$ by $D$.
If the types of move for
$X$ is I or II then the situation is given by
(\ref{eq:may25_1}), while if it is III then by
(\ref{eq:may25_2}).
If the move of the head of $D$ is given by
(\ref{eq:may25_3}) then we have $b_{k+1}=\mp x, v_{k+1} = \mp
(x+1)$, while if it is given by 
(\ref{eq:may25_4}) then $b_{k+1}=\mp (s-2), 
v_{k+1} = \mp (s-1)$.
This yields $v_{k+1} = b_{k+1} \mp 1$ 
for $D$ with its head $\pm s$.
Compared with the proof of
Lemma \ref{prop:apr12_1}, we find that
if $D$'s head is $s$ (resp.~$-s$) then it corresponds to 
type I (resp.~type II)
move of $Y$ in that lemma.
The assertion holds also in the second
labeling scheme due to the fact shown in Lemma \ref{lem:jun13_1}.
\end{proof}
{}From now on we adopt the second labeling scheme.
We derive three lemmas to obtain a proposition
on the spike formation at fixed minima.
\begin{lemma}\label{prop:jun22_3}
Let $p \in \mathcal{P}_{L,s}$ be a path 
{with} at least one fixed minimum.
Then by applying $T_{s-1}$ repeatedly
a maximal segment catches up with any
fixed minimum in $p$.
\end{lemma}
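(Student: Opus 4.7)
The plan is to track the head position $h_n$ of a maximal segment $D$ against the position $m_n$ of $X$'s left element in $T_{s-1}^n(p)$, adopting the first labeling scheme. Such a $D$ exists by Proposition \ref{prop:jun5_2}. The two crucial inputs are: by Proposition \ref{lem:may16_2} and its proof, $h_{n+1} \geq h_n + 1$ and the maximal segment at time $n+1$ occupies exactly the cyclic block $\{h_n + 1, \ldots, h_{n+1}\}$; and from the proof of Lemma \ref{lem:may16_3}, $m_{n+1} - m_n \in \{0, 1\}$.

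I would then introduce the cyclic distance $d_n = (m_n - h_n) \bmod L$ and split into cases. If $d_n = 0$ for some $n$, then $D$'s head sits at $X$'s left site, whose value is both $\pm s$ (being the head of a maximal segment) and negative (being the left element of a local minimum), hence equals $-s$; so $X$ takes the form $-s \otimes b'_{m_n+1}$ with $b'_{m_n+1} > 0$, i.e., $X$ is a pre-spike and $D$ has caught up with $X$. Otherwise $d_n > 0$, and a direct computation gives $d_{n+1} \equiv d_n + \delta_m - \delta_h \pmod L$ with $\delta_h = h_{n+1}-h_n \geq 1$ and $\delta_m = m_{n+1}-m_n \in \{0,1\}$, so the integer increment $\delta_m - \delta_h$ is $\leq 0$. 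In the wraparound case $\delta_h > d_n + \delta_m$, the position $m_{n+1} = h_n + d_n + \delta_m$ lies strictly inside $\{h_n + 1, \ldots, h_{n+1}\}$, so the new maximal segment covers $m_{n+1}$, and the same sign analysis again forces $b'_{m_{n+1}} = -s$, making $X$ a pre-spike.

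The only remaining scenario is that $d_n$ is non-increasing and never hits $0$ or wraps; by finiteness it stabilizes at some $c \geq 1$ from some step $N$ onward, forcing $\delta_h = \delta_m = 1$ on the periodic part of the orbit. This means the maximal segment has length $1$ throughout and $X$ always undergoes a type I or type II move. This degenerate regime is the main obstacle. I would rule it out by combining the finiteness and periodicity of the $T_{s-1}$-orbit with the tabulation of $\Delta \wt_X(D)$ in Lemma \ref{prop:may23_2} (which applies exactly because $D$ travels one site per step in this regime) and the preservation of modified weight differences among fixed minima (Propositions \ref{prop:may23_1} and \ref{rem:jul1_1}): summing $\Delta \wt_X(D)$ over a full period must vanish, while the rigid configuration imposed by $\delta_h = \delta_m = 1$ at a fixed cyclic offset $c \geq 1$, together with Lemma \ref{lem:jun14_1} controlling which positions can belong to singular segments, is incompatible with $X$ being genuinely fixed. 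The remainder of the argument is essentially bookkeeping; the subtle step is extracting the contradiction in this last, rigid case.
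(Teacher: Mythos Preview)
Your reduction to the rigid case ($\delta_h=\delta_m=1$ forever) is correct and matches the paper's strategy, and your idea of summing $\Delta\wt_X(D)$ over a full orbit period is sound and even slightly cleaner than what the paper does. But the contradiction you sketch in that case is not the right one, and the lemmas you reach for there do not help. Propositions~\ref{prop:may23_1} and~\ref{rem:jul1_1} concern modified weight differences between two \emph{fixed minima}, not between $X$ and the head of a maximal segment, so they are irrelevant here; and Lemma~\ref{lem:jun14_1} (characterizing left-singular membership via $v_k$) plays no role in ruling out the rigid regime. You are also missing a key observation: once $D$ has length one and advances by exactly one site, the proof of Proposition~\ref{lem:may16_2} shows that its head value ($s$ or $-s$) is constant in time. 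Without this, Table~\ref{tab:3} does not pin down the signs of $\Delta\wt_X(D)$, and your sum-over-a-period argument stalls.

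Here is the actual mechanism, as in the paper (which works in the \emph{second} labeling scheme so that $X$ remains fixed throughout---a point your first-scheme setup does not address and would need to). Suppose $D$'s head is $s$ at every step. By Table~\ref{tab:3}, each step gives $\Delta\wt_X(D)\in\{0,+2\}$ (type~III is excluded since $\delta_m=1$). Since $\wt_X(D)$ is a sum over a cyclic block of fixed length $c$, it is bounded; hence eventually every move is type~I. But by Lemma~\ref{lem:may16_3}, a type~I move at a \emph{fixed} minimum has the form $(b+2,a)\mapsto(b,y)$: the left coordinate drops by $2$ at each step and cannot do so indefinitely. That is the contradiction. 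Your periodicity argument would reach the same endpoint---$\sum\Delta\wt_X(D)=0$ with all summands in $\{0,2\}$ forces every summand to vanish, hence all moves are type~I---but you still need the constancy of $D$'s head and the monotone-coordinate observation from Lemma~\ref{lem:may16_3} to finish.
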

\begin{proof}
First we consider the case $\wt (p) = 0$.
Let $D$ be a maximal segment, and $X$ a fixed minimum in $p$.
Here we assume $D$'s head is $s$.
(The other case where $D$'s head is $-s$ can be proved similarly.)
By applying $T_{s-1}$,
the (head of the) maximal segment travels at least one {site}, while the
fixed minimum does at most one.
Suppose $D$ and $X$ continue to
propagate by one site per one unit time persistently.
Then $D$'s head remains $s$ and no floating minima overtake $X$.
{}From Lemma \ref{prop:may23_2}
we find that after finite time steps
$X$ continues to take type I moves only, 
or $\wt_{X}(D)$ becomes
arbitrarily large.
But this is impossible because of the following reasons.
(1) The type I move at a fixed minimum is
given by $(b+2,a) \mapsto (b,a)$ or $(b+2,a) \mapsto (b,a-2)$
with $0 \leq b < a \leq s$, 
so it cannot continue persistently.
(2) The weight difference $\wt_{X}(D)$ cannot become
arbitrarily large in a periodic path of zero total weight.
Thus $D$ and $X$ cannot continue to propagate with the same speed
persistently, hence $D$ must catch up with $X$.

Next we consider the case $\wt (p) \ne 0$.
The only difference is that $|\wt_{X}(D)|$ can become
arbitrarily large now.
But if so, the distance between $X$ and the head of $D$ must
exceeds $L$, forcing $D$ 
to catch up with $X$ because of the periodic boundary condition.
\end{proof}
By the time evolution $T_{s-1}$
a fixed minimum moves by $1$, $0$, or $-1$
(See Example \ref{ex:july18_1}),
{and}
a maximal segment moves forwards without
``leaping" over a finite gap
(Proposition \ref{lem:may16_2}).
This implies that
whenever a maximal segment overtakes a fixed minimum, 
they must overlap each other.
More precisely we have the following.
\begin{lemma}\label{prop:july23_2}
A maximal segment does not overtake a fixed minimum
unless they make a (pre-)spike.
\end{lemma}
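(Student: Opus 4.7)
The plan is to examine the discrete step in which $D$ overtakes $X$ and to use consistency constraints at the overtaking position. By Proposition \ref{lem:may16_2}, the head of a maximal segment advances by at least one site per $T_{s-1}$-step, while from the type I, II, III classification in subsection \ref{subsec:3_4} a local minimum advances by at most one site. Hence the gap between $D$'s head and $X$'s left element is non-increasing, and by Lemma \ref{prop:jun22_3} it must vanish in finite time. I will show that at the moment of vanishing, $X$ takes the pre-spike (or spike) form.

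Consider the step in which the gap first becomes non-positive. In the new path, the head of $D$ either coincides with the position of $X$'s left element, or has just skipped over it into a newly extended maximal segment whose tail occupies that position. Either way, the value at $X$'s left-element position is constrained on two sides: as part of the maximal segment $D$, it equals $\pm s$; as the left entry of the local minimum $X$, it is strictly negative. These two constraints force the value to be $-s$, so $X = -s \otimes (\mbox{right entry})$ is a pre-spike, or a spike when the right entry equals $+s$.

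The main obstacle I anticipate is handling the full combinatorial variety of overtake scenarios: $D$'s head may advance by exactly one site (direct coincidence with $X$'s position) or by more (a jump in which the extended maximal segment absorbs $X$'s old position). In the jump case, I would use Proposition \ref{lem:may16_1} to track where $X$ moves and verify via the alternating $\pm s$ structure of the new maximal segment that $X$'s left entry becomes $-s$. A subsidiary difficulty is ruling out the scenario where $D$ approaches a fixed $X$ with head value $+s$: here I would apply Lemma \ref{lem:july8_1} to compute $v_k = s - 1$ in the step before the potential overtake, and then invoke item 2 of Lemma \ref{lem:may16_3} to conclude that such a configuration would force $X$ to be floating, contradicting the hypothesis. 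Together these case checks confirm that every overtake event coincides with a (pre-)spike formation.
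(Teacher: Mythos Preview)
Your core strategy---reduce to the overtaking step and argue that $X$'s left entry lies in $D$ (hence equals $\pm s$) and is negative (hence $-s$)---matches the paper's. But you miss the one case the paper actually needs to address. Subsection~\ref{subsec:3_6} explicitly adopts the \emph{second} labeling scheme, under which a fixed minimum can move by $-1$ as well as $0$ or $+1$ (see the preamble immediately before the lemma). If $D$'s head is at $b_{k-1}$ and $X$ sits at $b_k\otimes b_{k+1}$, and $X$ then moves backward to $b'_{k-1}\otimes b'_k$, the new tail $b'_k$ of $D$ becomes $X$'s \emph{right} entry rather than its left, and your negativity argument no longer forces a pre-spike. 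The paper closes this by observing that a backward move in the second scheme places $X$ inside (or at the left junction of) a singular segment in $T_{s-1}(p)$, so $|b'_k|<s$, contradicting $|b'_k|=s$. By citing the type I--III classification---a first-scheme device in which minima move only by $0$ or $+1$---you implicitly exclude backward moves and so omit precisely the case that carries the content of the proof.

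Your ``subsidiary difficulty'' is also off target. When $D$'s head is $+s$ at $b_{k-1}$, one obtains $b'_k=+s$, which simply forces $X$ (as a local minimum) to advance to position $k{+}1$ rather than stay at $k$; the gap remains $1$ and Lemma~\ref{prop:jun22_3}, which you already invoke, guarantees this cannot persist indefinitely. No separate contradiction via Lemma~\ref{lem:may16_3} is needed. In addition, your computation ``$v_k=s-1$'' conflates indices: it is $v_{k-1}$ that equals $s-1$ when $b_{k-1}=s$, while $v_k=b_k-1$ in that situation; and item~2 of Lemma~\ref{lem:may16_3} would yield the floating conclusion only when $v_k+b_{k+1}>0$, which is not automatic.
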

\begin{proof}
In view of the above observation we find
it suffices to check the following case.
Recall the transition diagram (\ref{eq:jun14_2}).
Suppose $b_{k-1}$ is the head of a maximal segment $D$
and $b_k \ot b_{k+1}$ is a fixed minimum $X$.
Then in the next time step the tail of $D$ is at $b_k'$
(Proposition \ref{lem:may16_2}), hence $|b'_{k}| = s$.
Therefore if
$X$ goes backwards, i.~e.~if it moves to
$b'_{k-1} \ot b'_{k}$ then it may not be a pre-spike.
But this cannot occur because if
$X$ goes backwards 
then $b'_{k}$ is within a singular segment, hence $|b'_{k}| \ne s$.
\end{proof}

\begin{lemma}\label{lem:july24_3}
If a fixed minimum is a pre-spike,
then it changes into a spike by applying $T_{s-1}$
repeatedly.
\end{lemma}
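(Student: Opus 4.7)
The plan is to perform a direct computation showing that one application of $T_{s-1}$ transforms a fixed pre-spike $-s \otimes b$ into $-s \otimes (b+2)$, shifted one site to the right; iterating this at most $(s-b)/2$ times then produces the spike $-s \otimes s$.

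First I would let the fixed pre-spike be $b_k \otimes b_{k+1} = -s \otimes b$ with $0 < b < s$, so $b \leq s-2$ since $b$ is odd. Using the combinatorial $R$ formula (Lemma \ref{lem:july8_1}) applied to $v_{k-1} \otimes b_k$, the inequality $v_{k-1} \leq s-1$ forces $v_{k-1} + b_k < 0$, whence $v_k = b_k + 1 = -s + 1$. Applying the formula again to $v_k \otimes b_{k+1}$, the bound $b \leq s-2$ gives $v_k + b_{k+1} = -s + 1 + b \leq -1 < 0$, hence $b'_{k+1} = v_k - 1 = -s$ and $v_{k+1} = b_{k+1} + 1 = b + 1$.

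Next I would invoke the characterization of fixed minima. Since $v_k + b_{k+1} < 0$ and the local minimum is fixed, Lemma \ref{lem:may16_3}(1) yields $b'_{k+2} = v_{k+1} + 1 = b + 2$. Under the second labeling scheme adopted in this section, the local minimum labeled $X$ remains fixed (by Proposition \ref{prop:july10_4} together with the preservation of types that motivates the second scheme), and its position in $T_{s-1}(p)$ is $(k+1, k+2)$, where it reads $-s \otimes (b+2)$. If $b + 2 = s$ we are done; otherwise $-s \otimes (b+2)$ is again a fixed pre-spike with strictly larger right component, so the argument repeats.

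I do not anticipate a genuine obstacle here: the proof is essentially a two-line computation with the $R$-matrix formula and a direct appeal to Lemma \ref{lem:may16_3}. The only subtle point to verify is that after the step, the object we identify with the ``same'' fixed minimum is indeed the local minimum $b'_{k+1} \otimes b'_{k+2} = -s \otimes (b+2)$, i.e.\ that no singular-segment relabeling intervenes; but this follows because the new pre-spike lies at a junction $(-s) \otimes (b+2)$ whose left entry belongs to a maximal (hence non-singular) segment, so the cyclic relabeling performed in the second scheme does not act on $X$, and $X$ preserves its fixed status throughout the iteration.
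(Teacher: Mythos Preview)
Your proposal is correct and follows essentially the same route as the paper's own proof: compute with Lemma~\ref{lem:july8_1} that $v_k=-s+1$ and $v_k+b_{k+1}<0$, then invoke Lemma~\ref{lem:may16_3}(1) for a fixed minimum to obtain $b'_{k+1}\otimes b'_{k+2}=-s\otimes(b+2)$, and iterate. The paper's version is terser, simply recording the move $(s,b)\mapsto(s,b+2)$ and repeating; your write-up makes explicit the labeling issue, which the paper leaves implicit.

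One minor remark on your final paragraph: observing only that the left entry $-s$ lies in a maximal segment is not by itself enough to rule out the second-scheme relabeling, since the right entry $b+2$ could in principle begin a singular segment. The cleaner way to close this is to note that if the new minimum were touched by a singular segment it would necessarily be the \emph{leftmost} one touched (the left neighbour being maximal), but the discussion after the two labeling schemes shows that in the first scheme the leftmost-touched position is always the image of a \emph{floating} minimum from $p$---contradicting that $X$ was fixed. This confirms your conclusion that no relabeling intervenes and $X$ remains the fixed pre-spike $-s\otimes(b+2)$.
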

\begin{proof}
If the fixed minimum is already a spike, there is none to be done.
Suppose it has the form $-s \ot b \, (b \ne s)$.
Then by Lemma \ref{lem:july8_1} and Item 1 of
Lemma \ref{lem:may16_3} the time evolution
$T_{s-1}$ yields the move $(s,b) \mapsto (s,b+2)$.
By repeating this we obtain
$(s,b) \mapsto (s,b+2) \mapsto \cdots \mapsto (s,s)$ at 
this fixed minimum.
\end{proof}

As a result of these lemmas we obtain the following result.
\begin{proposition}\label{prop:july27_1}
Let $p \in \mathcal{P}_{L,s}$ be a path {with} at least one fixed minimum.
Then any fixed minimum in $p$ takes the spike form 
after applying $T_{s-1}$ appropriate times.
\end{proposition}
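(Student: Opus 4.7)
The plan is to chain together the three lemmas just established, using the catch-up process as a motor that eventually forces the desired spike configuration. Fix a path $p \in \mathcal{P}_{L,s}$ and a fixed minimum $X$ in $p$. Throughout, I will work in the second labeling scheme, so that Proposition \ref{prop:july10_4} together with the remark following it guarantees that $X$ persists as a fixed minimum under iterated application of $T_{s-1}$ (its type is preserved, not merely the total count).

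First I would invoke the definition of $\mathcal{P}_{L,s}$ to produce a maximal segment $D$ in $p$: since $p$ is $T_{s-1}$-evolvable in a strong sense, some $b_i = \pm s$, and that site lies in a level $s$ segment. By Corollary \ref{cor:july10_2} the segment $D$ can be tracked consistently through every application of $T_{s-1}$. Next, by Lemma \ref{prop:jun22_3}, after sufficiently many applications of $T_{s-1}$ the segment $D$ catches up with $X$, in the precise sense that the head of $D$ reaches the site immediately to the left of $X$.

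At that moment I apply Lemma \ref{prop:july23_2}: since $D$ cannot overtake the fixed minimum $X$ without producing a (pre-)spike, the configuration at $X$ must be of the form $-s \ot b$ or (by the symmetric argument) $b \ot s$ after one further step — i.e.\ $X$ is a pre-spike. Finally, Lemma \ref{lem:july24_3} tells us that once $X$ has become a pre-spike, repeating $T_{s-1}$ drives it through the chain $(s,b)\mapsto (s,b+2)\mapsto \cdots \mapsto (s,s)$ until $X$ assumes the spike form $-s \ot s$, completing the argument.

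The subtle point — and what I expect to be the main obstacle in writing this carefully — is making sure that the ``catch-up'' phase and the ``spike-formation'' phase do not interfere with each other. Specifically, one must verify that during the iterations needed in Lemma \ref{lem:july24_3} to evolve a pre-spike into a spike, the fixed minimum $X$ does not change type, no other maximal segment disturbs it, and the argument of Lemma \ref{prop:july23_2} is applied at the correct first instant of contact (so that one does not accidentally try to use it in a configuration where $D$ has already partially overlapped $X$). The second labeling scheme and Proposition \ref{lem:may16_2} (together with the one-site propagation bounds used in Lemma \ref{prop:jun22_3}) should resolve these issues, but this bookkeeping is where the proof demands the most care.
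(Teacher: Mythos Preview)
Your proposal is correct and follows the same approach as the paper: the proposition is stated there as an immediate consequence of Lemmas \ref{prop:jun22_3}, \ref{prop:july23_2}, and \ref{lem:july24_3}, chained exactly as you describe. One small quibble: a pre-spike is by definition of the form $-s \ot b$ with $b\geq 0$, not ``$b \ot s$'', so your parenthetical symmetric case is unnecessary (and not how the paper frames it); otherwise your bookkeeping concerns are reasonable but do not require anything beyond what the cited lemmas already provide.
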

\subsection{The case of zero total weight}
\label{subsec:3_7}
In this subsection we establish 
Theorem \ref{th:may21_1} in the case of $\wt(p)=0$. 
Let $\mathcal{M}=\{ X_1,\ldots,X_M \}$ be 
the set of all local minima in $p$.
Note that in this case we have 
$\Wt_{X_i}(X_j) = - \Wt_{X_j}(X_i)$ for any $i,j$.
The arguments in
this (as well as the following) subsection
are valid in both odd $s$ and even $s$ cases.
\begin{lemma}\label{lem:jun21_1}
There is at least one fixed minimum in a path of zero total weight.
\end{lemma}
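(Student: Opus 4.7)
The plan is a proof by contradiction: assume every local minimum of $p$ is floating. Since each floating minimum is by definition the unique rightmost local minimum touched by some singular segment, and no two singular segments share a touched minimum, this hypothesis forces a bijection between the local minima and the singular segments of $p$, making every singular segment touch exactly one local minimum. If $p$ has no singular segments (i.e.\ $\tilde L = 1$) the conclusion is immediate: $s$ being odd and $\wt(p)=0$ forces a sign change along $p$, hence a local minimum exists, and without any singular segment it is automatically fixed. So from now on assume at least one singular segment is present.

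I will then enumerate the shapes of singular segments that touch exactly one local minimum. Counting the internal minima (pairs $-a \otimes +a$ inside the alternating $\pm a$ pattern) together with the junction minima at the boundaries with higher-level neighbours, the only such shapes are $+a$ (type $A$) and $-a$ (type $B$), both of length one, and $-a \otimes +a$ (type $C$) of length two. A parallel enumeration of non-singular segments under the no-fixed-minimum assumption (no internal minimum, no $(-,+)$ pair across the junction with a neighbouring non-singular) shows that non-singular segments can only take the shapes $+b$, $-b$, or $+b \otimes -b$. Each singular type also dictates the sign at its junctions with its neighbours: types $A$ and $C$ require the non-singular chain starting on their right to begin with $-$, whereas $B$ requires it to begin with $+$.

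Finally I will deduce the cyclic pattern of singular types. Any non-singular chain beginning with $-$ is forced to consist of single $-b$'s and thus ends with $-$, which matches only the left-sign requirement of type $A$; hence $A$ and $C$ can be followed only by $A$. Type $B$, whose chains begin with $+$, can be followed by any of $A$, $B$, $C$. Cyclicity then leaves only two possibilities: the whole sequence of singular types is all-$A$ or all-$B$. In the all-$A$ case $p$ reduces to a cyclic concatenation $\cdots,\,+a_i,\,-b^{(i)}_1,\,\ldots,\,-b^{(i)}_{k_i},\,+a_{i+1},\,\cdots$ with $b^{(i)}_1,\,b^{(i-1)}_{k_{i-1}}>a_i$; since all entries are odd integers this strengthens to $b \geq a_i+2$ at each flanking position, yielding $\sum_{i,j}b^{(i)}_j > \sum_i a_i$ and hence $\wt(p)<0$. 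The all-$B$ case is symmetric and produces $\wt(p)>0$. Either contradicts $\wt(p)=0$, so some local minimum of $p$ must be fixed. The main obstacle will be the enumeration in the second paragraph, since the non-singular bridge between two singulars can in principle consist of several consecutive segments; showing that no such multi-segment chain produces a consistent mixed-type cycle is what ultimately yields the dichotomy all-$A$/all-$B$.
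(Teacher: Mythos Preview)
Your proof is correct and follows essentially the same strategy as the paper's: assume all local minima are floating, observe that then each singular segment touches exactly one local minimum, classify the three possible shapes of such singular segments, determine which cyclic adjacencies are allowed, and reach a weight contradiction; your types $A$, $B$, $C$ correspond to the paper's types $C$, $A$, $B$ respectively, and your explicit analysis of the non-singular ``chains'' supplies the justification the paper leaves implicit in its ``clearly there cannot be such juxtapositions''. One small inaccuracy: ``no singular segments'' is not the same as $\tilde L=1$ (a path can have several segments none of which is singular), but this does not affect your argument, since in that case any local minimum is untouched by a singular segment and hence fixed.
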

\begin{proof}
Suppose there exists a path $p$ of 
zero total weight but has no fixed minima.
Since $p$ is of zero total weight it must have one or more
local minima which are inevitably floating by our assumption.
It means that $p$ has at least one singular segment.
Then each of the singular segments in $p$ must touch exactly one local minimum,
since otherwise a second local minimum touched by it must be fixed.
The types of {such} singular segments are restricted to the
following ones.
\begin{align}
\mbox{\textbf{A.}} & \qquad b \ot (-a \ot b')   \nonumber \\
\mbox{\textbf{B.}} & \qquad b \ot (-a \ot a) \ot -b'  \nonumber \\
\mbox{\textbf{C.}} & \qquad  (-b \ot a) \ot -b' 
\label{eq:may18_1}
\end{align}
Here $b,b' > a \geq 0$.
Pick all the local minima in $p$ (as a non-periodic path) 
from left to right and write their types in a sequence.
Clearly there cannot be such juxtapositions as
\textbf{B}\textbf{A}, 
\textbf{C}\textbf{B}, 
\textbf{C}\textbf{A}, or
\textbf{B}\textbf{B} in the sequence.
Thus the possible cases are as follows.
\begin{align}
\mbox{\textbf{1.}} & \qquad  \underbrace{\mbox{\textbf{A}} \ldots . \mbox{\textbf{A}}}_{n}
\,\underbrace{\mbox{\textbf{C}} \ldots . \mbox{\textbf{C}}}_{m} \nonumber \\
\mbox{\textbf{2.}} & \qquad  \underbrace{\mbox{\textbf{A}} \ldots \mbox{\textbf{A}}}_{n}
\mbox{\textbf{B}}\underbrace{\mbox{\textbf{C}} \ldots \mbox{\textbf{C}}}_{n}
\label{eq:jun20_1}
\end{align}
Since $p$ is of zero total weight it cannot take
the form \textbf{1} with $n=0$ or $m=0$.
Then considering $p$ as a periodic path we are forced to have 
one of
\textbf{B}\textbf{A}, 
\textbf{C}\textbf{B}, or
\textbf{C}\textbf{A},
hence a contradiction.
\end{proof}
Let $\mathcal{M}'$ be a subset of $\mathcal{M}$.
If $X \in \mathcal{M}'$ satisfies $\Wt_X(Y) \geq 0$ for any
$Y \in \mathcal{M}'$, 
we call $X$ a \textbf{lowest minimum}
associated with $\mathcal{M}'$.
Thus a lowest minimum associated with $\mathcal{M}$
is a global minimum.
We show that it is always possible to find
a global minimum that is also a fixed minimum.
Let $\mathcal{M}_{\rm FX}$ be the set of all fixed minima
and $\mathcal{M}_{\rm FL}=\mathcal{M} \setminus \mathcal{M}_{\rm FX}$, 
the set of all floating minima in $p$.
A lowest minimum associated with $\mathcal{M}_{\rm FX}$
is called a \textbf{lowest fixed minimum}.
\begin{lemma}\label{prop:may18_3}
A lowest fixed minimum is a global minimum.
\end{lemma}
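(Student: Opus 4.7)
The plan is to show $\Wt (Y) \geq \Wt (X)$ for every local minimum $Y$, which with $\wt (p)=0$ gives $\Wt_X(Y) \geq 0$ via (\ref{eq:jun21_5}). The key quantity is
\begin{displaymath}
w^\ast := \min_{0 \leq k \leq L-1} w_k(p).
\end{displaymath}
Since the modified weight of any local minimum at position $k$ equals $\min(w_{k-1}(p), w_k(p))$ (an easy consequence of $W_k = w_k - \frac12\max(b_k+b_{k+1},0)$ together with the recursion $w_k - w_{k-1} = \frac12(b_k+b_{k+1})$), every local minimum satisfies $\Wt \geq w^\ast$; in particular $\Wt (X) \geq w^\ast$. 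It therefore suffices to exhibit \emph{some} fixed minimum attaining the value $w^\ast$, as this will force $\Wt (X) = w^\ast$ and hence $\Wt (Y) \geq \Wt (X)$ for every local minimum $Y$.

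To produce such a minimum, pick $k^\ast$ with $w_{k^\ast} = w^\ast$. The inequalities $w_{k^\ast - 1} \geq w_{k^\ast} \leq w_{k^\ast + 1}$ translate to $b_{k^\ast} + b_{k^\ast + 1} \leq 0$ and $b_{k^\ast + 1} + b_{k^\ast + 2} \geq 0$. For odd $s$ the sign of $b_{k^\ast + 1}$ then forces a local minimum either at position $k^\ast$ (if $b_{k^\ast + 1} > 0$, since $b_{k^\ast} < 0$ follows) or at position $k^\ast + 1$ (if $b_{k^\ast + 1} < 0$), whose modified weight equals $w^\ast$. (In the even $s$ setting one walks through any zero plateau of $b$-values until a nonzero entry appears, to locate such a local minimum.) Call this minimum $Y_0$. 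If $Y_0$ is fixed, take it as the witness. If $Y_0$ is floating and the singular segment $S$ touching it also touches another local minimum (which must then be fixed), Lemma \ref{lem:jun13_1} gives that fixed minimum the same modified weight $w^\ast$, which is the desired witness.

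The remaining possibility, and the main obstacle of the argument, is that $S$ touches $Y_0$ alone, putting $Y_0$ into one of the shapes \textbf{A}, \textbf{B}, or \textbf{C} of (\ref{eq:may18_1}), with $b, b' > a \geq 0$. I rule this out by computing $w_j$ immediately next to $Y_0$. In shapes \textbf{A} and \textbf{B}, where $b_{k-1} = b$ and $b_k = -a$ at the position $k$ of $Y_0$, one has $w_{k-1} - w_{k-2} = \frac12(b - a) > 0$, so $w_{k-2} < w_{k-1} = \Wt (Y_0) = w^\ast$; in shape \textbf{C}, where $b_{k+1} = a$ and $b_{k+2} = -b'$ at the position $k$ of $Y_0$, one finds $w_{k+1} - w_k = \frac12(a - b') < 0$, giving $w_{k+1} < w_k = \Wt (Y_0) = w^\ast$. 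Each contradicts the minimality of $w^\ast$, so this case cannot occur, and the proof is complete.
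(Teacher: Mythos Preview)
Your proof is correct and takes a genuinely different route from the paper's. The paper starts from the lowest \emph{floating} minimum $Z_1$, locates a neighboring fixed minimum $X$ (via the type \textbf{A}, \textbf{B}, \textbf{C} analysis of its singular segment), computes $\Wt_X(Z_1) > 0$ directly from the chain of elements between them, and concludes $\Wt_{Y_1}(Z_1) = \Wt_{Y_1}(X) + \Wt_X(Z_1) \geq 0$. You instead go to the absolute minimum $w^\ast$ of the scalar function $w_k$, use the identity $W_k = \min(w_{k-1},w_k)$ to place a local minimum $Y_0$ at that level, and then rule out the sole-touch floating case by observing that each of the shapes \textbf{A}, \textbf{B}, \textbf{C} forces some $w_j < w^\ast$ one step outside the segment. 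This is cleaner than the paper's explicit modified-weight-difference computation.

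What each approach buys: yours is more elementary and transparent in the $\wt(p)=0$ setting, since everything reduces to comparing values of a periodic scalar function. The paper's argument, by contrast, works entirely with the relative quantities $\Wt_X(Y)$ and never invokes a global minimum of $w_k$; this is what allows it to carry over almost verbatim to Lemma~\ref{lem:july24_6} in the positive-weight case, where $w_k$ is no longer periodic and your $w^\ast$ is not well defined without passing to a cover.
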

\begin{proof}
Let $\mathcal{M}_{\rm FX} = \{ Y_1,\ldots,Y_N \}$ and
$\mathcal{M}_{\rm FL} =\{  Z_1,\ldots,Z_{N'} \}$.
By reshuffling their names we can assume that
$Y_1$ and $Z_1$ are lowest minima associated with 
$\mathcal{M}_{\rm FX}$ and $\mathcal{M}_{\rm FL}$ respectively.
Then it suffices to show $\Wt_{Y_1} (Z_1) \geq 0$.
The floating minimum
$Z_{1}$ is touched by a singular segment, which we call $D$.
If there is another local minimum touched by $D$, then it
is a fixed minimum, say $Y_2$.
It follows that $\Wt_{Y_1} (Z_{1}) = \Wt_{Y_1} (Y_2) 
+ \Wt_{Y_2} (Z_1) \geq 0$ by Lemma \ref{lem:jun13_1},
so we are done.
Suppose there are no local minima touched by $D$ other 
than $Z_{1}$ itself.
Then $Z_{1}$ is in one of the three cases in (\ref{eq:may18_1}).
For instance consider type \textbf{A} case.
(The other cases can be proved in a similar way.)
Depicting the sequence of its left side elements explicitly, it 
generally looks like
\begin{displaymath}
(-c_0 \ot c_1) \ot \cdots \ot c_l \ot (-a \ot b'),
\end{displaymath}
where $c_i \geq 0 \, (0 \leq i \leq l-1)$, $c_l=b$,
and $-c_0 \ot c_1$ is the adjacent local minimum to $Z_1$
{along the path}.
We denote $-c_0 \ot c_1$ by $X$.
Then 
\begin{align*}
\Wt_X (Z_1) &= \wt_X (Z_1) + \varphi (a,b',c_0,c_1)\\
&
= \sum_{i=1}^l c_i - a + \frac12 (\min(a,b')-\min(c_0,c_1))\\
&
\geq \frac{c_1-a}{2} + \sum_{i=2}^l c_i.
\end{align*}
It follows that $\Wt_X (Z_1) \geq \frac{b-a}{2}$ for $l=1$, or
$\Wt_X (Z_1) \geq b - \frac{a}{2}$ for $l>1$, yielding
$\Wt_{Z_1} (X) = - \Wt_X (Z_1) < 0$.
Thus $X$ is not a floating minimum, hence fixed.
Then $\Wt_{Y_1} (Z_1) = \Wt_{Y_1} (X) + \Wt_X (Z_1) >0$.
\end{proof}
%

Now we establish the main result
(Theorem \ref{th:may21_1}) in the case of $\wt(p)=0$.
\begin{proposition}\label{prop:july24_4}
Let $p \in \mathcal{P}_{L,s}$ be a path satisfying $\wt (p) = 0$.
Then there exists $j$ 
such that $T_{s-1}^j (p)$ is equivalent to a ballot sequence.
\end{proposition}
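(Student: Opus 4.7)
The plan is to produce, for each time $j\ge 0$, a distinguished fixed minimum of $T_{s-1}^j(p)$ that is simultaneously a global minimum of that path, and then wait until it takes the spike form. Once this is done, Proposition~\ref{prop:may25_5} closes the argument immediately. All the machinery needed has already been developed in subsections \ref{subsec:3_1}--\ref{subsec:3_6}; the task is to assemble it.

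First I would invoke Lemma~\ref{lem:jun21_1} to guarantee $\mathcal{M}_{\rm FX}\ne\emptyset$ for $p$, and pick a lowest fixed minimum $X_{1}$ of $p$. By Lemma~\ref{prop:may18_3}, $X_{1}$ is in fact a global minimum of $p$. Next I would switch to the second labeling scheme so that the attribute ``fixed'' is stable under $T_{s-1}$: by Proposition~\ref{prop:july10_4} together with the discussion preceding it, the types of the local minima (fixed vs.\ floating) are preserved, so for every $j\ge 0$ the label $X_{1}$ still marks a fixed minimum of $T_{s-1}^{j}(p)$.

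Now comes the crucial step: by Proposition~\ref{rem:jul1_1}, the modified weight difference $\Wt_{X_{1}}(Y)$ between $X_{1}$ and any other fixed minimum $Y$ is preserved by $T_{s-1}$, and since $\wt(p)=0$ we also have $\Wt_{Y}(X_{1})=-\Wt_{X_{1}}(Y)$. Iterating, $X_{1}$ remains a lowest fixed minimum of $T_{s-1}^{j}(p)$ for every $j\ge 0$. Applying Lemma~\ref{prop:may18_3} to each $T_{s-1}^{j}(p)$ then gives that $X_{1}$ is a global minimum of $T_{s-1}^{j}(p)$ for all $j\ge 0$.

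To finish, I would apply Proposition~\ref{prop:july27_1} to the fixed minimum $X_{1}$: there exists $j$ such that $X_{1}$ takes the spike form in $T_{s-1}^{j}(p)$. Combining the two conclusions, $T_{s-1}^{j}(p)$ has a global minimum of spike form, so by Proposition~\ref{prop:may25_5} it is equivalent to a ballot sequence. The only delicate point is making sure that ``lowest fixed minimum'' is genuinely a $T_{s-1}$-invariant attribute of the labeled minimum $X_{1}$; this is exactly why the second labeling scheme was introduced and why Proposition~\ref{rem:jul1_1} was formulated for it, so once those bookkeeping issues are cited correctly, the argument is essentially a concatenation of earlier results rather than a new computation.
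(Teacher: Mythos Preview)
Your proof is correct and follows essentially the same route as the paper's own argument: pick a lowest fixed minimum (Lemma~\ref{lem:jun21_1}), use the second labeling scheme together with Proposition~\ref{rem:jul1_1} to keep it lowest among fixed minima, invoke Lemma~\ref{prop:may18_3} to upgrade this to ``global minimum'' at every time step, and then apply Proposition~\ref{prop:july27_1} and Proposition~\ref{prop:may25_5}. The only minor slip is bibliographic: the remark that fixed/floating types are preserved in the second scheme appears \emph{after} Proposition~\ref{prop:july10_4}, not before it.
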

\begin{proof}
Any $p \in \mathcal{P}_{L,s}$ has at least one maximal segment,
and if $\wt (p) = 0$ then $p$ has at least one fixed 
minimum (Lemma \ref{lem:jun21_1}).
Let $D$ be a maximal segment, and $Y_1$ be a lowest fixed minimum
in $p$.
In the second labeling scheme, $Y_1$ remains a lowest fixed 
minimum (Proposition \ref{rem:jul1_1}), 
hence a global minimum (Lemma \ref{prop:may18_3}) in any future time.
There exists $j$ such that $Y_1$ takes the spike form
in $T_{s-1}^j(p)$
(Proposition \ref{prop:july27_1}).
Then by Proposition \ref{prop:may25_5} the path $T_{s-1}^j(p)$
is equivalent to a ballot sequence.
\end{proof}

\subsection{The case of positive total weight}
\label{subsec:3_8}
In this subsection we establish 
Theorem \ref{th:may21_1} in the case of $\wt(p)>0$. 
%
First we consider the case in which $p$ has no fixed minima.
\begin{lemma}\label{lem:july24_5}
Any evolvable path of positive total weight
and having no fixed minima is equivalent to a ballot sequence.
\end{lemma}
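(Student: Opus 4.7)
The plan is to pin down the structure of $p$ precisely and then exhibit a cyclic shift making $p$ itself a ballot sequence, without invoking any time evolution.

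First, I would consider the cyclic sequence of types \textbf{A}, \textbf{B}, \textbf{C} attached to the local minima as in the proof of Lemma \ref{lem:jun21_1}. Since every local minimum is floating, each falls into one of these types, and the forbidden juxtapositions \textbf{BA}, \textbf{CB}, \textbf{CA}, \textbf{BB} force the cyclic sequence to be entirely \textbf{A}, entirely \textbf{C}, or empty: any occurrence of \textbf{B} in the cycle triggers one of the forbidden patterns with its cyclic neighbour, and walking once around the cycle any \textbf{A}-to-\textbf{C} transition must eventually be matched by a forbidden \textbf{C}-to-\textbf{A} one.

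Next I would eliminate the all-\textbf{C} case using $\wt(p) > 0$. In this case each singular segment is a single $+a$ surrounded by negative elements of absolute value exceeding $a$, and the no-fixed-minimum hypothesis forces every element strictly between two consecutive singulars to be negative as well (otherwise a negative-to-positive transition inside that stretch would produce a local minimum which no singular segment could touch, giving a fixed one). A direct check of each kind of consecutive pair (inside a negative stretch, or at either junction with $\{+a\}$) then yields $b_k + b_{k+1} < 0$ for every $k$, so $w_k$ strictly decreases across one full period, contradicting $w_L - w_0 = \wt(p) > 0$. The symmetric analysis in the remaining case (all \textbf{A} or empty) gives the dual conclusion $b_k + b_{k+1} > 0$ for every $k$, so $w_k$ is strictly increasing. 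Under the same hypotheses $b_i = -s$ is impossible: a positive right neighbour of $-s$ fits no allowed type of local minimum (forcing a fixed one), while a negative right neighbour gives two consecutive negatives and hence a fixed minimum elsewhere. Evolvability therefore supplies some $i$ with $b_i = +s$.

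Finally, I would cyclically shift $p$ so that $b_1 = +s$. Then $w_0(p) = -(s - b_1)/2 = 0$, and the strict monotonicity just established gives $w_k(p) > 0$ for $1 \leq k \leq L-1$, so the shifted path satisfies all of $w_k \geq 0$ and is therefore a ballot sequence. The main obstacle is the structural step: one has to patiently exclude configurations such as two consecutive negatives and a local minimum $(-c,d)$ with $c > d$ by combining the cyclic restriction on types with the fact that in this case every singular segment must have length one, and it is only through these exclusions that the no-fixed-minimum hypothesis gets translated into the monotonicity of $w_k$ that closes the argument.
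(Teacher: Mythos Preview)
Your proposal is correct and follows essentially the same approach as the paper's own proof: reduce to the all-\textbf{A} (or empty) cyclic type sequence via the classification from Lemma~\ref{lem:jun21_1}, deduce $b_k+b_{k+1}>0$ for every $k$, locate a $+s$ to shift to the front, and conclude via the monotonicity of $w_k$. Your version is more detailed than the paper's---in particular you spell out the elimination of the all-\textbf{C} case by the weight-sign argument and the exclusion of $-s$, both of which the paper leaves implicit (the paper simply asserts ``any maximal segment in $p$ must be of length one and take the form $s$'' without justification).
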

\begin{proof}
Let $p = b_1 \ot \cdots \ot b_L$ be such a path.
Suppose $b_L \ot b_1$ is not a local minimum.
(If it is, consider $T_s(p)$ instead of $p$.)
By the same arguments and through the same
procedure in the proof of
Lemma \ref{lem:jun21_1}, we can obtain a sequence of types 
(\textbf{A},\textbf{B}, or \textbf{C}) of singular segments.
Since $p$ is of positive total weight 
and has no fixed minima, it must take
the form \textbf{A}$...$\textbf{A} or have no local minima at all.
Then any maximal segment in $p$ must be of length one
and take the form $s$.
Apply cyclic shifts appropriate times
to make one of them at the leftmost position, and
call the resulting path $p' = b'_1 (=s)\ot \cdots \ot b'_L$.
Then $p'$ is a ballot sequence because we have $w_0(p')=0$ and
$b'_k + b'_{k+1} > 0$,
hence $w_k(p') > 0$ for all $k \geq 1$.
\end{proof}
Next we consider the case in which $p$ has at least one fixed minimum.
Let $\mathcal{M}_{\rm FX}$ be
the set of all fixed minima in $p$.
We call $Y \in \mathcal{M}_{\rm FX}$ a 
\textbf{virtual lowest fixed minimum} if
$\Wt_{Y}(Y') \geq 0$
holds for any $Y' \in \mathcal{M}_{\rm FX}$.
Clearly such $Y$ exists, because
there exists $Y \in \mathcal{M}_{\rm FX}$ such that
$\Wt (Y') \geq \Wt (Y)$ holds for any $Y' \in \mathcal{M}_{\rm FX}$,
and we have (\ref{eq:jun21_5}).
By {virtue of} Proposition \ref{rem:jul1_1},
if $Y$ is a virtual lowest fixed minimum then it remains so
in any future time.
\begin{lemma}\label{lem:july24_6}
A virtual lowest fixed minimum is a virtual global minimum.
\end{lemma}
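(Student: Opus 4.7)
The plan is to adapt the case analysis from Lemma~\ref{prop:may18_3} (which handles $\wt(p)=0$) to the positive-weight setting. Since the hypothesis already gives $\Wt_Y(Y') \geq 0$ for every fixed $Y' \in \mathcal{M}_{\rm FX}$, it suffices to prove $\Wt_Y(Z) \geq 0$ for every floating minimum $Z$. I would treat each such $Z$ according to whether the singular segment touching it also touches another local minimum.

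Case 1: the singular segment $D$ touching $Z$ also touches another local minimum $Y'$. Then $Y'$ is forced to be fixed because, by definition, $Z$ is the rightmost local minimum touched by $D$. Lemma~\ref{lem:jun13_1} gives $\Wt_{Y'}(Z)=0$, and since $Y'$ and $Z$ are adjacent local minima in the path, no other local minimum can lie strictly between them in the cyclic label order, so the $\wt(p)$-correction in (\ref{eq:jun21_5}) is consistent and additivity $\Wt_Y(Z)=\Wt_Y(Y')+\Wt_{Y'}(Z)$ holds. The hypothesis then yields $\Wt_Y(Z)=\Wt_Y(Y')\geq 0$.

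Case 2: $D$ touches only $Z$, so $Z$ falls into one of the three patterns \textbf{A}, \textbf{B}, \textbf{C} of (\ref{eq:may18_1}). Let $X'$ be the local minimum immediately to the left of $Z$ in the path. I would rerun the explicit calculation from the proof of Lemma~\ref{prop:may18_3} to obtain the strict inequality $\Wt_{X'}(Z)>0$, and appeal to the same adjacency-based additivity to conclude $\Wt_Y(Z)>\Wt_Y(X')$. If $X'$ is fixed, the hypothesis finishes the argument. If $X'$ is itself a floating minimum, I would iterate the dichotomy on $X'$, producing a left-propagating chain $Z, X', X'', \ldots$ along which $\Wt_Y(\cdot)$ is strictly decreasing. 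Because the chain visits distinct local minima and $|\mathcal{M}|$ is finite, it must terminate in at most $|\mathcal{M}|-1$ steps at a minimum that is either fixed or a Case~1 floating minimum, in either event satisfying $\Wt_Y(\cdot)\geq 0$, and this value propagates back to $\Wt_Y(Z)>0$.

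The main obstacle is the bookkeeping of the additivity identity in the positive-total-weight setting, since (\ref{eq:jun21_5}) carries a $\wt(p)$ correction depending on the cyclic position of $Y$ relative to the two elements on the right-hand side. The crucial observation used throughout is that in every application of additivity above the two elements in question are \emph{adjacent} in the path, so no third label can lie strictly between them and the correction term is consistent; once this is granted, the argument reduces to a mechanical transposition of the $\wt(p)=0$ proof.
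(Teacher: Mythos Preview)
Your proposal is correct and follows essentially the same approach as the paper's proof: reduce to showing $\Wt_Y(Z)\geq 0$ for floating $Z$, split on whether the singular segment touching $Z$ touches another (necessarily fixed) minimum, and in the remaining case iterate leftward using the strict inequality $\Wt_{X'}(Z)>0$ from the computation in Lemma~\ref{prop:may18_3}. The only cosmetic difference is that the paper observes that in the type~\textbf{A} case the iterated left neighbours remain of type~\textbf{A} and hence the chain must terminate at a fixed minimum, whereas you invoke finiteness of $\mathcal{M}$ and allow termination at a Case~1 floating minimum as well; your adjacency justification for the additivity of $\Wt$ (the intermediate point is always the immediate cyclic predecessor, hence lies between $Y$ and the current target) is exactly the bookkeeping the paper leaves implicit.
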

\begin{proof}
Let $p$ be a path of positive total weight with
at least one fixed minimum, 
$Y$ a virtual lowest fixed minimum in $p$.
We show that for any floating minimum $Z$ the relation
$\Wt_{Y}(Z) \geq 0$ holds.
We give a proof based on 
the same idea in that of Lemma \ref{prop:may18_3}.
Thus we can assume that the singular segment touching $Z$
does not touch any other local minima.
Then $Z$ is in one of the three cases in (\ref{eq:may18_1}).
For instance consider type \textbf{A} case.
(The other cases can be proved in a similar way.)
By the same argument in that proof,
we find there is an adjacent local minimum $X_1$ to the
left of $Z$ which yields $\Wt_{X_1} (Z) > 0$.
If $X_1$ is a floating minimum, then it must be in type \textbf{A} case
again, so we have $\Wt_{X_2} (X_1) > 0$ for the adjacent local minimum $X_2$
to the left of $X_1$.
By repeating this procedure we will arrive at some $X_n$
which is a fixed minimum.
Then $\Wt_{X_n} (Z) = \Wt_{X_n} (X_{n-1}) + \cdots + \Wt_{X_2} (X_1) 
+ \Wt_{X_1} (Z) > 0$.
This completes the proof since 
$\Wt_{Y}(Z) = \Wt_{Y}(X_n) + \Wt_{X_n} (Z) >0$.
\end{proof}

Now we establish the main result
(Theorem \ref{th:may21_1}) in the case of $\wt(p)>0$.
\begin{proposition}\label{prop:july24_7}
Let $p \in \mathcal{P}_{L,s}$ be a path satisfying $\wt (p) > 0$.
Then there exists $j$ 
such that $T_{s-1}^j (p)$ is equivalent to a ballot sequence.
\end{proposition}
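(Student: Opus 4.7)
The plan is to mirror the structure of the zero total weight case (Proposition \ref{prop:july24_4}), now split into two subcases depending on whether $p$ contains fixed minima. The preparatory lemmas have been arranged so that each subcase reduces immediately to a previously established result.

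First I would dispose of the easy subcase: if $p$ has no fixed minima, then Lemma \ref{lem:july24_5} directly gives that $p$ is already equivalent to a ballot sequence, and we may take $j = 0$.

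Next, assume $p$ has at least one fixed minimum. I would pick a virtual lowest fixed minimum $Y$ in $p$, whose existence is guaranteed by the remark preceding Lemma \ref{lem:july24_6} (every finite subset of $\mathcal{M}_{\rm FX}$ contains such a $Y$, using the definition of modified weight difference in \eqref{eq:jun21_5}). By Proposition \ref{rem:jul1_1}, adopting the second labeling scheme, $Y$ remains a virtual lowest fixed minimum under every iterate of $T_{s-1}$; by Lemma \ref{lem:july24_6} it is a virtual global minimum at each such time. By Proposition \ref{prop:july27_1} there exists $j$ such that $T_{s-1}^j(p)$ has $Y$ in spike form. Then Proposition \ref{prop:may25_5} (applied to a virtual global minimum of spike form in a path of nonnegative total weight) immediately yields that $T_{s-1}^j(p)$ is equivalent to a ballot sequence.

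The only subtlety, and the point most worth double-checking, is that the labeling carries through consistently: $Y$ must be tracked via the second labeling scheme so that Proposition \ref{rem:jul1_1} (conservation of modified weight differences between fixed minima) and Proposition \ref{prop:july10_4} (persistence of the fixed/floating types) can both be invoked simultaneously. With this in place the three key inputs---persistence of virtual lowest fixed minima, the fixed $\Rightarrow$ virtual global implication of Lemma \ref{lem:july24_6}, and eventual spike formation from Proposition \ref{prop:july27_1}---combine exactly as in the zero total weight case, and no further work is required.
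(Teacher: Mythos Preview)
Your proposal is correct and matches the paper's own proof essentially line for line: the paper also splits on whether $p$ has fixed minima, invokes Lemma \ref{lem:july24_5} for the trivial subcase, and otherwise directs the reader to repeat the argument of Proposition \ref{prop:july24_4} with Lemma \ref{lem:july24_6} in place of Lemma \ref{prop:may18_3}. Your explicit unpacking of that argument (virtual lowest fixed minimum persists via Proposition \ref{rem:jul1_1}, is a virtual global minimum by Lemma \ref{lem:july24_6}, eventually spikes by Proposition \ref{prop:july27_1}, then conclude by Proposition \ref{prop:may25_5}) and your remark on the second labeling scheme are exactly what the paper intends.
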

\begin{proof}
If $p$ has no fixed minima, there is none to be done
(Lemma \ref{lem:july24_5}).
We consider the case where $p$ has at least one fixed minimum.
Then the proof proceeds in the same way as that of
Proposition \ref{prop:july24_4} if we replace
Lemma \ref{prop:may18_3} by
Lemma \ref{lem:july24_6}.
\end{proof}
\begin{example}
Consider 
\begin{math}
p=(21111)(22211)(21111)(21111)(22222)(22111),
\end{math}
which is the same path in Example \ref{ex:july23_1}.
Its time evolution is given as follows.
\begin{align*}
p &= 3 \ot \dot{1} \otb 3 \ot 3 \ot \dot{5} \otn 1 \, (\ot)\\
T_4(p) &= 3 \ot 3 \ot \dot{1} \otb 3 \ot 1 \ot \dot{5} \, (\otn)\\
T_4^2(p) &= \dot{5} \otn 5 \ot 3 \ot \dot{1} \otb 3 \ot \dot{1}\, (\ot)
\end{align*}
Here $\bullet$ denotes a floating minimum,
$\nabla$ a (virtual lowest) fixed minimum.
The last path is equivalent to a ballot sequence.
In fact we have
\begin{displaymath}
T_5^5 T_4^2(p) =(11111)(21111)(22211)(21111)(22211)(22222).
\end{displaymath}
\end{example}

\section{Proof in the Case of Even Cell Capacity}
\label{sec:4}
The main result of this paper
(Theorem \ref{th:may21_1})
is valid for any cell capacity $s$, regardless of whether it
is odd or even.
We shall not repeat all of the arguments in the previous section,
but rather discuss some of them that have
differences between the two cases.
To begin with
we consider the definition of local minimum/maximum.
Let $s(\geq 2)$ be an even integer.
Recall that 
given a $T_{s-1}$-evolvable path 
$p = b_1 \ot \cdots \ot b_L \in (B_s)^{\ot L}$
we can determine $v_i \in B_{s-1} \, (0 \leq i \leq L)$
uniquely as in (\ref{eq:jun14_2}).
\begin{definition}\label{def:july13_1}
$b_k \ot b_{k+1}$ is called a \textbf{local minimum}
(resp.~\textbf{local maximum}) if
$b_k \leq 0, \, b_{k+1} \geq 0$ and $v_k \leq -1$ 
(resp.~$b_k \geq 0, \, b_{k+1} \leq 0$ and $v_k \geq 1$ ).
A local minimum/maximum is also called an \textbf{extremum}.
\end{definition}
By this definition and Lemma \ref{lem:july8_1}
we obtain the following results immediately.
\begin{proposition}\label{prop:july18_2}
It follows that:
\begin{enumerate}
\item
If $b_k \leq -2$ and $b_{k+1} \geq 0$ 
(resp.~$b_k \geq 2$ and $b_{k+1} \leq 0$ )
then $b_k \ot b_{k+1}$ is a local minimum (resp. maximum).
\item If $b_{k+1} = 0$ then $b_k \ot b_{k+1}$ is an extremum.
\item In a level $0$ segment local minima and local maxima
alternate successively.
\item Any level $0$ segment is left singular.
\item $b_k \ot b_{k+1}$ and $b_{k+1} \ot b_{k+2}$ cannot be
local minima simultaneously.
\end{enumerate}
\end{proposition}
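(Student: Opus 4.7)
The proof starts with a single structural observation: since $s$ is even, every $b_k \in B_s$ is an even integer while every $v_k \in B_{s-1}$ is odd, so $v_{k-1} + b_k$ is always odd and in particular nonzero. Consequently exactly one branch of Lemma \ref{lem:july8_1} applies at each site, and that branch gives $v_k \in \{b_k-1,\, b_k+1\}$. I would record this once and invoke it for every item.

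For Item 1, the hypothesis $b_k \leq -2$ forces $v_k \leq -1$ via $v_k \in \{b_k-1,b_k+1\}$, which together with $b_k \leq 0$ and $b_{k+1} \geq 0$ is exactly Definition \ref{def:july13_1}; the local-maximum case is the sign-symmetric statement. For Item 2, set $b_{k+1} = 0$ and split on the sign of $b_k$: if $|b_k| \geq 2$ the two-branch formula gives $|v_k| \geq 1$ with the same sign as $b_k$, a minimum or maximum accordingly; if $b_k = 0$ the two branches give $v_k = \mp 1$, again an extremum (note the formula never produces $v_k = 0$ since $v_k$ is odd).

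For Item 5 I argue by contradiction. If both $b_k \ot b_{k+1}$ and $b_{k+1} \ot b_{k+2}$ were local minima, then $b_{k+1}$ is squeezed between $\geq 0$ and $\leq 0$ and so equals $0$; but $v_k \leq -1$ forces $v_k + b_{k+1} < 0$, so Lemma \ref{lem:july8_1} yields $v_{k+1} = 0 + 1 = 1$, contradicting the requirement $v_{k+1} \leq -1$. The sign-symmetric argument rules out two consecutive local maxima. Item 3 now follows: inside a level $0$ segment every $b_k$ equals $0$, so by Item 2 every consecutive pair is an extremum, and these two non-repetition facts force strict alternation.

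Item 4 is a corollary of Item 2 plus the uniqueness of the segment decomposition. The tail of any level $0$ segment $D_i$ equals $0$, so by Item 2 the junction pair with the preceding segment $D_{i-1}$ is an extremum. The segment $D_{i-1}$ cannot itself be a run of zeros, because $D_{i-1} \ot D_i$ would then be another such run and hence an element of $\mathcal{A}$, violating the defining condition $D_{i-1} \ot D_i \notin \mathcal{A}$. Thus the level of $D_{i-1}$ is strictly greater than $0$, which is exactly what is needed for $D_i$ to be left singular. The only load-bearing step in the whole proposition is the parity observation in the first paragraph; once it is in place, the remaining statements reduce to routine case analysis, so I do not anticipate any serious obstacle.
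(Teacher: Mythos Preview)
Your proof is correct and follows essentially the same route as the paper: the key observation that $v_k \in \{b_k-1,b_k+1\}$ (which the paper uses implicitly) drives Items 1, 2, 5, and then 3. For Item 4 you invoke Item 2 (the tail of a level $0$ segment is $0$) together with the decomposition rule $D_{i-1}\ot D_i \notin \mathcal{A}$, whereas the paper cites Item 1 (the head of $D_{i-1}$ has $|b|\ge 2$); these are equivalent, and you have in fact spelled out the level-comparison step that the paper leaves implicit.
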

\begin{proof}
(1) The condition $b_k \leq -2$ implies $v_k \leq -1$, and
$b_k \geq 2$ does $v_k \geq 1$.
(2) In the case of $b_k=0$ it is clear by definition.
If $b_k \ne 0$ then it follows from Item 1.
(3) Follows from Item 2 and by Lemma \ref{lem:july8_1}.
(4) Follows from Item 1.
(5) If it can, $b_{k+1}=0$.
Then $v_k \leq -1$ implies $v_{k+1} \geq 1$ by Lemma \ref{lem:july8_1}.
\end{proof}
A generalization of Item 3 is given as follows.
\begin{proposition}
Local minima and local maxima alternate along a path.
\end{proposition}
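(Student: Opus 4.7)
The claim is cyclic, so it suffices to rule out two consecutive extrema of the same type. By the duality between the local minimum and local maximum conditions (swap signs of all $b_i$ and $v_i$), I only need to rule out two consecutive local minima. The plan is to suppose for contradiction that $b_j \otimes b_{j+1}$ and $b_k \otimes b_{k+1}$ are both local minima with $j < k$ and no extremum at any intermediate position $i \in \{j+1,\ldots,k-1\}$, and then derive a contradiction by tracking the signs of the $b_i$'s and the value of $v_{j+1}$.

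First, Item 5 of Proposition \ref{prop:july18_2} gives $k \geq j+2$. Next, Item 2 of the same proposition implies that if any of $b_{j+2},\ldots,b_k$ vanished, then the preceding pair would be an extremum, which is excluded; hence these entries are all nonzero. For each interior position $j+2 \leq i \leq k-1$ (if any), both $b_i,b_{i+1}$ are nonzero and $b_i \otimes b_{i+1}$ is not an extremum, so Item 1 of Proposition \ref{prop:july18_2} (and its max dual) forces $b_i$ and $b_{i+1}$ to have the same sign. Iterating, $b_{j+2},\ldots,b_k$ all share a common sign, and since $b_k \leq 0$ is nonzero we conclude $b_{j+2},\ldots,b_k < 0$. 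Finally I inspect position $j+1$: from the local minimum at $j$ we have $b_{j+1} \geq 0$, while $b_{j+2} < 0$. If $b_{j+1} \geq 2$, Item 1 immediately makes $b_{j+1} \otimes b_{j+2}$ a local maximum, contradicting our assumption. If instead $b_{j+1} = 0$, then the local minimum condition at $j$ gives $v_j \leq -1$, so $v_j + b_{j+1} = v_j < 0$, and Lemma \ref{lem:july8_1} yields $v_{j+1} = b_{j+1} + 1 = 1$; but then $b_{j+1} = 0$, $b_{j+2} \leq 0$, and $v_{j+1} \geq 1$ satisfy Definition \ref{def:july13_1} for a local maximum at $j+1$, again a contradiction.

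The only delicate point is the case $b_{j+1} = 0$, since there the local maximum / minimum distinction is governed by $v_{j+1}$ rather than by the signs of the $b$'s alone. The crucial step is to transport the inequality $v_j \leq -1$ one site to the right via Lemma \ref{lem:july8_1} in order to pin down $v_{j+1} = 1$. Everything else is a straightforward combination of Items 1, 2, and 5 of Proposition \ref{prop:july18_2}, and the dual argument (replace $b_i$ by $-b_i$, $v_i$ by $-v_i$ throughout) handles the case of two consecutive local maxima verbatim.
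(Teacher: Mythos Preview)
Your proof is correct and uses the same ingredients as the paper (Items 1, 2, 5 of Proposition~\ref{prop:july18_2} together with the $v$-transport via Lemma~\ref{lem:july8_1} for the $b_{j+1}=0$ case). The paper argues directly --- from a local minimum it locates the leftmost subsequent $b_{k+1}\le 0$ and asserts $b_k\otimes b_{k+1}$ is a local maximum --- while you reach the same conclusion by contradiction; the two arguments are essentially equivalent, with yours spelling out the delicate $b_{j+1}=0$ step that the paper leaves implicit.
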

\begin{proof}
Let $b_j \ot b_{j+1} \, (X)$ be a local minimum.
Among $b_{j+2},b_{j+3},\ldots$ let $b_{k+1}$ be the leftmost element
that satisfies $b_{k+1} \leq 0$.
Then $b_k \ot b_{k+1} \, (Y)$ is a local maximum and there {are}
no extrema between $X$ and $Y$.
Then among $b_{k+2},b_{k+3},\ldots$ let $b_{l+1}$ be the leftmost element
that satisfies $b_{l+1} \geq 0$.
Then $b_l \ot b_{l+1} \, (Z)$ is a local minimum and there are
no extrema between $Y$ and $Z$.
\end{proof}
\begin{coro}\label{prop:july13_2}
The assertion in Proposition \ref{prop:may30_1} also
holds in the case of even $s$.
\end{coro}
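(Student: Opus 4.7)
The plan is to rerun the proof of Proposition \ref{prop:may30_1} essentially verbatim, since both Lemmas \ref{lem:may29_1} and \ref{lem:july13_5} are formulated purely in terms of the values of $w_k$ and $W_k$ and the signs of the sums $b_k+b_{k+1}$, with no reference whatsoever to the definition of local minimum. The only change forced by the new Definition \ref{def:july13_1} is at the step where one extracts a local maximum between two consecutive local minima: in the odd case this came for free from strict sign changes, while in the even case one invokes Proposition \ref{prop:july13_2}'s predecessor (the alternation of extrema) instead.

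The only-if direction is immediate: if $p$ is a ballot sequence then $w_k(p)\ge 0$ for every $k$, which directly gives $w_0(p),w_{L-1}(p)\ge 0$, and combined with $w_{k-1}(p)\ge 0$ yields $W_k(p)\ge 0$ by Lemma \ref{lem:july13_5} for all $k$ (hence in particular for $k\in\mathcal{M}(p)$).

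For the if direction, fix adjacent local minima $X=b_j\ot b_{j+1}$ and $Z=b_l\ot b_{l+1}$ along $p$ with $j+1<l$. From $W_j(p),W_l(p)\ge 0$ and Lemma \ref{lem:july13_5} we get $w_j(p)\ge 0$ and $w_{l-1}(p)\ge 0$. The alternation result supplies a unique local maximum $Y=b_k\ot b_{k+1}$ between $X$ and $Z$, and inspection of its proof shows that $b_{j+2},\ldots,b_k$ are all $\ge 0$ while $b_{k+1},\ldots,b_l$ are all $\le 0$, because $Y$ is located at the leftmost sign change after $X$. This is precisely the pointwise sign information that Lemma \ref{lem:may29_1} consumes: its first bullet propagates $w_j(p)\ge 0$ forward to $w_i(p)\ge 0$ for $j\le i\le k-1$, and its second bullet propagates $w_{l-1}(p)\ge 0$ backward to $w_i(p)\ge 0$ for $k\le i\le l-1$. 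The two ranges together cover every index strictly between $X$ and $Z$. The boundary situations in which $X$ is the rightmost or $Z$ the leftmost local minimum are treated exactly as in the original proof, by formally putting $l=L$ or $j=0$ and using the hypotheses $w_0(p)\ge 0$ and $w_{L-1}(p)\ge 0$.

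The only subtlety special to even $s$ is that an extremum can now have $b_k=0$ or $b_{k+1}=0$, so certain sign inequalities between adjacent extrema fail to be strict; but since Lemma \ref{lem:may29_1} is phrased with the non-strict inequalities $b_k+b_{k+1}\ge 0$ and $b_{k+1}+b_{k+2}\le 0$, this poses no difficulty. The only real point to verify carefully is that the predecessor of Corollary \ref{prop:july13_2} truly delivers the pointwise sign pattern (not merely the existence of a maximum), which I expect to be the main place to pause — but it follows immediately from the \emph{leftmost} choice in the construction of $Y$.
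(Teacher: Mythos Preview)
Your proposal is correct and follows essentially the same approach as the paper: the paper states the result as an immediate corollary of the alternation proposition (with no explicit proof), and your argument makes explicit precisely why that alternation result is the one missing ingredient needed to rerun the proof of Proposition~\ref{prop:may30_1} verbatim. Your observation that the ``leftmost'' construction in the alternation proof delivers the pointwise sign pattern $b_{j+2},\ldots,b_k\ge 0$ and $b_{k+1},\ldots,b_l\le 0$ is exactly the content the paper leaves implicit.
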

\begin{example}\label{ex:july10_1}
Consider the following path $p \in (B_4)^{\ot 8}$
\begin{displaymath}
p = 0  \ot 0 \ot -2 \ot 0 \ot 0 \ot -2 \ot 0 \ot 4.
\end{displaymath}
In this path there are three local maxima,
$b_8 \ot b_1(=4 \ot 0) , b_2 \ot b_3 (= 0 \ot -2)$, 
and $b_4 \ot b_5 (= 0 \ot 0)$.
The part $b_1 \ot b_2 (= 0 \ot 0)$ is a level $0$ singular
segment and a floating minimum.
The parts $b_3 \ot b_4, b_6 \ot b_7 (= -2 \ot 0)$ are fixed minima.
\end{example}
\begin{proposition}\label{prop:july11_1}
The assertion in Proposition \ref{lem:may16_1} also
holds in the case of even $s$.
\end{proposition}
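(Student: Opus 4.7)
The plan is to mirror the case-analysis structure of the odd case (Proposition \ref{lem:may16_1}) but to accommodate the broader Definition \ref{def:july13_1}, which admits local minima at positions with $b_k = 0$ or $b_{k+1} = 0$ whenever $v_k \leq -1$. I distinguish four cases by the vanishing of $b_k, b_{k+1}$: (A) $b_k \leq -2, b_{k+1} \geq 2$; (B) $b_k \leq -2, b_{k+1} = 0$; (C) $b_k = 0, b_{k+1} \geq 2$; (D) $b_k = 0, b_{k+1} = 0$. In (A) and (B), the condition $v_k \leq -1$ is automatic from the computation in Proposition \ref{prop:july18_2}(1); in (C) and (D), Lemma \ref{lem:july8_1} forces $v_{k-1} \geq 1$. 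Inside each case I then split on the sign of $v_k + b_{k+1}$, which is odd and hence nonzero since $v_k \in B_{s-1}$ is odd and $b_{k+1} \in B_s$ is even.

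The straightforward subcases are those in which Lemma \ref{lem:july8_1} delivers $b'_{k+1} \leq -2$ and $b'_{k+2} \geq 0$. Then Proposition \ref{prop:july18_2}(1) identifies $b'_{k+1} \ot b'_{k+2}$ as a local minimum of $p' = T_{s-1}(p)$ without any need to inspect the auxiliary values $v'_j$ of $p'$. This covers cases (B), (D), and the non-boundary parts of (A), by computations essentially identical to the odd case.

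The principal obstacle is the boundary scenario $b'_{k+1} = 0$, which occurs in case (C) and in case (A) with $v_k = -1$. The local-minimum criterion for $p'$ requires its own $v'$-values, determined by a separate transition diagram that does not admit a direct formula in terms of the $v_j$ of $p$. To bypass this I invoke Proposition \ref{prop:july18_2}(2): since $b'_{k+1} = 0$, the pair $b'_k \ot b'_{k+1}$ is an extremum of $p'$, so its type is read off from the sign of $v'_k$. Computing $b'_k$ via Lemma \ref{lem:july8_1}, one finds $b'_k \leq -2$ (in which case Proposition \ref{prop:july18_2}(1) concludes directly), or $b'_k \geq 2$ (case (C)), or $b'_k = 0$ (case (A) with $v_{k-1} = 1$). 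In the second instance, $b'_k \ot b'_{k+1}$ is forced to be a local maximum, whence $v'_k \geq 1$; applying Lemma \ref{lem:july8_1} to $v'_k \ot b'_{k+1} = v'_k \ot 0$ yields $v'_{k+1} = -1$, so $b'_{k+1} \ot b'_{k+2}$ is a local minimum (using $b'_{k+2} \geq 0$ already established). In the third instance, the same application of Lemma \ref{lem:july8_1} forces $v'_k$ and $v'_{k+1}$ to carry opposite signs, so precisely one of $b'_k \ot b'_{k+1}$ and $b'_{k+1} \ot b'_{k+2}$ is the required local minimum.
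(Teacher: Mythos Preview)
Your proof is correct and follows essentially the same approach as the paper: both split on the sign of $v_k + b_{k+1}$, dispatch the easy subcase via $b'_{k+1} \leq -2$ and Proposition~\ref{prop:july18_2}(1), and in the boundary subcase $b'_{k+1} = 0$ appeal to the next-step values $v'_k, v'_{k+1}$ to decide which of $b'_k \ot b'_{k+1}$ or $b'_{k+1} \ot b'_{k+2}$ is the local minimum. Your outer four-way split on the vanishing of $b_k, b_{k+1}$ and the explicit computation of $b'_k$ in the boundary case are organizational refinements that the paper leaves implicit, but the core argument is identical.
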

\begin{proof}
Suppose $b_k \ot b_{k+1}$ is a local minimum, i.~e.~$b_k \leq 0,\, b_{k+1}\geq 0$
and $v_k \leq -1$.
In addition to (\ref{eq:july9_1})
\begin{displaymath}
\batten{v_{k-1}}{b_k}{b'_k}{v_k} \!\!\! \batten{}{b_{k+1}}{b'_{k+1}}{v_{k+1}}\,
\batten{}{b_{k+2}}{b'_{k+2}}{v_{k+2}}
\end{displaymath}
we use part of the diagram for the
next time step.
\begin{displaymath}
\batten{}{b'_k}{}{v'_k} \!\!\! 
\batten{}{b'_{k+1}}{}{v'_{k+1}}\,
\batten{}{b'_{k+2}}{}{} \\
\end{displaymath}
(1) Suppose $v_k + b_{k+1} \leq -1$.
Then by Lemma \ref{lem:july8_1}
we have $b'_{k+1} = v_k - 1 \leq -2$ and
$b'_{k+2} \geq v_{k+1} - 1 = b_{k+1} \geq 0$.
Hence $b'_{k+1} \ot b'_{k+2}$ is a local minimum
by Item 1 of Proposition \ref{prop:july18_2}.
(2) Suppose $v_k + b_{k+1} \geq 1$.
Then
we have $b'_{k+1} = v_k + 1 \leq 0$ and
$b'_{k+2} \geq v_{k+1} - 1 = b_{k+1}-2 \geq -v_k -1 \geq 0$.
It follows that if $b'_{k+1} \leq -2$
then $b'_{k+1} \ot b'_{k+2}$ is a local minimum, 
{otherwise} either
$b'_{k+1} \ot b'_{k+2}$  or $b'_{k} \ot b'_{k+1}$ 
is a local minimum according to $v'_k \geq 1$ or $v'_k \leq -1$. 
\end{proof}
\begin{coro}\label{prop:july11_2}
Suppose both
$b_{k} \ot b_{k+1}$ and
$b'_{k} \ot b'_{k+1}$
are local minima.
Then $b_{k} = -2$, $b_{k+1} \geq 2$, $v_k = -1$, 
and $b'_{k+1} =0$.
\end{coro}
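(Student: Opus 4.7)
The plan is to trace through the case analysis in the proof of Proposition \ref{prop:july11_1} and isolate exactly the configuration in which $b'_k\otimes b'_{k+1}$ (rather than $b'_{k+1}\otimes b'_{k+2}$) is the new local minimum.

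Starting from the hypothesis that $b_k\otimes b_{k+1}$ is a local minimum, we have $b_k\leq 0$, $b_{k+1}\geq 0$, and $v_k\leq -1$. The proof of Proposition \ref{prop:july11_1} splits on the sign of $v_k+b_{k+1}$. In Case (1), $v_k+b_{k+1}\leq -1$, one computes $b'_{k+1}=v_k-1\leq -2$, so $b'_k\otimes b'_{k+1}$ cannot be a local minimum (its right component is negative). Thus we must be in Case (2), where $v_k+b_{k+1}\geq 1$ and $b'_{k+1}=v_k+1\leq 0$. Within Case (2), if $b'_{k+1}\leq -2$ then again $b'_k\otimes b'_{k+1}$ fails to be a local minimum, so the only remaining possibility is $b'_{k+1}\geq -1$. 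Since $s$ is even, every element of $B_s$ is even, forcing $b'_{k+1}=0$ and hence $v_k=-1$. The inequality $v_k+b_{k+1}\geq 1$ then gives $b_{k+1}\geq 2$.

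Next I would pin down $b_k$ using Lemma \ref{lem:july8_1} applied to $R:v_{k-1}\otimes b_k\mapsto b'_k\otimes v_k$ with $v_k=-1$. There are two sub-cases: either $v_{k-1}+b_k>0$, giving $v_k=b_k-1$ hence $b_k=0$ and $b'_k=v_{k-1}+1$; or $v_{k-1}+b_k<0$, giving $v_k=b_k+1$ hence $b_k=-2$ and $b'_k=v_{k-1}-1$. Assuming $b'_k\otimes b'_{k+1}$ is a local minimum forces $b'_k\leq 0$. In the first sub-case this would require $v_{k-1}\leq -1$, contradicting $v_{k-1}+b_k=v_{k-1}>0$. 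Hence only the second sub-case survives, yielding $b_k=-2$ (with $v_{k-1}\leq 1$, which is automatic from the sign condition $v_{k-1}+b_k<0$, i.e.\ $v_{k-1}<2$). Combining everything: $b_k=-2$, $b_{k+1}\geq 2$, $v_k=-1$, and $b'_{k+1}=0$, which is exactly the claim.

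There is no real obstacle here; the result is essentially a bookkeeping corollary of the case split already performed in Proposition \ref{prop:july11_1}. The only point requiring a moment's care is using the parity of elements of $B_s$ (even since $s$ is even) to conclude that $b'_{k+1}\in\{-1,0\}$ collapses to $b'_{k+1}=0$, which is what pins down $v_k=-1$ and ultimately $b_k=-2$. The condition $v'_k\leq -1$ needed for $b'_k\otimes b'_{k+1}$ to qualify as a local minimum in the sense of Definition \ref{def:july13_1} is part of the hypothesis, so no further verification is required.
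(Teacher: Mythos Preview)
Your proof is correct and follows essentially the same route as the paper's: both trace the case split in Proposition~\ref{prop:july11_1} to force $b'_{k+1}=0$ (hence $v_k=-1$ and $b_{k+1}\geq 2$), and then use Lemma~\ref{lem:july8_1} at the vertex $v_{k-1}\otimes b_k\mapsto b'_k\otimes v_k$ to rule out $b_k=0$ via the contradiction $b'_k\geq 2$ versus $b'_k\leq 0$. Your write-up is a bit more explicit about the parity step and the two sub-cases for $b_k$, but the logic is identical.
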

\begin{proof}
{}From the proof of the proposition, 
we have $b'_{k+1} =0$.
Since $b_{k} \ot b_{k+1}$ is a local minimum, it requires
$v_k = -1$, hence $b_k = 0$ or $-2$.
Then, since $b'_{k+1} = v_k + 1$ implies
$v_k + b_{k+1} \geq 1$ (Lemma \ref{lem:july8_1})
we have $b_{k+1} \geq 2$.
Suppose $b_k=0$.
Then, since $b_{k} = v_k + 1$ implies
$v_{k-1}+b_k \geq 1$ we have
$b_k' = v_{k-1} + 1 \geq 2$.
This contradicts to our assumption $b_k' \leq 0$.
Hence $b_k = -2$.
\end{proof}
This corollary characterizes the cases where the local minimum
does not \textit{shift} to the right.
\begin{example}\label{ex:july11_4}
Consider the time evolution of
the path in Example \ref{ex:july10_1}.
We use $\dot{2},\dot{4},\ldots$ instead of $-2,-4$.
\begin{align*}
p &= 0  \stackrel{\bullet}{\ot} 0 \ot \dot{2} \stackrel{\vee}{\ot} 0 \ot 0 
\ot \dot{2} \stackrel{\nabla}{\ot} 0 \ot 4 \,(\ot) \\
T_3(p) &= 4  \ot \dot{2} \stackrel{\bullet}{\ot} 0 \ot \dot{2}
 \stackrel{\vee}{\ot} 2 \ot \dot{2} \ot \dot{2} \stackrel{\nabla}{\ot} 2 
\,(\ot) \\
T_3^2(p) &= 2  \ot 4 \ot \dot{4} \stackrel{\vee}{\ot} 0 \ot 
0 \stackrel{\bullet}{\ot} 0 \ot \dot{2} \stackrel{\nabla}{\ot} 0 \,(\ot) \\
T_3^3(p) &= 2  \ot 2 \ot 2 \ot \dot{4} \stackrel{\vee}{\ot}
2 \ot \dot{2} \stackrel{\bullet}{\ot} 0 \ot \dot{2} 
\,(\stackrel{\nabla}{\ot})\\
T_3^4(p) &= 0  \stackrel{\bullet}{\ot} 2 \ot 2 \ot 0 \ot
\dot{4} \stackrel{\vee}{\ot} 4 \ot \dot{4} \stackrel{\nabla}{\ot} 0
\,(\ot) \\
T_3^5(p) &= 2 \ot 0 \stackrel{\bullet}{\ot} 
2 \ot 2 \ot \dot{2} \ot \dot{2} \stackrel{\vee}{\ot}
2 \ot \dot{4} \,(\stackrel{\nabla}{\ot}) \\
T_3^6(p) &= \dot{4} \stackrel{\nabla}{\ot} 4 \ot 0
\stackrel{\bullet}{\ot} 2 \ot 0 \ot \dot{2} 
\stackrel{\vee}{\ot} 0 \ot 0 \,(\ot)
\end{align*}
Here $\bullet$ denotes a floating minimum,
$\vee$ a fixed minimum (not lowest), and
$\nabla$ a lowest fixed minimum.
\end{example}

In subsec.~\ref{subsec:3_4} we classified the moves of
the local minima into three types.
In the case of even $s$ we classify them into two types.
Recall the diagram in Proposition \ref{prop:july11_1}
where $b_k \ot b_{k+1}$ is assumed to be a local minimum.
If $v_k + b_{k+1} \geq 1$ {then} we call the move \textbf{type I},
while if $v_k + b_{k+1} \leq 1$ {then} call it \textbf{type II}.
Although the type I move includes two cases according to
whether the local minimum shifts or not,
we do not distinguish them.
\begin{remark}\label{rem:july24_2}
Consider the configurations of local minima as in Remark \ref{rem:july24_1}.
For any $v_k \in \{-1,-3,\ldots , -s+1 \}$
we can choose $b_k = v_k \pm 1$ and $b_{k+1} \in \{ 0,2, \ldots, s \}$.
So there are $\frac{s(s+2)}{2}$ configurations in total.
Half of them are classified into type I, and the others are into type II.
\end{remark}
\begin{lemma}\label{prop:july11_3}
The assertion in Lemma \ref{lem:may16_3} also
holds in the case of even $s$.
\end{lemma}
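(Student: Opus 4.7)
The plan is to mirror the proof of Lemma \ref{lem:may16_3}, with two main adaptations required by the even-$s$ setting. First, there is no type III move, since the type III configuration $v_k = 0$ cannot occur when $s$ is even (the weights in $B_{s-1}$ are all odd). Second, level $0$ segments must be handled: by Proposition \ref{prop:july18_2}(4) such a segment is automatically left singular, which takes the place of Lemma \ref{lem:jun14_1} in the corner cases $b_k = 0$ or $b_{k+1} = 0$. As before, the key reduction is that a local minimum $b_k \ot b_{k+1}$ is floating if and only if it is the rightmost local minimum touched by some singular segment, so my goal is to translate this geometric condition into the algebraic equalities stated in the lemma.

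For Item 1 ($v_k + b_{k+1} \leq -1$, so $b'_{k+1} = v_k - 1$ and $v_{k+1} = b_{k+1} + 1$ by Lemma \ref{lem:july8_1}), I would focus on the segment $D$ containing $b_{k+1}$. Since $v_{k+1} = b_{k+1} + 1$, Lemma \ref{lem:jun14_1} (for $b_{k+1} > 0$) or Proposition \ref{prop:july18_2}(4) (for $b_{k+1} = 0$) shows that $D$ is left singular in every case. The remaining question is whether $D$ is also right singular and thus a genuine singular segment touching $b_k \ot b_{k+1}$ as its rightmost local minimum. This is decided by the sign of $v_{k+1} + b_{k+2}$: the choice $b'_{k+2} = v_{k+1} - 1$ corresponds to $v_{k+1} + b_{k+2} \leq -1$, i.e., $b_{k+2} \leq -b_{k+1} - 2$, which is precisely the condition that the segment to the right of $D$ has strictly greater level. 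Hence floating iff $b'_{k+2} = v_{k+1} - 1$.

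For Item 2 ($v_k + b_{k+1} \geq 1$, so $b'_{k+1} = v_k + 1$), I would carry out the symmetric analysis on the segment $D'$ containing $b_k$. Applying Lemma \ref{lem:july8_1} to the adjacent transition $v_{k-1} \ot b_k \to b'_k \ot v_k$ yields the dichotomy $b_k = v_k \pm 1$, and $b_k = v_k + 1$ coincides with the left-singular condition for $D'$ via Lemma \ref{lem:jun14_1} when $b_k < 0$ (and is the only option consistent with $v_k \leq -1$ when $b_k = 0$, in which case Proposition \ref{prop:july18_2}(4) again ensures left singularity). A direct calculation using $v_k + b_{k+1} \geq 1$ together with $b_k = v_k + 1$ gives $b_k + b_{k+1} \geq 2$, forcing $D'$ to be right singular at its junction with the segment of $b_{k+1}$. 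So $b_k = v_k + 1$ makes $D'$ singular and $b_k \ot b_{k+1}$ the unique local minimum it touches on the right, hence floating; the complementary case $b_k = v_k - 1$ leaves $D'$ either not left singular or not right singular, hence fixed.

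The main obstacle I expect is the unified handling of the level $0$ cases $b_k = 0$ or $b_{k+1} = 0$, since the standard characterization of Lemma \ref{lem:jun14_1} does not apply there and must be replaced by Proposition \ref{prop:july18_2}(4). In these boundary situations one must verify directly, using the constraint $v_k \leq -1$ from Definition \ref{def:july13_1} and the combinatorial-$R$ formulas of Lemma \ref{lem:july8_1}, that the same algebraic identities $b_k = v_k \pm 1$ and $b'_{k+2} = v_{k+1} \pm 1$ continue to discriminate between floating and fixed minima. Once these corner cases are dispatched, the remainder of the argument is routine bookkeeping paralleling that of Lemma \ref{lem:may16_3}.
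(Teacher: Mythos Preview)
Your plan follows the same route as the paper: rerun the odd-$s$ argument of Lemma \ref{lem:may16_3}, drop type III (since $v_k$ is odd when $s$ is even), and substitute Proposition \ref{prop:july18_2}(4) for Lemma \ref{lem:jun14_1} whenever a level-$0$ segment appears. The paper's own proof is terser---it simply asserts that the type I and type II parts of the old proof go through with this substitution---but the substance is identical.

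There is, however, one small gap in your plan that the original proof of Lemma \ref{lem:may16_3} does close and that you do not. In each of Items 1 and 2 you analyse only \emph{one} of the two segments adjacent to $b_k \ot b_{k+1}$ (namely $D$ containing $b_{k+1}$ in Item 1, and $D'$ containing $b_k$ in Item 2), and conclude ``fixed'' once that single segment is shown not to be singular. But ``fixed'' means the minimum is not the rightmost one touched by \emph{any} singular segment, so you must also rule out the other adjacent segment. In Item 2 this is the observation that $v_{k+1} = b_{k+1} - 1$ forces the segment of $b_{k+1}$ to be non-left-singular by Lemma \ref{lem:jun14_1} (note $v_k + b_{k+1} \geq 1$ with $v_k \leq -1$ gives $b_{k+1} \geq 2$, so the level-$0$ issue does not arise here). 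In Item 1 it is the observation that $b_k \leq v_k + 1 \leq -b_{k+1}$ forces $|b_k| \geq b_{k+1}$, so the segment of $b_k$ cannot be right singular at this junction. Both are one-line checks, but without them the ``hence fixed'' conclusions in your plan do not follow.
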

\begin{proof}
We first recall the part for type I move in
the proof of Lemma \ref{lem:may16_3}.
It does not use the fact that the local minimum
shifts to the right.
So it is also valid in the present case,
regardless of whether the local minimum shifts or not.
Next we consider the part for type II.
Although it uses Lemma \ref{lem:jun14_1} which are not applicable
to level $0$ segments, 
we have Item 4 of Proposition \ref{prop:july18_2} instead.
So it is also valid.
Finally, the part for type III is now irrelevant.
\end{proof}
\begin{example}\label{ex:july11_5}
Consider the transition from $T_3^3(p)$ to $T_3^4(p)$ 
in Example \ref{ex:july11_4}.
\begin{displaymath}
\batten{-1}{2)}{(0}{1}\!\!\!\!\!
\batten{}{2}{2)}{1}\!\!\!\!\!
\batten{}{2}{2}{1}\!\!\!\!\!
\batten{}{(-4}{0}{-3}\!
\batten{}{2)}{(-4}{3}\!\!\!\!\!
\batten{}{(-2}{4)}{-3}\!
\batten{}{0)}{(-4}{1}\!\!\!\!\!
\batten{}{(-2}{0)}{-1}
\end{displaymath}
Since $b_6'= 4 = v_5 +1$, we see that $b_4 \ot b_5 (= -4 \ot 2)$ is
a fixed minimum by Lemma \ref{prop:july11_3}.
In the same way $b_8'= 0 = v_7 -1$ means
$b_6 \ot b_7( = -2 \ot 0)$ is
a floating minimum, and
$b_8= -2 = v_8 -1$ means
$b_8 \ot b_1( = -2 \ot 2)$ is
a fixed minimum.
\end{example}
\begin{proposition}\label{prop:july13_7}
The assertions in Lemmas \ref{prop:apr12_1} and \ref{prop:may23_2}
also hold in the case of even $s$
(except the part for type III which is now irrelevant).
\end{proposition}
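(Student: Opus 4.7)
The plan is to adapt the proofs of Lemmas \ref{prop:apr12_1} and \ref{prop:may23_2} with minor modifications. Both of those proofs rest on the elementary conservation identity $v_{i-1} + b_i = b'_i + v_i$ at each vertex of the transition diagram (\ref{eq:jun14_2}), which holds for arbitrary $s$ because the combinatorial $R$ preserves the sum of its inputs. Telescoping this identity across the strip between $X$ and $Y$ (or between $X$ and the head of the maximal segment $D$) yields the master relation
\[
v_j + \wt_X(Y) + b_{k+1} = (b'_{j+1} + \cdots + b'_{k+1}) + v_{k+1},
\]
from which the tables are derived by substituting the combinatorial $R$ outputs (Lemma \ref{lem:july8_1}) at the endpoint vertices. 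For a type I move one has $b'_{j+1} = v_j + 1$ and $v_{k+1} = b_{k+1} - 1$; for a type II move, $b'_{j+1} = v_j - 1$ and $v_{k+1} = b_{k+1} + 1$.

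Plugging these into the master relation reproduces the four type I/II entries of Table \ref{tab:2} and the relevant entries of Table \ref{tab:3} exactly as in the odd case. Type III does not arise, because it would require $v_k = 0$, which is excluded for a local minimum by Definition \ref{def:july13_1} (where $v_k \leq -1$); thus the last row and column of each table are simply irrelevant, as the statement of the proposition already acknowledges.

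The only genuinely new wrinkle is that, by Corollary \ref{prop:july11_2}, a type I local minimum with $v_k = -1$ and $b_k = -2$ may remain at the same position rather than shifting right. Writing $s_X, s_Y \in \{0,1\}$ for whether $X, Y$ respectively shift by one site, the master relation rearranges to
\[
\Delta \wt_X(Y) = v_j + b_{k+1} - v_{k+1} - s_X\, b'_{j+1} - (1-s_Y)\, b'_{k+1}.
\]
The main technical step, and the one I expect to be the chief obstacle, is to check that in each no-shift subcase the equality $v_j = -1$ (or $v_k = -1$) exactly cancels the affected boundary term, so that the final value depends only on the move types and matches the odd-case table. Lemma \ref{prop:may23_2} is handled by the same argument with $Y$ replaced by $D$, using $v_{k+1} = b_{k+1} \mp 1$ when $D$'s head is $\pm s$. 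The passage to the second labeling scheme is free, since Lemma \ref{lem:jun13_1} is formulated without reference to the parity of $s$.
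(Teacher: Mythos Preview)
Your proposal is correct and follows essentially the same approach as the paper. The paper also isolates the non-shifting type~I subcase as the only new ingredient and verifies that the master relation $v_j + \wt_X(Y) + b_{k+1} = b'_{j+1} + \wt'_X(Y) + v_{k+1}$ still holds there, using Corollary~\ref{prop:july11_2} to supply $b'_{j+1}=0$ (equivalently $v_j=-1$); your indicator-variable packaging with $s_X,s_Y$ is just a cosmetic reformulation of the same computation, and your remark on the second labeling scheme, while harmless, properly belongs to the next proposition rather than this one.
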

\begin{proof}
{Regarding} Lemma \ref{prop:apr12_1}, {we find}
it suffices to show that the claim is valid in the case where
$X$ or $Y$ (or both of them) is of type I and does not
shift to the right, {because only this case is new}.
First we suppose $X$ is so but $Y$ is not.
Then by Corollary \ref{prop:july11_2}
the situation is given by (\ref{eq:may25_2})
except the $(v_j=)0$ is now replaced by $-1$.
Hence we obtain
$-1 + \wt_X(Y)  + b_{k+1} = \wt'_X(Y) + v_{k+1}$.
But this is equivalent to
\begin{equation}\label{eq:july13_8}
v_j + \wt_X(Y)  + b_{k+1} = b'_{j+1} + \wt'_X(Y) + v_{k+1}
\end{equation}
because we have $v_j=-1$ and $b'_{j+1}=0$.
This yields
$\Delta \wt_X(Y) = v_j -b'_{j+1} - (v_{k+1}-b_{k+1})$,
hence the desired result.
Next we suppose $Y$ is of that type but $X$ is not.
Then 
the situation is given by (\ref{eq:july13_6})
except the $(v_k=)0$ is replaced by $-1$.
Hence we obtain
$v_j + \wt_X(Y)  = b'_{j+1} + \wt'_X(Y)  + (-1)$.
But this is also equivalent to (\ref{eq:july13_8})
because we have $v_{k+1}-b_{k+1} = -1$ by Lemma \ref{lem:july8_1}
and Corollary \ref{prop:july11_2}.
The remaining case and Lemma \ref{prop:may23_2}
can be treated in the same way.
\end{proof}
\begin{proposition}\label{prop:july13_9}
The assertions in Propositions \ref{prop:may23_1} 
and \ref{rem:jul1_1} also
hold in the case of even $s$.
\end{proposition}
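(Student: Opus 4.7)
The plan is to mimic the proofs of Propositions \ref{prop:may23_1} and \ref{rem:jul1_1} given for odd $s$, substituting each ingredient by its even-$s$ counterpart that has just been established: Lemma \ref{prop:july11_3} replaces Lemma \ref{lem:may16_3}, and Proposition \ref{prop:july13_7} replaces Lemmas \ref{prop:apr12_1} and \ref{prop:may23_2}. Since type III moves no longer occur for even $s$, the combinatorics actually simplifies: only the $2\times 2$ type I / type II block of Table \ref{tab:2} survives.

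For the even-$s$ analogue of Proposition \ref{prop:may23_1}, I would start from the identity
\begin{displaymath}
\Delta \Wt_X(Y) = \Delta \wt_X(Y) + \varphi(b_m',b_{m+1}',b_k,b_{k+1}) - \varphi(b_l',b_{l+1}',b_j,b_{j+1}),
\end{displaymath}
which is purely formal and so carries over unchanged. By Lemma \ref{prop:july11_3}, if $Y$ is a fixed minimum then the $\varphi$-term associated to $Y$ equals $-1$ when $Y$ takes a type I move and $+1$ when it takes a type II move (exactly as in the odd case, since in that derivation Lemma \ref{lem:may16_3} was used only to read off the $\pm 1$ contribution); the same holds for $X$. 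Combined with the type I / type II block of $\Delta\wt_X(Y)$ furnished by Proposition \ref{prop:july13_7}, namely
\begin{displaymath}
\bigl(\Delta\wt_X(Y)\bigr)_{\mathrm{I,II}\times\mathrm{I,II}} = \begin{pmatrix} 0 & -2 \\ +2 & 0 \end{pmatrix},
\end{displaymath}
the two contributions cancel exactly, giving $\Delta\Wt_X(Y)=0$.

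For the even-$s$ analogue of Proposition \ref{rem:jul1_1}, I need the corresponding version of Lemma \ref{lem:jun13_1}: for any pair of local minima $X,Y$ touched by a common singular segment of level $a$, one has $\wt_X(Y) = \Wt_X(Y) = 0$. The original proof applies verbatim: if $b_j\otimes b_{j+1}\,(X)$ and $b_k\otimes b_{k+1}\,(Y)$ lie in a common segment then $\wt_X(Y) = 0$, and if that segment is singular of level $a$ then $\min(-b_j,b_{j+1}) = \min(-b_k,b_{k+1}) = a$, so $\varphi(b_k,b_{k+1},b_j,b_{j+1})=0$. Note that $a=0$ is now admissible, thanks to Item 4 of Proposition \ref{prop:july18_2}, but the formula $\min(0,0)=0$ makes this case unproblematic. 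With this in hand, passing from the first to the second labeling scheme only modifies the modified weights of local minima inside a common singular segment by amounts whose pairwise differences vanish, so the conservation statement survives.

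The main obstacle, as already anticipated in Proposition \ref{prop:july13_7}, is the new phenomenon for even $s$ of type I moves in which the local minimum fails to shift one site to the right (Corollary \ref{prop:july11_2}). The $\varphi$-values for fixed minima must be recomputed for this exceptional configuration, and one must check that the identity $\Delta\wt_X(Y) = v_j - b'_{j+1} - (v_{k+1}-b_{k+1})$ still holds, now with $v_j=-1$ and $b'_{j+1}=0$ rather than $v_j=b'_{j+1}\mp 1$. Since Proposition \ref{prop:july13_7} already performs precisely this bookkeeping, the $\pm 1$ outputs of $\varphi$ for fixed minima remain correct and the cancellation argument goes through unaltered. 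Once both analogues are in place, the conclusion of Proposition \ref{prop:july13_9} follows.
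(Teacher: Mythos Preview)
Your approach is essentially the same as the paper's, and the overall structure is sound. There is, however, one imprecision in the final paragraph that you should fix. You write that Proposition \ref{prop:july13_7} ``already performs precisely this bookkeeping'' and therefore ``the $\pm 1$ outputs of $\varphi$ for fixed minima remain correct.'' But Proposition \ref{prop:july13_7} only controls $\Delta\wt_X(Y)$; it says nothing about the $\varphi$-term. The value of $\varphi(b_m',b_{m+1}',b_k,b_{k+1})$ in the non-shifting type I case must be checked separately, and this is exactly the one computation the paper singles out: by Corollary \ref{prop:july11_2} one has $b_k=-2$, $b_{k+1}\ge 2$, $b'_{k+1}=0$, and since $b'_k\otimes b'_{k+1}$ is a local minimum $b'_k\le 0$, whence
\[
\varphi(b'_k,b'_{k+1},b_k,b_{k+1})=\tfrac12\bigl(\min(-b'_k,0)-\min(2,b_{k+1})\bigr)=\tfrac12(0-2)=-1,
\]
matching the value for a shifting type I move. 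Once you insert this one-line check (invoking Corollary \ref{prop:july11_2} rather than Proposition \ref{prop:july13_7}), your argument is complete and coincides with the paper's.
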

\begin{proof}
Let $\varphi (b_m',b_{m+1}',b_k,b_{k+1})$ be the same one
in the proof of Proposition \ref{prop:may23_1},
and suppose $Y$ takes type I move and does not
shift to the right.
Then we have 
$\varphi (b_m',b_{m+1}',b_k,b_{k+1}) = 
\frac12 (\min(-b_k',b_{k+1}')- \min(-b_k,b_{k+1})) = -1$
by Corollary \ref{prop:july11_2}.
Thus the proof is also valid in the present case.
Proposition \ref{rem:jul1_1} is based on
Lemma \ref{lem:jun13_1} that holds regardless of whether
$s$ is odd or even, hence it is valid too.
\end{proof}

Now we confirmed the validity of the arguments in
subsections \ref{subsec:3_1} - \ref{subsec:3_6}
in the case of even $s$.
Since the arguments in
subsections \ref{subsec:3_7} and \ref{subsec:3_8}
are valid in both cases,
the main result of this paper
(Theorem \ref{th:may21_1}) is now established
for all $s$.

\begin{example}\label{ex:july18_5}
Consider
$p=(2211)(2211)(2221)(2211)(2211)(2221)(2211)(1111)$,
that is the same path in Examples \ref{ex:july10_1} and \ref{ex:july11_4}.
In Example \ref{ex:july11_4} 
the maximal segment overlaps with
the lowest fixed minimum to make a spike in $T_3^6(p)$.
Thus it becomes a ballot sequence via cyclic shifts.
Indeed we have
\begin{displaymath}
T_4^7 T_3^6(p) = 
(1111)(2211)(2111)(2211)(2221)(2211)(2211)(2222).
\end{displaymath}

\end{example}

\section{Discussions}\label{sec:5}
In this paper we introduced a family of integrable periodic cellular automata (CA) associated with 
the crystal basis of $s$-fold symmetric tensor representation of $U_q(A^{(1)}_1)$ for any positive integer $s$.
This family of CA has the following physical meaning.
In the case of $s=1$ this CA is known to be an ultra-discrete limit of 
discrete KdV and discrete Toda equations with periodic boundary conditions.
We consider $s > 1$ case.
In the infinite system size limit this CA falls into a special case of the non-periodic system in ref.~\citen{HHIKTT}.
It is an ultra-discrete limit of non-autonomous discrete KP (ndKP) equation.
One can consider certain tau functions satisfying ndKP or a set of 
Hirota bilinear equations.
The ultra-discrete limit of
``cross ratios" of those tau functions coincide with elements of crystals
in the sense that they satisfy the intertwining relation under the combinatorial $R$ \cite{HHIKTT}.
Since this correspondence is based on spatially local properties it remains valid also in the periodic boundary case.

A remark on the conservations of the kink-like structures
is in order.
These structures are conserved under the time evolution $T_{s-1}$.
See, for example, Corollary \ref{cor:july10_2}.
Besides the cyclic shift $T_s$
one can also show that they are conserved under $T_{s+1}$.
However they are not conserved under $T_i \, (i \ne s,s\pm1)$ in general.
Since the known conserved quantities of generalized pBBS do not change their values under any $T_i$ \cite{KS},
it shows that the kink-like structures are independent of
those conserved quantities.

Finally, as a possible application we point out that our CA will be used in informatics
if one can find an appropriate interpretation of the sequences.
In this direction there is a study on the relation between 
ultra-discrete soliton equations and sorting algorithms \cite{NTS}.
Given a sequence of real numbers a sorting algorithm sorts the numbers in decreasing order.
We think that in some situation our ballot sequences can play a similar role as sequences in such order, and
then the algorithm of the quasi-cyclic shift will be an efficient sorting algorithm.
\section*{Acknowledgements}
The author thanks Atsuo Kuniba and Reiho Sakamoto for 
valuable discussions.

\appendix
\section{Proof of Proposition \ref{prop:july19_1}}\label{app:a}

\begin{proof}[Proof of Proposition \ref{prop:july19_1}]
Given $p = b_1 \ot \cdots \ot b_L \in \mathcal{P}_{L,2}$
with $\wt (p) = 0$ we let
$T_1(p) = b'_1 \ot \cdots \ot b'_L$.
There are four types of local minima:
The local minimum $-2 \ot 2$ (spike) is always fixed and 
$0 \ot 2$ is always floating.
The other types ($-2 \ot 0$ or $0 \ot 0$)
are either fixed or floating.
We claim that:
\begin{enumerate}
\item A global minimum of the form $-2 \ot 2$
does not coexist with that
of the form $-2 \ot 0, 0 \ot 0$ or $0 \ot 2$.
\item A fixed minimum of the form $-2 \ot 0$ or $0 \ot 0$ in $p$
moves to a fixed minimum of the form $-2 \ot 2$ in $T_1(p)$, 
and vice versa.
\end{enumerate}
In Items 2, the second labeling scheme is assumed.

Item 1 is proved as follows.
Let $X$ be a local minimum of the form $-2 \ot 2$,
and $Y$ that of the form $-2 \ot 0, 0 \ot 0$ or $0 \ot 2$.
Then (\ref{eq:jun21_4}) requires $\Wt_X(Y)$ to be an odd integer,
hence $\ne 0$.
Thus if $X$ is a global minimum then $Y$ is not, and vice versa.

Item 2 is proved as follows.
Let $b_k \ot b_{k+1} \, (X)$ be a fixed minimum in $p$.
(1) If $X$ takes the form
$-2 \ot 0$ or $0 \ot 0$, then
$b_{k+2} \ne -2$.
This leads to $b'_{k+1} \ot b'_{k+2} = -2 \ot 2$,
that is $X$ in $T_1(p)$.
(2) If $X$ takes the form
$-2 \ot 2$ then $b'_{k+1} =0$.
In the first labeling scheme this requires $X$ to take 
one of the forms
$0 \ot 2,-2 \ot 0$, or $0 \ot 0$ in $T_1(p)$.
The possibility of $0 \ot 2$ is excluded in the second scheme,
because $X$ is a fixed minimum.

By these claims we can prove the proposition as follows.
Since $p$ is of zero total weight, it has a fixed minimum
(Lemma \ref{lem:jun21_1}).
(1) Suppose a lowest fixed minimum takes the form $-2 \ot 2$.
Since it is a global minimum (Lemma \ref{prop:may18_3}),
the path $p$ is equivalent to a ballot sequence
(Proposition \ref{prop:may25_5}).
In this case we find from Item 1 that
every global minimum in $p$ takes the spike form.
Then by Item 2 none of the global minima in $T_1(p)$
takes this from.
Hence $T_1(p)$ is not equivalent to a ballot sequence
(Proposition \ref{prop:may25_5}).
(2) Suppose a lowest fixed minimum (also a global minimum)
takes the form $-2 \ot 0$ or $0 \ot 0$.
Then by Item 1 none of the global minima in $p$
takes the spike from.
Hence $p$ is not equivalent to a ballot sequence
(Proposition \ref{prop:may25_5}).
In this case we find from Item 2 that
a lowest fixed minimum in $T_1(p)$ takes the spike form.
Hence $T_1(p)$ is equivalent to a ballot sequence
(Proposition \ref{prop:may25_5}).
\end{proof}

\end{document}